\theoremstyle{plain}
\newtheorem{lemma}{Lemma}
\newtheorem{theorem}{Theorem}
\newtheorem{proposition}{Proposition}
\newtheorem{corollary}{Corollary}
\theoremstyle{definition}
\newtheorem{definition}{Definition}
\newtheorem{construction}{Construction}
\newtheorem{problem}{Problem}
\theoremstyle{remark}
\newtheorem{example}{Example}
\newtheorem{remark}{Remark}
\tikzset{>=stealth}
\tikzset{>=stealth}
\tikzstyle{vertex} = [circle, minimum size = 1.5mm, inner sep = 0mm, draw={black}, fill]
\tikzstyle{hyperedge} = [rectangle, minimum width = 5mm, minimum height = 5mm, draw, inner sep = 0mm]
\tikzstyle{hyperedgewide} = [rectangle, minimum width = 8mm, minimum height = 5mm, draw, inner sep = 0mm]
\tikzstyle{HG} = [align = center]
\tikzstyle{circledge} = [circle, minimum size = 7mm, inner sep = 0mm, color=black, draw]
\tikzstyle{HG} = [align = center]
\tikzstyle{circledge} = [circle, minimum size = 7mm, inner sep = 0mm, color=black, draw]
\newcommand{\eqdef}{\mathrel{\mathop:}=}
\newcommand{\type}{\mathit{type}}
\newcommand{\rem}{\mathit{rem}}
\newcommand{\fr}{\mathit{fr}}
\newcommand{\Colors}{\mathfrak{X}}
\newcommand{\VC}{\mathrm{VC}}
\newcommand{\EC}{\mathrm{HE}}
\newcommand{\SigmaCol}{\mathfrak{S}}
\newcommand{\FCol}{\mathfrak{F}}
\newcommand{\Inc}{\mathrm{Inc}}
\newcommand{\Cond}{\mathrm{Cond}}
\newcommand{\Mem}{\mathrm{Mem}}
\newcommand{\Balance}{\mathit{Balance}}
\newcommand{\Nat}{\mathbb{N}}
\newcommand{\sqz}{\mathrm{sqz}}
\newcommand{\Dsg}{\mathrm{Dsg}}
\newcommand{\dsg}{\mathrm{dsg}}
\newcommand{\Zex}{Z^{\mathit{ex}}}
\newcommand{\FGex}{FG^{\mathit{ex}}}
\newcommand{\FGPex}{Gr^{\mathit{ex}}}
\newcommand{\Pex}{P^{\mathit{ex}}}
\newcommand{\Hex}{H^{\mathit{ex}}}
\newcommand{\Yex}{Y^{\mathit{ex}}}
\newcommand{\Cex}{C^{\mathit{ex}}}
\newcommand{\colex}{\mathit{col}^{\mathit{ex}}}
\definecolor{comment}{RGB}{0,0,0}
\definecolor{colorA}{RGB}{96,96,96}
\definecolor{colorB}{RGB}{208,208,208}
\definecolor{colorC}{RGB}{179,179,179}
\begin{document}

\begin{frontmatter}

\title{On Decidability and Expressive Power of Fusion Grammars}

\author[1]{Tikhon Pshenitsyn}
\ead{tpshenitsyn@mi-ras.ru}

\affiliation[1]{
	organization={Steklov Mathematical Institute of Russian Academy of Sciences},
	addressline={8 Gubkina Street},
	postcode={119991},
	city={Moscow},
	country={Russia}
}

\begin{abstract}
	We study algorithmic complexity and expressive power of fusion grammars, a novel formalism introduced in [Kreowski, Kuske, and Lye 2017], which extends hyperedge replacement grammars. In the first part of the work, we prove that the non-emptiness problem for fusion grammars and the membership problem for fusion grammars without markers and connectors are decidable and are in NEXPTIME. We introduce fusion grammars with bounded usage of markers and connectors and prove decidability of the membership problem for them as well. In the proofs, we develop the technique of hypergraph vertex colourings encoded in hyperedge labels and also the technique of evidence paths and their encodings. 
	\\
	In the second part of the work, we study the class of languages generated by connection-preserving fusion grammars. Namely, we prove Parikh's theorem for them, i.e.~we show that these languages are semilinear. 
\end{abstract}


\begin{keyword}
	fusion grammar \sep graph grammar \sep non-emptiness problem \sep decidability \sep Parikh's theorem
\end{keyword}

\end{frontmatter}

\section{Introduction}\label{sec_introduction}

Fusion grammars are an approach to generating hypergraph languages introduced in \cite{KreowskiKL17}. They are inspired by various fusion processes occurring in nature and abstract science. The most important motivating example is fusion of DNA double strands according to the Watson-Crick complementarity; DNA fusion is a key operation used in DNA computing and, in particular, in Adleman's experiment along with certain filtering and screening procedures \cite{PaunRS98}. 

Fusion grammars were introduced quite recently and they were studied in \cite{KreowskiKL17,Lye21}. There are several extensions of fusion grammars, e.g.~splicing/fusion grammars \cite{KreowskiKL18} or context-sensitive fusion grammars \cite{Lye22}, however, we shall be concerned with the basic formalism. An extensive survey of fusion grammars is Aaron Lye's PhD thesis \cite{Lye21_thesis}. A fusion grammar is defined by a start hypergraph $Z$ which consists of several connected components $Z_1,\dotsc,Z_k$. Hypergraphs have labels on hyperedges; a label alphabet includes \emph{fusion labels}, and each fusion label $A$ is accompanied with the complementary one $\overline{A}$. Hyperedges labeled with complementary labels can be fused, which means that their corresponding attachment vertices are identified and the hyperedges are removed. See an example below (fused hyperedges are coloured grey).
$$
\vcenter{\hbox{{\tikz[baseline=.1ex]{
				\node[vertex] (V1) at (0,0) {};
				\node[vertex, colorC] (V2) at (0.8,0) {};
				\node[vertex, colorC] (V3) at (1.6,0) {};
				\draw[-latex, thick] (V1) -- node[above] {$b$} (V2);
				\draw[-latex, thick, colorC] (V2) -- node[above] {$A$} (V3);
				\node[vertex, colorC] (2V1) at (2.4,0) {};
				\node[vertex, colorC] (2V2) at (3.2,0) {};
				\node[vertex] (2V3) at (4,0) {};
				\draw[-latex, thick] (V3) -- node[above] {$c$} (2V1);
				\draw[latex-, thick, colorC] (2V1) -- node[above] {$\overline{A}$} (2V2);
				\draw[latex-, thick] (2V2) -- node[above] {$d$} (2V3);
				\node at (4.75,0) {$\Rightarrow$};
				\node[vertex] (3V1) at (5.5,0) {};
				\node[vertex] (3V2) at (5.5+0.8,0) {};
				\node[vertex] (3V3) at (5.5+1.6,0) {};
				\node[vertex] (3V4) at (5.5+2.2,0) {};
				\draw[-latex, thick] (3V1) -- node[above] {$b$} (3V2);
				\draw[latex-, thick] (3V2) -- node[above] {$d$} (3V3);
				\draw[latex-, thick] (3V4) to[out=30,in=-30,looseness=20] node[right] {$c$} (3V4);
}}}}
$$
In a fusion grammar, one can make arbitrarily many copies of hypergraphs $Z_i$ and apply fusions to them; then, a filtering procedure is applied which involves marker labels. Another feature of fusion grammars is connectors, which are added to be able to generate disconnected hypergraphs (see \cite{Lye21,Lye21_thesis} for details).

In this paper, we study decision problems for fusion grammars and the class of languages generated by them. We start with investigating decidability of the membership problem and of the non-emptiness problem for them. The current state of these problems is as follows. It is proved in \cite{Lye21_thesis} that the membership problem is decidable for \emph{substantial} and \emph{connection-preserving} fusion grammars. In \cite{Lye21}, it is proved that the non-emptiness problem is decidable for \emph{connection-preserving} fusion grammars.\footnote{The Internet has a record of the paper under the title ``Decidability and Complexity of the Membership and Emptiness Problem of Fusion Grammars'' by Aaron Lye, which was a conference paper at GCM 2018. This paper announced that the non-emptiness problem for all fusion grammars is decidable and lies in NP. However, the proof contained a substantial error, namely, it did not take disconnections into account, so, apparently, the article was revoked. The paper \cite{Lye21} fixes this error and proves decidability of the non-emptiness problem only for connection-preserving grammars.} The question of whether any of these problems is decidable or not for all fusion grammars remains open according to \cite{Lye21_thesis}. In this work, we address these open questions and establish the following results.
\begin{enumerate}
	\item The membership problem is decidable for fusion grammars without markers and connectors (Sections \ref{ssec_results_mem1}, \ref{ssec_results_mem2});
	\item The non-emptiness problem is decidable for fusion grammars (Section \ref{ssec_results_ne});
	\item Both of these problems are in NEXPTIME (Section \ref{ssec_NEXPTIME});
	\item The membership problem remains decidable if a bounded number of markers and connectors can be used (Section \ref{ssec_results_bounded}). 
\end{enumerate}

Thus we give a complete answer to the open question concerning the non-emptiness problem and a partial answer to the one concerning the membership problem. Our results generalise existing ones significantly. Decidability of the membership problem in general remains open due to subtle technical difficulties that arise when one deals with markers and connectors (see the discussion in Section \ref{ssec_results_bounded}). However, allowing only a bounded number of markers and connectors is apparently enough from the point of view of applications in DNA computing (see the discussion in Section \ref{ssec_results_bounded} too). It should also be noted that fusion grammar without markers and connectors itself is a very natural formalism based purely on the fusion operation, and the proof of its decidability contains a number of insights into how the fusion operation can be analysed.

After algorithmic complexity questions, we touch on expressive power of fusion grammars. Namely, we prove a generalisation of Parikh's theorem for connection-preserving fusion grammars. This contributes to the question from \cite{Lye21} about generative power of connection-preserving fusion grammars. 

The paper is organised as follows. 
\begin{itemize}
	\item In Section \ref{sec_preliminaries}, we define hypergraphs and fusion grammars, and we discuss basic properties of the latter. In particular, we define the new notion of an \emph{evidence path}, which shall be used in the proofs.
	\item In Section \ref{sec_results}, we present and prove the main theorems concerning algorithmic complexity (Theorem \ref{th_main_membership}, Theorem \ref{th_main_non-emptiness}, Theorem \ref{th_main_NEXPTIME}); furthermore, we define the notion of \emph{bounded fusion grammars} and prove decidability of the membership problem for them (Theorem \ref{th_main_membership_bounded}). 
	\item In Section \ref{sec_Parikh}, we prove Parikh's theorem for connection-preserving fusion grammars (Theorem \ref{th_main_parikh}).
	\item In Section \ref{sec_discussion}, we conclude by discussing the fusion grammar formalism and mentioning open research directions.
\end{itemize}

\section{Preliminaries}\label{sec_preliminaries}

Our definitions and notation is a compilation of those from \cite{DrewesKH97,KreowskiKL17,Lye21_thesis}. Some definitions are presented in a slightly different manner for the sake of convenience. We decide to present all the definitions and all the notation meticulously, organised as an enumerated list, in order to avoid misinterpretations.

\subsection{Basic Notation}
\begin{enumerate}
	\item By $w(i)$ we denote the $i$-th symbol of $w \in \Sigma^\ast$. If $v \in \Nat^k$ is a vector, then, similarly, $v(i)$ denotes its $i$-th component.
	\item The number of occurrences of a symbol $a$ in a word $w$ is denoted by $\#_a(w)$. 
	\item If $R$ is a relation, then $R^\ast$ is its transitive reflexive closure.
	\item $S_1 \uplus S_2$ denotes the union of the sets $S_1$ and $S_2$ when $S_1$ and $S_2$ are disjoint, i.e.~$S_1 \uplus S_2$ equals $S_1 \cup S_2$ if $S_1 \cap S_2 = \emptyset$ and is undefined otherwise. In contrast, $S_1 \sqcup S_2 = \{(a,i) \mid i\in \{1,2\}, a \in S_i\}$ denotes the disjoint union of $S_1$ and $S_2$. When considering the disjoint union $S_1 \sqcup S_2$, we usually identify $\{(a,i) \mid a \in S_i\}$ with $S_i$ (for $i=1,2$).
	\item Each function $f:\Sigma \to \Delta$ is extended to the function $f:\Sigma^\ast \to \Delta^\ast$ as follows: $f(a_1\dotsc a_n) = f(a_1) \dotsc f(a_n)$. 
	\item The set $\{1,2,\dotsc,n\}$ is denoted by $[n]$; in particular, $[0] \eqdef \emptyset$.
	\item A set $A \subseteq \Nat^k$ is \emph{semilinear} if it is a finite union of linear sets, i.e.~sets of the form $\{v+k_1v_1+\ldots+k_mv_m \mid k_1,\ldots,k_m \in \Nat \}$ where $m \ge 0$ and $v,v_1,\ldots,v_m \in \Nat^k$ are some fixed vectors.
	\item A \emph{typed set} $M$ is the set $M$ along with a function $\type:M \to \Nat$. 
\end{enumerate}

\subsection{Hypergraphs}
\begin{definition}
	Given a typed set $\Sigma$ called the \emph{label alphabet}, a \emph{hypergraph} $H$ over $\Sigma$ is a tuple $( V_H, E_H, att_H, lab_H )$ where $V_H$ is a finite set of \emph{vertices}, $E_H$ is a finite set of \emph{hyperedges}, $att_H: E_H\to V_H^\ast$ is an \emph{attachment} (a function that assigns a word consisting of vertices to each hyperedge; these vertices are called \emph{attachment vertices}), and $lab_H: E_H \to \Sigma$ is a \emph{labeling} such that $\type(lab_H(e))=\vert att_H(e)\vert$ for any $e\in E_H$.
\end{definition}

\begin{enumerate}
	\item The set of all hypergraphs with labels from $\Sigma$ is denoted by $\mathcal{H}(\Sigma)$. 
	
	\item In drawings of hypergraphs, small circles represent vertices, labeled rectangles represent hyperedges, and $att_G$ is represented by numbered lines. If a hyperedge has exactly two attachment vertices, then it is depicted by a labeled arrow that goes from the first attachment vertex to the second one. See Example \ref{example_FG} for illustration.
	
	\item The function $\type_H: E_H \to \Nat$ (or usually just $\type$ if $H$ is clear from the context) computes the number of attachment vertices of a hyperedge: $\type_H(e) = \vert att_H(e) \vert$. The condition from the definition of a hypergraph can then be rewritten as follows: $\type(lab_H(e)) = \type_H(e)$.
	
	\item The \emph{size} $\vert H \vert$ of a hypergraph $H$ is the total number of vertices and hyperedges in it.
	
	\item The function $\#_a(H)$ returns the number of $a$-labeled hyperedges in $H$. If $A$ is a finite set of labels, then $\#_A(H) = \sum_{a \in A} \#_a(H)$.
	
	\item If $H$ is a hypergraph and $f:E \to \Delta$ is a function such that $E \supseteq E_H$ (where $\Delta$ is a label alphabet), then $f(H) = (V_H,E_H,att_H,f \restriction_{E_H})$ is the \emph{relabeling of $H$ using $f$}.
	
	\item A hypergraph $H$ is a \emph{subhypergraph} of a hypergraph $H^\prime$ (denoted by $H \subseteq H^\prime$) if $V_H \subseteq V_{H^\prime}$, $E_H \subseteq E_{H^\prime}$ and $att_H = att_{H^\prime} \restriction_{E_H}$, $lab_H = lab_{H^\prime} \restriction_{E_H}$.
	
	\item An \emph{isomorphism} of hypergraphs $H_1$ and $H_2$ is a pair of functions $(\varphi_V,\varphi_E)$ where $\varphi_V: V_{H_1} \to V_{H_2}$, $\varphi_E: E_{H_1} \to E_{H_2}$ are bijections such that $\varphi_V(att_{H_1}(e)) = att_{H_2}(\varphi_E(e))$, $lab_{H_1}(e) = lab_{H_2}(\varphi_E(e))$ for all $e \in E_{H_1}$. If there is an isomorphism between hypergraphs $H_1$ and $H_2$, then we denote this as $H_1 \cong H_2$ and say that they are isomorphic.
	
	In the theory of hypergraph grammars, isomorphic hypergraphs are usually not distinguished \cite{DrewesKH97} because one is interested only in the structure of a hypergraph. However, in reasonings about hypergraphs, it is convenient to refer to vertices and hyperedges of a hypergraph without considering equivalence classes. The distinction between equivalence classes of the isomorphism relation and their representatives can be done rigorously by introducing the notions of \emph{abstract hypergraph} and \emph{concrete hypergraph} (\emph{concrete hypergraph} is synonymous to the term \emph{hypergraph} as defined above, while \emph{abstract hypergraph} means an equivalence class of the isomorphism relation) and by consistently using them when speaking about hypergraphs. In this work, following tradition and aiming to avoid bureaucracy, we usually do not do this explicitly but assume that this is always clear from the context.
	
	\item The \emph{empty hypergraph} is the hypergraph with no vertices and no hyperedges. 
	
	\item Let $H \in \mathcal{H}(\Sigma)$. A sequence $(i_1,e_1,o_1),\dotsc,(i_n,e_n,o_n) \in (\Nat \times E_H \times \Nat)^\ast$ is a \emph{path between $v \in V_H$ and $v^\prime \in V_{H}$} if $att_H(e_1)(i_1)=v$, $att_H(e_n)(o_n) = v^\prime$, and $att_H(e_k)(o_k) = att_H(e_{k+1})(i_{k+1})$ for $k=1,\dotsc,n-1$. 
	
	\item A hypergraph is \emph{connected} if there is a path between any two vertices. 
	
	\item A \emph{connected component of a hypergraph $H$} is a non-empty connected hypergraph $C$ such that $C \subseteq H$ and such that $C \subseteq C^\prime \subseteq H$ for some connected hypergraph $C^\prime$ implies that $C^\prime = C$. The set of connected components of $H$ is denoted by $\mathcal{C}(H)$.
	
	\item The \emph{disjoint union} $H_1+H_2$ of hypergraphs $H_1$, $H_2$ is the hypergraph $( V_{H_1}\sqcup V_{H_2}, E_{H_1}\sqcup E_{H_2}, att, lab )$ such that $att\restriction_{H_i} = att_{H_i}$, $lab\restriction_{H_i} = lab_{H_i}$ ($i=1,2$). That is, we just put these hypergraphs together without fusing any vertices or hyperedges. Note that $+$ is associative and commutative (up to isomorphism). The multiple disjoint union $H_1 + \dotsc + H_k$ is shortly denoted as $\sum_{i=1}^k H_i$, and $n \cdot H$ is a shorthand for $\underbrace{H+\ldots+H}_{n~\text{times}}$.
	
	\item Let $m: \mathcal{C}(H) \to \Nat$ be a mapping called \emph{multiplicity}. Then $m \cdot H = \sum\limits_{C \in \mathcal{C}(H)} m(C) \cdot C$ is the \emph{multiplication} of $H$.
	
	\item Let $H$ be a hypergraph and let $E \subseteq E_{H}$. Then $H - E$ is the hypergraph \\
	$(V_{H},E_{H} \setminus E, att_H\restriction_{E_{H} \setminus E},lab_H \restriction_{E_{H} \setminus E})$; i.e.~we remove all hyperedges that belong to $E$ from $E_H$.
	
	\item Let $R \subseteq \Sigma$ be the set of labels we want to remove. Then, given $H \in \mathcal{H}(\Sigma)$, the hypergraph $\rem_R(H)$ is obtained from $H$ by removing all hyperedges with labels from $R$. Formally, if $E = \{e \in E_H \mid lab_H(e) \in R\}$, then $\rem_R(H) = H - E$.
	
	\item Let $H$ be a hypergraph and let $\equiv$ be an equivalence relation on $V_H$. Then $H/{\equiv}$ is the hypergraph with the set $\{[v]_\equiv \mid v \in V_H\}$ of vertices, the set $E_H$ of hyperedges, the same labeling $lab_H$ and with the new attachment defined as follows: $att_{H/\equiv}(e) = [att_{H}(e)(1)]_\equiv[att_{H}(e)(2)]_\equiv\ldots [att_{H}(e)(\type_H(e))]_\equiv$.
\end{enumerate}

\begin{remark}\label{remark_isomorphism_isolated}
	If hypergraphs $H_1,H_2$ do not contain isolated vertices, then they are isomorphic if and only if there exists a bijection $\varphi: E_{H_1} \to E_{H_2}$ such that
	\begin{enumerate}
		\item $att_{H_1}(e_1)(i_1) = att_{H_1}(e_2)(i_2)$ if and only if $att_{H_2}(\varphi(e_1))(i_1) = att_{H_2}(\varphi(e_2))(i_2)$;
		\item $lab_{H_1}(e) = lab_{H_2}(\varphi(e))$.
	\end{enumerate}
	Indeed, given such a function $\varphi_E = \varphi$, one can define $\varphi_V: V_{H_1} \to V_{H_2}$ as follows: 
	$$
	\varphi_V(v) = att_{H_2}(\varphi(e))(i) \mbox{ whenever } v = att_{H_1}(e)(i).
	$$
	This definition is sound i.e.~it does not depend on $e$ and $i$ such that $v = att_{H_1}(e)(i)$; $\varphi_V(v)$ is defined for each $v \in V_{H_1}$ (since $H_1$ does not have isolated nodes); and $\varphi_V$ is an injection. Moreover, it is surjective because there are no isolated vertices in $H_2$, and each hyperedge in $E_{H_2}$ is an image of some hyperedge in $E_{H_1}$. Thus $(\varphi_V,\varphi_E)$ satisfies the definition of the isomorphism. In this case, we say that $\varphi$ \emph{induces an isomorphism} of $H_1$ and $H_2$.
\end{remark}

\subsection{Fusion Grammars}\label{ssec_fusion_grammars}
\begin{definition}
	Let $\Sigma$ be a finite label alphabet. Let $\Sigma = T \uplus F \uplus \overline{F} \uplus \mathcal{M} \uplus \mathcal{K}$ where
	\begin{enumerate}
		\item $T$ is an alphabet of \emph{terminal labels};
		\item $F$ is a \emph{fusion alphabet};
		\item each label $A \in F$ is accompanied by a \emph{complementary label} $\overline{A} \in \overline{F}$; it holds that $\type\left(A\right) = \type\left(\overline{A}\right)$ for all $A \in F$. The set $\overline{F}$ equals $\left\{\overline{A} \mid A \in F \right\}$;
		
		(If $B = \overline{A} \in \overline{F}$, then let $\overline{B} = A$; thus the operation $\overline{\,\cdot\,}$ becomes an involution.)
		\item $\mathcal{M}$ is the set of \emph{markers};
		\item $\mathcal{K}$ is the set of \emph{connectors}.
	\end{enumerate}
	A \emph{fusion grammar} is a tuple $FG = (Z,F,T,\mathcal{M},\mathcal{K})$ where $Z \in \mathcal{H}(\Sigma)$ is a \emph{start hypergraph}.
	\\
	A \emph{fusion grammar without markers and connectors} is a triple $FG = (Z,F,T)$ where $Z \in \mathcal{H}\left(T \cup F \cup \overline{F}\right)$ is a start hypergraph.
\end{definition}
\begin{enumerate}
	\item Let $H$ be a hypergraph with two hyperedges $e,\overline{e} \in E_H$ such that $lab_H(e) = A \in F$ and $lab_H\left(\overline{e}\right) = \overline{A}$. Let $X = H - \{e,\overline{e}\}$ be obtained from $H$ by removing these two hyperedges. Let $H^\prime = X/{\equiv}$ where $\equiv$ is the smallest equivalence relation on $V_H$ such that $att_H(e)(i) \equiv att_H(\overline{e})(i)$ for all $i =1, \dotsc, \type(A)$. Then we say that $H^\prime$ is obtained from $H$ by a \emph{fusion rule} and denote this as $H \underset{\fr}{\Rightarrow} H^\prime$.
	
	\item\label{item_parallelised_fr} Let $H$ be a hypergraph; let $P$ be a set such that each its element is of the form $\{e,\overline{e}\}$ with $e,\overline{e} \in E_H$ and $lab_H(e) = A$, $lab_H(\overline{e}) = \overline{A}$ for some $A \in F$. Let us require that $p_1,p_2 \in P$ implies that either $p_1=p_2$ or $p_1\cap p_2 = \emptyset$. Informally, $P$ is the set of hyperedge pairs we want to fuse. We define the equivalence relation $\equiv_P$ as the smallest one such that $att_H(e)(i) \equiv_P att_H(\overline{e})(i)$ for all $\{e,\overline{e}\} \in P$ and for all $i = 1,\dotsc,\type(e)$.
	
	Let $X = H- \bigcup\limits_{p \in P} p$ be obtained from $H$ by removing all the hyperedges that are contained in pairs from $P$. Let $H^\prime = X/{\equiv_P}$. Then we say that $H^\prime$ is obtained from $H$ by an application of the \emph{parallelised fusion rule} $\fr(P)$ and denote this by $H \underset{\fr(P)}{\Rightarrow} H^\prime$. 
	\\
	By the parallelisation and the sequentialisation properties \cite{KreowskiKL17,Lye21_thesis}, a parallelised fusion rule can be decomposed into a sequence of fusion rules and vice versa.
	
	\item A \emph{derivation step} $H \Rightarrow H^\prime$ is either a fusion rule application $H \underset{\fr}{\Rightarrow} H^\prime$ for some $A \in F$ or a multiplication $H \Rightarrow m \cdot H$ for some multiplicity $m$.
	
	\item The language $L(FG)$ generated by a fusion grammar $FG = (Z,F,T,\mathcal{M},\mathcal{K})$ contains a hypergraph $X$ if and only if it equals (up to isomorphism) a hypergraph of the form $\rem_{\mathcal{M} \cup \mathcal{K}}(Y)$ where
	\begin{itemize}
		\item $Y \in \mathcal{C}(H) \cap \mathcal{H}(T \cup \mathcal{M} \cup \mathcal{K})$ for some $H$ such that $Z \Rightarrow^\ast H$;
		\item $\#_{\mathcal{M}}(Y) \ge 1$.
	\end{itemize}
	That is, if one obtains $H$ from $Z$ by means of multiplications and fusion rules and if there is a connected component $Y$ of $H$ that contains a marker-labeled hyperedge, then the result of removing markers and connectors from $Y$ (up to isomorphism) belongs to $L(FG)$.
	
	\item The language $L(FG)$ generated by a fusion grammar without markers and connectors $FG = (Z,F,T)$ equals
	$
	L(FG) = \left\{X \mid \exists Y . \exists H . \left(Z \Rightarrow^\ast H \mbox{ and } Y \in \mathcal{C}(H) \cap \mathcal{H}(T) \mbox{, and } X \cong Y\right) \right\}.
	$
	
	In other words, if a hypergraph $H$ is derivable from $Z$ in $FG$ and if $Y$ is one of its connected components with terminal labels only, then any hypergraph $X$ isomorphic to $Y$ belongs to $L(FG)$.
	
	\item The fusion grammar $FG = (Z,F,T,\mathcal{M},\mathcal{K})$ (or the fusion grammar without markers and connectors $FG = (Z,F,T)$) is \emph{connection-preserving} if the following holds: \\
	\emph{Let $Z \Rightarrow^\ast H + C$ where either $C$ is connected and $e_1,e_2 \in E_C$ or $C = C_1+C_2$ where $C_1,C_2$ are connected and $e_i \in E_{C_i}$ for $i=1,2$. Let $C \underset{\fr(P)}{\Rightarrow} D$ be a fusion rule application with $P$ consisting of one element $\{e_1,e_2\}$. Then, $D$ is connected.}
	
	Informally, this definition just says that no disconnection can happen in a derivation. If $C$ is a connected hypergraph obtained during a derivation, then fusing any pair of hyperedges in it results in a connected hypergraph; likewise, if the hypergraph $C$ obtained in a derivation has two connected components and we fuse hyperedges that belong to these two components, then the resulting hypergraph is connected as well.
\end{enumerate}

\begin{example}\label{example_FG}
	Consider the hypergraph $\Zex = \sum_{i=0}^6 \Zex_i$ where $\Zex_i$ are depicted below.
	$$
	\Zex_0 = \;
	\vcenter{\hbox{{\tikz[baseline=.1ex]{
					\node[vertex] (VO) at (0,0) {};
					\node[vertex] (VL) at (-0.6,-0.6) {};
					\node[vertex] (VR) at (0.6,-0.6) {};
					\draw[-latex, thick] (VL) -- node[below] {$a$} (VR);
					\draw[-latex, thick] (VL) -- node[above left] {$A$} (VO);
					\draw[-latex, thick] (VR) -- node[above right] {$C$} (VO);
	}}}}
	\qquad\qquad
	\Zex_1 = \;
	\vcenter{\hbox{{\tikz[baseline=.1ex]{
					\node[vertex] (V1) at (0,0) {};
					\node[vertex] (V2) at (0,0.75) {};
					\node[vertex] (V3) at (0,1.5) {};
					\draw[-latex, thick] (V1) -- node[right] {$\overline{A}$} (V2);
					\draw[-latex, thick] (V2) -- node[right] {$B$} (V3);
	}}}}
	\qquad\qquad
	\Zex_2 = \;
	\vcenter{\hbox{{\tikz[baseline=.1ex]{
					\node[vertex] (V1) at (0,0) {};
					\node[vertex] (V2) at (0,0.75) {};
					\node[vertex] (V3) at (0,1.5) {};
					\draw[-latex, thick] (V1) -- node[right] {$\overline{C}$} (V2);
					\draw[-latex, thick] (V2) -- node[right] {$B$} (V3);
	}}}}
	\qquad\qquad
	\Zex_3 = \;
	\vcenter{\hbox{{\tikz[baseline=.1ex]{
					\node[vertex] (V1) at (0,0) {};
					\node[vertex] (V2) at (0,0.75) {};
					\draw[-latex, thick] (V1) to[bend left=45] node[left] {$\overline{A}$} (V2);
					\draw[-latex, thick] (V2) to[bend left=45] node[right] {$B$} (V1);
	}}}}
	$$
	$$
	\Zex_4 = \;
	\vcenter{\hbox{{\tikz[baseline=.1ex]{
					\node[vertex] (V1) at (0,0) {};
					\node[vertex] (V2) at (0,0.75) {};
					\draw[latex-, thick] (V1) to[bend left=45] node[left] {$\overline{B}$} (V2);
					\draw[latex-, thick] (V2) to[bend left=45] node[right] {$\overline{C}$} (V1);
	}}}}
	\qquad\qquad
	\Zex_5 = \;
	\vcenter{\hbox{{\tikz[baseline=.1ex]{
					\node[vertex] (VO) at (0,0) {};
					\node[hyperedge] (EL) at (-0.8,0) {$c$};
					\node[vertex] (VL) at (-0.8,-0.8) {};
					\node[vertex] (VL1) at (-1.6,0) {};
					\node[vertex] (VL2) at (-1.6,-0.8) {};
					\node[vertex] (VL3) at (-2.4,0) {};
					\draw[-latex, thick] (VL1) -- node[left] {$b$} (VL2);
					\draw[-latex, thick] (VL1) -- node[above] {$b$} (VL3);
					\draw[-latex, thick] (VL) -- node[below right] {$\overline{A}$} (VO);
					\draw[-] (EL) -- node[left] {\scriptsize 1} (VL);
					\draw[-] (EL) -- node[above] {\scriptsize 2} (VO);
					\draw[-] (EL) -- node[above] {\scriptsize 3} (VL1);
	}}}}
	\qquad\qquad
	\Zex_6 = \;
	\vcenter{\hbox{{\tikz[baseline=.1ex]{
					\node[vertex] (VO) at (0,0) {};
					\node[hyperedge] (ER) at (0.8,0) {$c$};
					\node[vertex] (VR) at (0.8,-0.8) {};
					\node[vertex] (VR1) at (1.5,0) {};
					\draw[-latex, thick] (VR1) to[out=30,in=-30,looseness=30] node[right] {$a$} (VR1);
					\draw[-latex, thick] (VR) -- node[below left] {$\overline{C}$} (VO);
					\draw[-] (ER) -- node[right] {\scriptsize 1} (VR);
					\draw[-] (ER) -- node[above] {\scriptsize 2} (VO);
					\draw[-] (ER) -- node[above] {\scriptsize 3} (VR1);		
	}}}}
	$$
	Then $\FGex = (\Zex,\{A,B,C\},\{a,b,c\})$ is a fusion grammar without markers and connectors. Let us also define $\FGPex = (\Zex,\{A,B,C\},\{a\},\{b\},\{c\})$ by making $b$ a marker and $c$ a connector. These two grammars are our running examples.
	
	Note that, for each $i=0,\dotsc,6$, there are no two hyperedges in the hypergraph $\Zex_i$ with the same label; therefore, a hyperedge of $\Zex_i$ is uniquely determined by its label. Let us refer to the $x$-labeled hyperedge of $\Zex_i$ as $e^x_i$.
\end{example}

\begin{example}\label{example_FG_der_1}
	Consider a derivation in $\FGex$:
	\begin{equation}\label{eq_FG_der_1}
		\Zex
		\;\;\Rightarrow\;\;
		\vcenter{\hbox{{\tikz[baseline=.1ex]{
						\node[vertex] (VO) at (0,0) {};
						\node[vertex] (VL) at (-0.6,-0.6) {};
						\node[vertex] (VR) at (0.6,-0.6) {};
						\draw[-latex, thick] (VL) -- node[below] {$a$} (VR);
						\draw[-latex, thick] (VL) -- node[above left] {$A$} (VO);
						\draw[-latex, thick] (VR) -- node[above right] {$C$} (VO);
		}}}}
		\,+\,
		\vcenter{\hbox{{\tikz[baseline=.1ex]{
						\node[vertex] (V1) at (0,0) {};
						\node[vertex] (V2) at (0,0.8) {};
						\draw[-latex, thick] (V1) to[bend left=30] node[left] {$\overline{A}$} (V2);
						\draw[-latex, thick] (V2) to[bend left=30] node[right] {$B$} (V1);
		}}}}
		\,+\,
		\vcenter{\hbox{{\tikz[baseline=.1ex]{
						\node[vertex] (V1) at (0,0) {};
						\node[vertex] (V2) at (0,0.8) {};
						\draw[latex-, thick] (V1) to[bend left=30] node[left] {$\overline{B}$} (V2);
						\draw[latex-, thick] (V2) to[bend left=30] node[right] {$\overline{C}$} (V1);
		}}}}
		\;\;\Rightarrow\;\;
		\vcenter{\hbox{{\tikz[baseline=.1ex]{
						\node[vertex] (VO) at (0,0) {};
						\node[vertex] (VL) at (-0.6,-0.6) {};
						\node[vertex] (VR) at (0.6,-0.6) {};
						\draw[-latex, thick] (VL) -- node[below] {$a$} (VR);
						\draw[latex-, thick] (VL) -- node[above left] {$B$} (VO);
						\draw[-latex, thick] (VR) -- node[above right] {$C$} (VO);
		}}}}
		\,+\,
		\vcenter{\hbox{{\tikz[baseline=.1ex]{
						\node[vertex] (V1) at (0,0) {};
						\node[vertex] (V2) at (0,0.8) {};
						\draw[latex-, thick] (V1) to[bend left=30] node[left] {$\overline{B}$} (V2);
						\draw[latex-, thick] (V2) to[bend left=30] node[right] {$\overline{C}$} (V1);
		}}}}
		\;\;\Rightarrow\;\;
		\vcenter{\hbox{{\tikz[baseline=.1ex]{
						\node[vertex] (VO) at (0,0) {};
						\node[vertex] (VL) at (-0.6,-0.6) {};
						\node[vertex] (VR) at (0.6,-0.6) {};
						\draw[-latex, thick] (VL) -- node[below] {$a$} (VR);
						\draw[-latex, thick] (VL) -- node[above left] {$\overline{C}$} (VO);
						\draw[-latex, thick] (VR) -- node[above right] {$C$} (VO);
		}}}}
		\;\;\Rightarrow\;\;
		\vcenter{\hbox{{\tikz[baseline=.1ex]{
						\node[vertex] (VO) at (0,0) {};
						\node[vertex] (V) at (0,-0.6) {};
						\draw[-latex, thick] (V) to[out=-30,in=30,looseness=20] (V);
						\node (V) at (0.7,-0.6) {$a$};
		}}}}
	\end{equation}
	\noindent
	The resulting hypergraph contains the connected component $\Hex_1 = \vcenter{\hbox{{\tikz[baseline=.1ex]{
					\node[vertex] (V) at (0,0) {};
					\draw[-latex, thick] (V) to[out=-30,in=30,looseness=20] (V);
					\node (V) at (0.7,0) {$a$};
	}}}}
	$, hence $\Hex_1 \in L(\FGex)$. It also contains an isolated vertex, hence the single-vertex hypergraph also belongs to $L(\FGex)$.
\end{example}

\begin{example}\label{example_FG_der_2}
	Consider another derivation in $\FGex$:
	\begin{equation}\label{eq_FG_der_2}
		\Zex 
		\quad\Rightarrow\quad
		\vcenter{\hbox{{\tikz[baseline=.1ex]{
						\node[vertex] (VO) at (0,0) {};
						\node[vertex] (VL) at (-0.6,-0.6) {};
						\node[vertex] (VR) at (0.6,-0.6) {};
						\draw[-latex, thick] (VL) -- node[below] {$a$} (VR);
						\draw[-latex, thick] (VL) -- node[above left] {$A$} (VO);
						\draw[-latex, thick] (VR) -- node[above right] {$C$} (VO);
		}}}}
		\;\;+\;\;
		\vcenter{\hbox{{\tikz[baseline=.1ex]{
						\node[vertex] (V1) at (0,0) {};
						\node[vertex] (V2) at (0,0.8) {};
						\node[vertex] (V3) at (0,1.6) {};
						\draw[-latex, thick] (V1) -- node[right] {$\overline{A}$} (V2);
						\draw[-latex, thick] (V2) -- node[right] {$B$} (V3);
		}}}}
		\;\;+\;\;
		\vcenter{\hbox{{\tikz[baseline=.1ex]{
						\node[vertex] (V1) at (0,0) {};
						\node[vertex] (V2) at (0,0.8) {};
						\node[vertex] (V3) at (0,1.6) {};
						\draw[-latex, thick] (V1) -- node[right] {$\overline{C}$} (V2);
						\draw[-latex, thick] (V2) -- node[right] {$B$} (V3);
		}}}}
		\quad\Rightarrow\quad
		\vcenter{\hbox{{\tikz[baseline=.1ex]{
						\node[vertex] (VO) at (0,0) {};
						\node[vertex] (VL) at (-0.6,-0.6) {};
						\node[vertex] (VR) at (0.6,-0.6) {};
						\node[vertex] (V1) at (0,0.8) {};
						\draw[-latex, thick] (VL) -- node[below] {$a$} (VR);
						\draw[-latex, thick] (VR) -- node[above right] {$C$} (VO);
						\draw[-latex, thick] (VO) -- node[left] {$B$} (V1);
		}}}}
		\;\;+\;\;
		\vcenter{\hbox{{\tikz[baseline=.1ex]{
						\node[vertex] (V1) at (0,0) {};
						\node[vertex] (V2) at (0,0.8) {};
						\node[vertex] (V3) at (0,1.6) {};
						\draw[-latex, thick] (V1) -- node[right] {$\overline{C}$} (V2);
						\draw[-latex, thick] (V2) -- node[right] {$B$} (V3);
		}}}}
		\quad\Rightarrow\quad
		\vcenter{\hbox{{\tikz[baseline=.1ex]{
						\node[vertex] (VO) at (0,0) {};
						\node[vertex] (VL) at (-0.6,-0.4) {};
						\node[vertex] (VR) at (0.6,-0.4) {};
						\node[vertex] (V1) at (-0.5,0.8) {};
						\node[vertex] (V2) at (0.5,0.8) {};
						\draw[-latex, thick] (VL) -- node[below] {$a$} (VR);
						\draw[-latex, thick] (VO) -- node[below left] {$B$} (V1);
						\draw[-latex, thick] (VO) -- node[below right] {$B$} (V2);
		}}}}
	\end{equation}
	\noindent
	The result of this derivation contains $\Hex_2 = \vcenter{\hbox{{\tikz[baseline=.1ex]{
					\node[vertex] (VL) at (-0.4,0) {};
					\node[vertex] (VR) at (0.4,0) {};
					\draw[-latex, thick] (VL) -- node[above] {$a$} (VR);
	}}}}$, hence $\Hex_2$ belongs to $L(\FGex)$. 
\end{example}


\begin{example}
	Note that neither $\Hex_1$ from Example \ref{example_FG_der_1} nor $\Hex_2$ from Example \ref{example_FG_der_2} belong to $L\left(\FGPex\right)$, because there is no $b$-labeled hyperedge attached to any of them.
	Now, consider the following derivation:
	$$
	\Zex 
	\quad\Rightarrow\quad
	\vcenter{\hbox{{\tikz[baseline=.1ex]{
					\node at (0.5,-0.4) {$+$};
					\node at (0.5+2.6,-0.4) {$+$};
					\node[vertex] (VO) at (0,0) {};
					\node[hyperedge] (EL) at (-0.8,0) {$c$};
					\node[vertex] (VL) at (-0.8,-0.8) {};
					\node[vertex] (VL1) at (-1.6,0) {};
					\node[vertex] (VL2) at (-1.6,-0.8) {};
					\node[vertex] (VL3) at (-2.4,0) {};
					\draw[-latex, thick] (VL1) -- node[left] {$b$} (VL2);
					\draw[-latex, thick] (VL1) -- node[above] {$b$} (VL3);
					\draw[-latex, thick] (VL) -- node[below right] {$\overline{A}$} (VO);
					\draw[-] (EL) -- node[left] {\scriptsize 1} (VL);
					\draw[-] (EL) -- node[above] {\scriptsize 2} (VO);
					\draw[-] (EL) -- node[above] {\scriptsize 3} (VL1);
					\node[vertex] (2VO) at (0+1.8,0) {};
					\node[vertex] (2VL) at (-0.8+1.8,-0.8) {};
					\node[vertex] (2VR) at (0.8+1.8,-0.8) {};
					\draw[-latex, thick] (2VL) -- node[below] {$a$} (2VR);
					\draw[-latex, thick] (2VL) -- node[above left] {$A$} (2VO);
					\draw[-latex, thick] (2VR) -- node[above right] {$C$} (2VO);
					\node[vertex] (3VO) at (0+3.6,0) {};
					\node[hyperedge] (3ER) at (0.8+3.6,0) {$c$};
					\node[vertex] (3VR) at (0.8+3.6,-0.8) {};
					\node[vertex] (3VR1) at (1.6+3.6,0) {};
					\draw[-latex, thick] (3VR1) to[out=-60,in=-120,looseness=20] node[below] {$a$} (3VR1);
					\draw[-latex, thick] (3VR) -- node[below left] {$\overline{C}$} (3VO);
					\draw[-] (3ER) -- node[right] {\scriptsize 1} (3VR);
					\draw[-] (3ER) -- node[above] {\scriptsize 2} (3VO);
					\draw[-] (3ER) -- node[above] {\scriptsize 3} (3VR1);		
	}}}}
	\quad\underset{\fr}{\Rightarrow}^\ast\quad
	\vcenter{\hbox{{\tikz[baseline=.1ex]{
					\node[vertex] (VO) at (0,0) {};
					\node[hyperedge] (EL) at (-0.8,0) {$c$};
					\node[hyperedge] (ER) at (0.8,0) {$c$};
					\node[vertex] (VL) at (-0.8,-0.8) {};
					\node[vertex] (VR) at (0.8,-0.8) {};
					\node[vertex] (VR1) at (1.6,0) {};
					\node[vertex] (VL1) at (-1.6,0) {};
					\node[vertex] (VL2) at (-1.6,-0.8) {};
					\node[vertex] (VL3) at (-2.4,0) {};
					\draw[-latex, thick] (VL) -- node[below] {$a$} (VR);
					\draw[-latex, thick] (VL1) -- node[left] {$b$} (VL2);
					\draw[-latex, thick] (VL1) -- node[above] {$b$} (VL3);
					\draw[-latex, thick] (VR1) to[out=-60,in=-120,looseness=20] node[below] {$a$} (VR1);
					\draw[-] (EL) -- node[left] {\scriptsize 1} (VL);
					\draw[-] (EL) -- node[above] {\scriptsize 2} (VO);
					\draw[-] (EL) -- node[above] {\scriptsize 3} (VL1);
					\draw[-] (ER) -- node[right] {\scriptsize 1} (VR);
					\draw[-] (ER) -- node[above] {\scriptsize 2} (VO);
					\draw[-] (ER) -- node[above] {\scriptsize 3} (VR1);		
	}}}}
	$$
	The resulting hypergraph (denote it by $\Yex$) contains the marker label $b$; each of its labels is either terminal, or it is a marker or a connector. Therefore, $\rem_{\{b,c\}}\left(\Yex\right)$ belongs to $L\left(\FGPex\right)$. The hypergraph $\Hex_3 = \rem_{\{b,c\}}\left(\Yex\right)$ is depicted below.
	$$
	\Hex_3 = \rem_{\{b,c\}}(\Yex) = 
	\vcenter{\hbox{{\tikz[baseline=.1ex]{
					\node[vertex] (V1) at (-0.5,-0.5) {};
					\node[vertex] (V2) at (0.5,-0.5) {};
					\node[vertex] (TR) at (0,0) {};
					\node[vertex] (TA) at (-0.5,0) {};
					\node[vertex] (TL) at (-1,0) {};
					\node[vertex] (TB) at (-0.5,-0.5) {};
					\node[vertex] (L) at (1,0) {};
					\draw[-latex, thick] (V1) -- node[below] {$a$} (V2);
					\draw[-latex, thick] (L) to[out=-60,in=-120,looseness=20] node[below] {$a$} (L);
	}}}}
	$$
\end{example}

\begin{remark}\label{rem_parallelization}
	As mentioned in \cite[Section 5]{KreowskiKL17}, fusion rules are parallel and sequentially independent. Consequently, an application of a parallelised fusion rule $H \underset{\fr(P)}{\Rightarrow} H^\prime$ can be transformed into a sequence of fusion rule applications, and vice versa. Hence, for $P = \{p_1,\dotsc,p_k\}$, the following derivation is equivalent to the application of $\fr(P)$:
	$
	H \underset{\fr(\{p_1\})}{\Rightarrow} H_1 \underset{\fr(\{p_2\})}{\Rightarrow}  \dotsc \underset{\fr(\{p_k\})}{\Rightarrow}  H_k = H^\prime.
	$
\end{remark}

\begin{example}\label{ex_parallelised}
	We can replace the sequence of fusion rule applications in (\ref{eq_FG_der_1}) by a single application of $\fr(\Pex)$ where $\Pex = \{\{e_0^A,e_3^{\overline{A}}\}, \{e_3^B,e_4^{\overline{B}}\}, \{e_4^{\overline{C}},e_0^C\}\}$. Recall that the names of hyperedges are introduced at the end of Example \ref{example_FG}.
\end{example}

\begin{remark}
	A fusion grammar without markers and connectors $FG = (Z,F,T)$ can be viewed as a fusion grammar $FG^\prime = (Z^\prime,F,T,\{\mu\},\emptyset)$ where $Z^\prime$ is obtained from $Z$ by attaching a $\mu$-labeled hyperedge to each vertex in $Z$ (with $\type(\mu)=1$). Clearly, this makes the marker label $\mu$ redundant in $Z^\prime$, hence $L(FG) = L(FG^\prime)$.
\end{remark}

Below we formulate a proposition establishing a normal form of derivations in a fusion grammar. This statement is \cite[Corollary 1]{KreowskiKL17}.
\begin{proposition}[normal form]\label{prop_nf}
	If $Z \Rightarrow^\ast H$, then there exists a \emph{most parallelised derivation} of the form $Z \Rightarrow m \cdot Z \underset{\fr(P)}{\Rightarrow} H + G$ for some multiplicity $m$, some parallelised fusion rule $\fr(P)$ and some hypergraph $G$.
\end{proposition}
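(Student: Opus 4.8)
The plan is to prove the statement by induction on the length $k$ of the derivation $Z\Rightarrow^\ast H$ (and one may in addition keep $m$ everywhere positive, so that each $Z\Rightarrow m\cdot Z$ is a genuine multiplication). The only ingredient beyond the definitions is the content of Remark~\ref{rem_parallelization}, which I would use in the form: for disjoint pair sets $P_1,P_2$, applying $\fr(P_1\uplus P_2)$ to a hypergraph has the same effect as applying $\fr(P_1)$ and then $\fr(P_2)$, iterated for any finite partition of the pair set. For $k=0$ one takes $m\equiv 1$, $P=\emptyset$ and $G$ the empty hypergraph, since $\fr(\emptyset)$ is the identity step.

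For the inductive step I would write the derivation as $Z\Rightarrow^\ast H'\Rightarrow H$, invoke the hypothesis on the length-$k$ prefix to get $Z\Rightarrow m'\cdot Z\underset{\fr(P')}{\Rightarrow}H'+G'$, and split on the last step. If it is a fusion of a pair $p=\{e,\overline e\}$ of $H'$, then --- since removal and quotient do not rename hyperedges --- $e$ and $\overline e$ are non-deleted hyperedges of $m'\cdot Z$, so $P'\uplus\{p\}$ is admissible there; by the Remark, $\fr(P'\uplus\{p\})$ applied to $m'\cdot Z$ first produces $H'+G'$ and then fuses $p$, and since $p$ together with its attachment vertices lies in the $H'$-summand, this last fusion leaves $G'$ intact and yields $H+G'$. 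If the last step is a multiplication $H=m''\cdot H'$, I would choose any $N\ge 1$ bounding all values of $m''$, set $m\eqdef N\cdot m'$ (so $m\cdot Z$ is, up to isomorphism, $N$ disjoint copies of $m'\cdot Z$), and let $P$ be $N$ disjoint copies of $P'$, one inside each copy; this $P$ is admissible because its parts sit in pairwise disjoint copies. Applying $\fr(P)$ copy by copy (again via the Remark) turns $m\cdot Z$ into $N$ disjoint copies of $H'+G'$, and since $N$ bounds $m''$, the hypergraph $N\cdot H'$ contains $m''\cdot H'=H$ as a disjoint summand; whatever remains, together with $N\cdot G'$, is the new $G$.

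The main point to get right is the multiplication case. The naive hope --- to reproduce $m''\cdot H'$ by a \emph{component-wise} multiplication already at the level of $Z$ --- fails, because one connected component of $m'\cdot Z$ can, after $\fr(P')$, scatter its vertices across several connected components of $H'+G'$, so component multiplicities upstairs do not descend to component multiplicities downstairs. Over-multiplying by a uniform $N$ and pushing the surplus into $G$ is the way around this, and is precisely why the statement carries the extra summand $G$. The remainder is routine verification: preservation of hyperedge identities under removal and quotient; admissibility (pairwise disjointness and complementary labels) of the enlarged pair sets; that fusing a pair inside the $H'$-summand leaves $G'$ untouched; positivity of the multiplicities built along the way; and the usual conventions about isolated vertices and the empty hypergraph, which are needed so that the base case and the disjoint-union decomposition of $N\cdot H'$ behave as claimed.
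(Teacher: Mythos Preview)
Your proof is correct and takes essentially the same approach as the paper: both identify the multiplication case as the only nontrivial one and resolve it by over-multiplying by a uniform bound $N$ and absorbing the surplus components into $G$. The paper frames this as repeatedly swapping adjacent fusion-then-multiplication steps (with the same $k=\max\{k_1,k_2\}$ trick), while you structure it as a clean induction on derivation length, but the core idea and the reason $G$ must appear are identical.
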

\begin{proof}[Proof sketch]
	If, in a derivation, multiplication is done after fusion, then these two steps can be interchanged. To prove this we should consider several cases as in \cite[Proposition 2]{KreowskiKL17}. For example, assume that there are the following two steps: $C \underset{\fr}{\Rightarrow} C_1 + C_2 \Rightarrow k_1 \cdot C_1 + k_2 \cdot C_2$. Assume that $C,C_1,C_2$ are connected; in other words, the fusion rule application results in disconnection of $C$. We want to start with the multiplication of $C$ and proceed with a fusion. The problem is that the constants $k_1$ and $k_2$ are different in general (let, e.g., $k_1<k_2$). To overcome this, let $k = \max\{k_1,k_2\}$. Then the rules are interchanged in the following way: 
	$$
	C \Rightarrow k \cdot C \underset{\fr}{\Rightarrow}^k k \cdot (C_1 + C_2) = k \cdot C_1 + k \cdot C_2 = k_1 \cdot C_1 + k_2 \cdot C_2 + (k_2-k_1)\cdot C_1.
	$$
	An additional part $(k_2-k_1)\cdot C_1$ appears as compared to the original derivation. This is the reason why $G$ appears in a most parallelised derivation. Note that, for some reason, \cite[Corollary 1]{KreowskiKL17} is stated without this $G$, which is a mistake. In the proof of that corollary, the hypergraph $G$ appears implicitly.
	
	After the reordering, we have a sequence of multiplications followed by a sequence of fusions; the resulting hypergraph is $H+G$ for some $G$. Finally, multiplications can be combined into a single one; the same holds for fusions according to Remark \ref{rem_parallelization}. Thus we have obtained a derivation of the desired form.
\end{proof}

This normal form can be refined for connection-preserving fusion grammars.

\begin{proposition}[normal form for connection-preserving fusion grammars]\label{prop_nf_cpfg}
	If $FG = (Z,F,T,\mathcal{M},\mathcal{K})$ is a connection-preserving fusion grammar, then $H \in L(FG)$ if and only if $H = \rem_{\mathcal{M} \cup \mathcal{K}}(Y)$ such that $Y \in \mathcal{H}(T \cup \mathcal{M} \cup \mathcal{K})$ is a connected hypergraph containing at least one $\mu$-labeled hyperedge, and $Z \Rightarrow n \cdot Z \underset{\fr(Q)}{\Rightarrow} Y$ for some $n$ and $Q$.
\end{proposition}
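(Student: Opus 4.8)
The plan is to prove both implications, the converse (right-to-left) being immediate and the forward one (left-to-right) being the substantial part. \emph{For the easy direction}, suppose $Y \in \mathcal{H}(T\cup\mathcal{M}\cup\mathcal{K})$ is connected, satisfies $\#_{\mathcal{M}}(Y)\ge 1$, and $Z \Rightarrow n\cdot Z \underset{\fr(Q)}{\Rightarrow} Y$. By Remark~\ref{rem_parallelization} the parallelized step $\fr(Q)$ unfolds into a sequence of ordinary fusion steps, so $Z \Rightarrow^\ast Y$. Since $Y$ is connected, $\mathcal{C}(Y) = \{Y\}$, so $Y$ is itself a connected component of a hypergraph derivable from $Z$, lies in $\mathcal{H}(T\cup\mathcal{M}\cup\mathcal{K})$, and carries a marker; hence $\rem_{\mathcal{M}\cup\mathcal{K}}(Y) \in L(FG)$ directly by the definition of the generated language. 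Connection-preservation is not needed here.

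\emph{For the hard direction}, let $H \in L(FG)$, witnessed by a derivation $Z \Rightarrow^\ast H_0$ and a connected component $Y_0 \in \mathcal{C}(H_0)\cap\mathcal{H}(T\cup\mathcal{M}\cup\mathcal{K})$ with $\#_{\mathcal{M}}(Y_0)\ge 1$ and $H \cong \rem_{\mathcal{M}\cup\mathcal{K}}(Y_0)$. Apply Proposition~\ref{prop_nf} to obtain a most parallelized derivation $Z \Rightarrow m\cdot Z \underset{\fr(P)}{\Rightarrow} H_0 + G$; since the connected components of a disjoint union are exactly those of the summands, $Y_0$ is a connected component of $H_0 + G$. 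Write $m\cdot Z = \sum_{j\in J} C_j$ as a disjoint union of copies of connected components of $Z$, and form the graph $\mathcal{G}_P$ on vertex set $J$ having an edge joining $j$ and $j'$ (a loop if $j=j'$) for every pair in $P$ whose two hyperedges lie in $C_j$ and $C_{j'}$, respectively.

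The key claim is that $\mathcal{C}(H_0 + G)$ is in bijection with the set of connected components of $\mathcal{G}_P$. To see this, serialize $\fr(P)$ as $\fr(\{p_1\}),\dots,\fr(\{p_k\})$; each intermediate hypergraph is derivable from $Z$, so the connection-preserving hypothesis applies at every single fusion step. By induction on $i$, after $i$ steps the hypergraph is a disjoint union of connected ``blobs'' in bijection with the connected components of the graph built from $\{p_1,\dots,p_i\}$, each blob being the disjoint union of the corresponding $C_j$'s with the already-used pairs fused in: a step either fuses two hyperedges inside one blob, which stays connected (case $K = X + C$ of the definition), or fuses hyperedges in two distinct blobs, which then merge into one connected blob (case $K = X + C' + C''$). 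Let $S\subseteq J$ be the connected component of $\mathcal{G}_P$ whose blob is $Y_0$. No pair of $P$ has one hyperedge inside $\sum_{j\in S}C_j$ and the other outside it, since that would be an edge of $\mathcal{G}_P$ leaving $S$; hence $P = Q \uplus R$ where $Q$ consists of the pairs lying entirely within $\sum_{j\in S}C_j$, and moreover $\equiv_P$ never identifies a vertex of $\sum_{j\in S}C_j$ with one outside it. Therefore $\fr(P)$ acts on the summand $\sum_{j\in S}C_j$ exactly as $\fr(Q)$ does, and produces precisely $Y_0$. Finally, define $n:\mathcal{C}(Z)\to\Nat$ by letting $n(C)$ be the number of $j\in S$ with $C_j$ a copy of $C$; then $n\cdot Z \cong \sum_{j\in S}C_j$, and transporting $Q$ along this isomorphism gives $Z \Rightarrow n\cdot Z \underset{\fr(Q)}{\Rightarrow} Y_0$, which together with $H \cong \rem_{\mathcal{M}\cup\mathcal{K}}(Y_0)$ is the asserted form (isomorphic hypergraphs being identified).

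The main obstacle is the inductive claim that the connected components produced by a parallelized fusion correspond exactly to the connected components of the ``merging graph'' $\mathcal{G}_P$. This is precisely where connection-preservation is indispensable: without it a single fusion could split a blob, and then the component $Y_0$ of the result need not arise from a subset of $J$ closed under the $P$-pairs, so the extraction of $n$ and $Q$ would break down. The remaining ingredients — that $\fr$ commutes with disjoint union when the relevant pairs sit on disjoint pieces, and that $\equiv_P$ respects the partition of the vertex set — are routine bookkeeping from the definitions of $\fr(P)$ and of the quotient $H/{\equiv}$.
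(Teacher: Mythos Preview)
Your proof is correct. The route differs slightly from the paper's: the paper first isolates an auxiliary decomposition lemma (Lemma~\ref{lemma_parikh_1}'s predecessor, the unnamed Lemma and Corollary~\ref{corollary_nf_cpfg_main}) saying that for any sequence $H^1 \underset{\fr}{\Rightarrow}^\ast H^2$ in a connection-preserving grammar, a decomposition of $H^2$ into connected summands lifts to a decomposition of $H^1$ with componentwise derivations; it then applies this with $C_1^2 = Y$ and $C_2^2$ the rest. You instead build the ``merging graph'' $\mathcal{G}_P$ on the copies $C_j$ and argue, by induction along a serialization of $\fr(P)$, that its connected components biject with $\mathcal{C}(H_0+G)$ --- which is exactly what the paper later introduces as the \emph{fusion net} (Proposition~\ref{prop_connectedness_fn}) for the Parikh theorem. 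Both arguments hinge on the same single observation: connection-preservation guarantees that each serialized fusion step either keeps one blob connected or merges two blobs into one connected blob, never splitting. The paper's lemma is packaged for reuse, whereas your argument is more direct and anticipates the fusion-net machinery; either way the extraction of $n$ and $Q$ from the component $S$ proceeds identically.
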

Here no additional hypergraph $G$ appears. See the proof in \ref{appendix_proof_prop_nf_cpfg}.

\subsection{Evidence Path}

In the proofs, we shall analyse the equivalence relation $\equiv_P$ associated with the parallelised fusion rule $\fr(P)$. One useful notion we introduce in this section is that of an \emph{evidence path}. The idea behind it is quite simple: two vertices $v^\prime$ and $v^{\prime\prime}$ are identified by an application of $\fr(P)$ if and only if there exists a finite sequence of vertices $v^\prime = v_0,v_1,\dotsc,v_l = v^{\prime\prime}$ such that for each $i=0,\dotsc,l-1$ the neighbour vertices $v_i$ and $v_{i+1}$ are both $j$-th attachment vertices (for some $j$) of two hyperedges fused with each other (which causes $v_i$ and $v_{i+1}$ to be identified by $\fr(P)$).

\begin{proposition}\label{prop_evidence_path}
	Let $H \underset{\fr(P)}{\Rightarrow} H^\prime$ and let $v^\prime,v^{\prime\prime} \in V_{H}$ be two vertices such that $v^\prime \equiv_P v^{\prime\prime}$. Then there exists a sequence of the form
	\begin{equation}\label{equation_evidence_path}
		v_0, (e_0^+,k_0^+), (e_1^-,k_1^-), v_1, (e_1^+,k_1^+), (e_2^-,k_2^-), v_2, \dotsc, (e_{l-1}^+,k_{l-1}^+), (e_l^-,k_l^-), v_l
	\end{equation}
	where 
	\begin{itemize}
		\item $l \ge 0$, \qquad $v_i \in V_{H}$, \qquad $e_i^+,e_i^- \in E_{H}$, \qquad $v_0,\dotsc,v_l$ are distinct, \qquad $v_0 = v^\prime$, \qquad $v_l = v^{\prime\prime}$;
		\item $\{e_i^+,e_{i+1}^-\} \in P$ and $k_i^+ = k_{i+1}^-$ for $i=0,\dotsc,l-1$;
		\item $att_H(e_i^+)(k_i^+) = v_i$ for $i=0,\dotsc,l-1$; $att_H(e_i^-)(k_i^-) = v_i$ for $i=1,\dotsc,l$;
		\item if $v^\prime=v^{\prime\prime}$, then $l=0$ and the sequence (\ref{equation_evidence_path}) has the form $v^\prime$.
	\end{itemize}
\end{proposition}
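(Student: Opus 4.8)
The plan is to recognise $\equiv_P$ as the equivalence relation \emph{generated} by the ``fusion pairs'' and then to trim a generating chain into a simple path.

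First I would introduce an auxiliary relation $\sim$ on $V_H$: put $v \sim w$ if there is a sequence of the form (\ref{equation_evidence_path}) satisfying all of the listed conditions \emph{except, possibly, the distinctness of $v_0,\dotsc,v_l$} --- in particular, for $v=w$ one allows the trivial sequence with $l=0$. I claim $\sim$ is an equivalence relation that contains every generating pair of $\equiv_P$. Reflexivity is the $l=0$ sequence. Symmetry follows by reversing a sequence: the point is that the elements of $P$ are \emph{unordered} pairs, so reading the sequence backwards and swapping the roles of $e_i^+$ and $e_i^-$ (and of $k_i^+$ and $k_{i+1}^-$) preserves the requirements $\{e_i^+,e_{i+1}^-\}\in P$ and $k_i^+=k_{i+1}^-$. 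Transitivity follows by concatenating two sequences at their shared endpoint, which is legitimate since no condition ties the incoming hyperedge at a vertex to the outgoing one. Finally, for every $\{e,\overline e\}\in P$ and every $i\in[\type(e)]$ the length-$1$ sequence $att_H(e)(i),\,(e,i),\,(\overline e,i),\,att_H(\overline e)(i)$ witnesses $att_H(e)(i)\sim att_H(\overline e)(i)$. Since $\equiv_P$ is by definition the smallest equivalence relation with that last property, $\equiv_P\,\subseteq\,\sim$.

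Now take $v'\equiv_P v''$. By the inclusion just proved there is a sequence of the form (\ref{equation_evidence_path}) from $v'$ to $v''$ meeting all conditions but possibly distinctness; fix one with $l$ as small as possible. If $v'=v''$, then necessarily $l=0$ and this is the trivial sequence required by the last bullet. Otherwise $l\ge 1$, and I claim the $v_i$ are pairwise distinct. If not, say $v_a=v_b$ with $0\le a<b\le l$. When $b=l$, the prefix $v_0,\dotsc,v_a$ is again a legitimate sequence from $v'$ to $v_a=v_b=v''$ of length $a<l$, contradicting minimality. When $b<l$, remove the vertices $v_{a+1},\dotsc,v_b$ and the link-data attached to them, and join $v_a$ directly to $v_{b+1}$ via $(e_b^+,k_b^+),(e_{b+1}^-,k_{b+1}^-)$: this is a valid link because $\{e_b^+,e_{b+1}^-\}\in P$, $k_b^+=k_{b+1}^-$, $att_H(e_b^+)(k_b^+)=v_b=v_a$, and $att_H(e_{b+1}^-)(k_{b+1}^-)=v_{b+1}$, while every other link --- and the attachment conditions $att_H(e_i^+)(k_i^+)=v_i$, $att_H(e_i^-)(k_i^-)=v_i$ at the surviving vertices --- is untouched. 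The spliced sequence has length $a+(l-b)<l$, again contradicting minimality. Hence the minimal sequence has distinct vertices and is exactly the evidence path asserted by the proposition.

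The whole argument is essentially bookkeeping, so I do not expect a real obstacle. The two places that need care are the verification that reversal and concatenation of evidence sequences respect the conditions $\{e_i^+,e_{i+1}^-\}\in P$ and $k_i^+=k_{i+1}^-$ --- which genuinely uses that $P$ consists of unordered pairs --- and the trimming step, where one must handle the boundary case $b=l$ separately and check that excising an internal segment and re-indexing leaves the attachment conditions at the remaining vertices intact.
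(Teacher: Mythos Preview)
Your proposal is correct and follows essentially the same approach as the paper: both recognise $\equiv_P$ as the reflexive--transitive closure of the one-step relation ``$v\sim w$ iff $v=att_H(e)(i)$, $w=att_H(\overline e)(i)$ for some $\{e,\overline e\}\in P$'', which immediately yields a chain of the desired shape. You are in fact more careful than the paper, which simply asserts that the resulting chain ``is the statement of the proposition'' without explicitly carrying out the minimality/trimming argument needed to obtain distinct vertices; your handling of that step (including the boundary case $b=l$) is correct.
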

\begin{proof}
	Let us say that $v^1 \sim_P v^2$ if there are hyperedges $e^1,e^2 \in E_{H}$ such that $\{e^1,e^2\} \in P$ and for some $i$ it holds that $att_H(e^j)(i) = v^j$ ($j=1,2$). By the definition, the relation $\equiv_P$ is the least equivalence relation containing $\sim_P$; thus, $\equiv_P$ is the reflexive transitive closure of $\sim_P$ (note that $\sim_P$ is already symmetric). Therefore, $v^\prime \equiv_P v^{\prime\prime}$ if and only if there exists a sequence of distinct vertices $v^\prime = v_0 \sim_P v_1 \sim_P \dotsc \sim_P v_l=v^{\prime\prime}$. This is the statement of the proposition.
\end{proof}

\begin{definition}
	The sequence (\ref{equation_evidence_path}) is called an \emph{evidence path} (or an \emph{evidence $\fr(P)$-path}). 
\end{definition}

\begin{definition}
	Given (\ref{equation_evidence_path}), the set 
	$\{(e_0^+,k_0^+), (e_1^-,k_1^-), (e_1^+,k_1^+), \dotsc, (e_l^-,k_l^-)\}$
	is called an \emph{evidence set} (or an \emph{evidence $\fr(P)$-set}). If $v^\prime = v^{\prime\prime}$, then the corresponding evidence set is the empty set.
\end{definition}

\begin{example}\label{example_der_evidence_path}
	Consider the parallelised fusion rule application from Example \ref{ex_parallelised} (see also (\ref{eq_FG_der_1}) from Example \ref{example_FG_der_1}). Let us draw the lines indicating which vertices are identified:
	$$
		\vcenter{\hbox{{\tikz[baseline=.1ex]{
				\node[vertex] (VO) at (0,0) {};
				\node[vertex] (VL) at (-0.7,-0.7) {};
				\node[vertex] (VR) at (0.7,-0.7) {};
				\draw[-latex, thick] (VL) -- node[below] {$a$} (VR);
				\draw[-latex, thick] (VL) -- (VO);
				\draw[-latex, thick] (VR) -- (VO);
				\node[] (Lb1) at (-0.6,-0.2) {$A$};
				\node[] (Lb6) at (0.6,-0.2) {$C$};
				\node[vertex] (VL1) at (-0.6,0.3) {};
				\node[vertex] (VL2) at (-1.2,0.9) {};
				\draw[latex-, thick] (VL1) to[bend left=45] (VL2);
				\draw[latex-, thick] (VL2) to[bend left=45] (VL1);
				\node[vertex] (VR1) at (0.6,0.3) {};
				\node[vertex] (VR2) at (1.2,0.9) {};
				\draw[-latex, thick] (VR1) to[bend left=45] (VR2);
				\draw[-latex, thick] (VR2) to[bend left=45] (VR1);
				\node[] (Lb2) at (-1.1,0.1) {$\overline{A}$};
				\node[] (Lb3) at (-0.45,1) {$B$};
				\node[] (Lb5) at (1.1,0.1) {$\overline{C}$};
				\node[] (Lb4) at (0.45,1) {$\overline{B}$};
				\draw[-, comment, thick, dashed] (VL) to[bend left=50] (VL2);
				\draw[-, comment, dashed] (VO) to[bend left=50] (VL1);
				\draw[-, comment, thick, densely dotted] (VL2) to[bend left=50] (VR2);
				\draw[-, comment, densely dotted] (VL1) to[bend left=50] (VR1);
				\draw[-, comment, thick, dashdotted] (VR2) to[bend left=50] (VR);
				\draw[-, comment, dashdotted] (VR1) to[bend left=50] (VO);
		}}}}
		\quad \underset{\fr(\Pex)}{\Rightarrow} \quad
		\vcenter{\hbox{{\tikz[baseline=.1ex]{
						\node[vertex] (VO) at (0,0) {};
						\node[vertex] (V) at (0,-0.7) {};
						\draw[-latex, thick] (V) to[out=-30,in=30,looseness=25] (V);
						\node (V) at (0.8,-0.7) {$a$};
		}}}}
	$$
	If two vertices are connected by a sequence of dashed or dotted arcs, then they are identified by $\fr(\Pex)$. We see that there is a path between the attachment vertices of the $a$-labeled hyperedge; this is what we call an evidence path. Formally, in this case, the evidence path is the following sequence:
	$$
	v_0, 
	(e_0^A,1),(e_3^{\overline{A}},1), v_1, 
	(e_3^B,2),(e_4^{\overline{B}},2), v_2,
	(e_4^{\overline{C}},1),(e_0^C,1), v_3.
	$$
	Here $v_0 = att_{\Zex_0}(e_0^a)(1)$ is the first attachment vertex of the $a$-labeled hyperedge; $v_1 = att_{\Zex_3}(e_3^B)(2)$; $v_2 = att_{\Zex_4}(e_4^{\overline{C}})(1)$; finally, $v_3 = att_{\Zex_0}(e_0^{a})(2)$ is the second attachment vertex of the $a$-labeled hyperedge.
\end{example}

\section{Decidability Results}\label{sec_results}

All the necessary definitions are provided, so let us proceed to defining the membership problem and the non-emptiness problem, for which the decidability results will be established.

\begin{problem}[MEM]
	\leavevmode
	\\
	\textit{Input.} 
	\begin{enumerate}
		\item A fusion grammar without markers and connectors $FG = (Z,F,T)$; 
		\item A hypergraph $H \in \mathcal{H}(T)$.
	\end{enumerate}
	\textit{Problem MEM.} Does $H$ belong to $L(FG)$?
\end{problem}

\begin{problem}[NE]
	\leavevmode
	\\
	\textit{Input.} 
	A fusion grammar $FG = (Z,F,T, \mathcal{M},\mathcal{K})$; 
	\\
	\textit{Problem NE.} Is $L(FG)$ non-empty?
\end{problem}

The question of whether any of these problems is decidable has remained open, as stated in \cite{Lye21,Lye21_thesis}. We are going to prove decidability of both problems in this paper.

\begin{theorem}\label{th_main_membership}
	The problem MEM is decidable.
\end{theorem}
\begin{theorem}\label{th_main_non-emptiness}
	The problem NE is decidable.
\end{theorem}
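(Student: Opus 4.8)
The plan is to turn NE into an effectively bounded combinatorial search over the connected components of $Z$. By Proposition~\ref{prop_nf}, every derivation $Z\Rightarrow^{\ast}H$ factors as $Z\Rightarrow m\cdot Z\underset{\fr(P)}{\Rightarrow}H+G$, so $L(FG)\neq\emptyset$ iff there are a multiplicity $m$ and a parallelized fusion rule $\fr(P)$ such that the hypergraph obtained from $m\cdot Z$ by $\fr(P)$ has a connected component $Y$ whose labels all lie in $T\cup\mathcal{M}\cup\mathcal{K}$ and with $\#_{\mathcal{M}}(Y)\geq 1$. Such a $Y$ is assembled from a finite multiset $S$ of disjoint copies of connected components of $Z$, glued by the restriction of $P$ to those copies, and that restriction must be a \emph{label-respecting perfect matching} $M$ of all fusion-alphabet hyperedges occurring in $S$: every $A$- or $\overline{A}$-labelled hyperedge must be consumed, since $Y$ carries no fusion label. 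Conversely, given any multiset $S$ of connected components of $Z$ and any label-respecting perfect matching $M$ of its fusion hyperedges such that the hypergraph $Y(S,M)$ — obtained by deleting the matched hyperedges and passing to the quotient by $\equiv_{M}$ — is connected and $S$ contains a component with a marker, taking $m$ to assign each component its number of copies in $S$ and $P=M$ produces an element of $L(FG)$. Thus NE is equivalent to deciding existence of such a pair $(S,M)$.

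\textbf{Step 2 (the two constraints).}
Writing $C_{1},\dots,C_{k}$ for the connected components of $Z$ and $n_{i}$ for the number of copies of $C_{i}$ in $S$, a label-respecting perfect matching of the fusion hyperedges of $S$ exists if and only if
\[
\sum_{i=1}^{k} n_{i}\bigl(\#_{A}(C_{i})-\#_{\overline{A}}(C_{i})\bigr)=0\qquad\text{for every }A\in F,
\]
a homogeneous system of at most $|Z|$ integer linear equations with coefficients of absolute value at most $|Z|$. The genuinely non-numerical requirement is that \emph{some} matching realizing these counts also glues all pieces into a single connected component and keeps a marker. To analyze this I would use the evidence-path description of $\equiv_{M}$ from Proposition~\ref{prop_evidence_path}: connectivity of $Y(S,M)$ is exactly the statement that any two of the connected pieces of $S$ (the connected components left after deleting all fusion hyperedges) are joined by a chain of matched pairs whose identified attachment vertices form an evidence path, and the marker condition is just that one copy in $S$ carries a marker, which survives every fusion.

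\textbf{Step 3 (the size bound — main obstacle).}
The heart of the proof is an effective bound $N=N(|Z|)$ such that, whenever a valid pair $(S,M)$ exists, one with $|S|\le N$ exists. I would take a valid witness $(S,M)$ of minimal total size, fix a spanning tree $T$ of the graph recording which pieces of $S$ are linked by $M$, and argue that if $|S|$ exceeds a threshold depending on $|Z|$ then $T$ must contain either a path or a high-degree node along which some "local state'' — the type of a copy together with how its fusion hyperedges are used in $T$ — repeats; such a repetition can be contracted to give a strictly smaller valid witness, contradicting minimality. The delicate point, which I expect to be the main obstacle, is that a contraction must \emph{simultaneously} (i) keep the label counts balanced, so that the induced matching is still perfect and label-respecting, (ii) keep $Y$ connected, and (iii) retain a marker. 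I would arrange (i)–(ii) by contracting only segments whose associated sub-multiset of copies is a "balanced loop'' — a cyclic pattern cancelling in every label — the existence of which along a long enough path follows from a pigeonhole argument on local states, while the remaining multiplicities are bounded using a Graver/Hilbert-basis-type effective bound on minimal nonnegative integer solutions of the balance system; for (iii), if the minimal witness would lose its only marker-bearing copy, that copy (itself a connected component of $Z$) can be reinserted along any matched pair, or, if $S$ splits into two balanced connected sub-witnesses, one of them already is a smaller valid witness.

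\textbf{Step 4 (decision procedure).}
Once $N$ is available, the algorithm enumerates all multiplicities $m$ with $\sum_{i}m(C_{i})\le N$; for each such $m$ the hypergraph $m\cdot Z$ is fixed and finite, hence there are finitely many parallelized fusion rules $\fr(P)$; for each of these it computes the resulting hypergraph and tests whether it has a connected component with labels only in $T\cup\mathcal{M}\cup\mathcal{K}$ and at least one marker. By Steps 1–3 this answers ``yes'' precisely when $L(FG)\neq\emptyset$, and every step is effective, so NE is decidable.
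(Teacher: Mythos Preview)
Your Step~1 reformulation is incorrect, and this is not a technicality but the core difficulty of the problem. You claim that the connected component $Y$ is assembled from a multiset $S$ of copies of components of $Z$ glued by the restriction of $P$ to those copies, and that this restriction is a \emph{perfect} matching of all fusion-labelled hyperedges in $S$. This fails because fusions can \emph{disconnect} components: a piece $C_i$ may contribute some vertices and hyperedges to $Y$ while other parts of $C_i$ (including unmatched fusion hyperedges) end up in the garbage $G$. Example~\ref{example_counter_disconnection} in the paper is a direct counterexample: the grammar $(\Zex_0+\Zex_1+\Zex_2,\{A,B,C\},\{a\})$ has a nonempty language (it generates $\Hex_2$ via the derivation in Example~\ref{example_FG_der_2}), yet the balance equations $\sum_i n_i(\#_A(C_i)-\#_{\overline A}(C_i))=0$ have no nontrivial nonnegative solution. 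In that derivation the $B$-labelled hyperedges of $\Zex_1,\Zex_2$ are never matched; they land in a separate connected component of the result. So there is no multiset $S$ with a perfect matching of its fusion edges producing $\Hex_2$. Consequently your Step~2 balance condition is not necessary, and Step~3 is attempting to bound the size of an object whose characterisation is already wrong. Your pumping sketch in Step~3 is also far from a proof even on its own terms (you acknowledge the contraction must simultaneously preserve balance, connectivity, and a marker, and you only gesture at why this should be possible), but the Step~1 error makes this moot.

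The paper's route is entirely different and worth contrasting. Rather than bounding witness size, it first solves the membership problem for single-hyperedge targets (MEM-1) by a colouring construction: each connected component of $Z$ is replaced by finitely many relabelled versions whose new fusion labels record, in addition to the original label, a colour for each attachment vertex and membership data for evidence paths. The point is that after this enrichment the purely numerical balance condition on the \emph{new} labels becomes equivalent to the original structural question, because the colours prevent unwanted identifications and the evidence-path data forces the required ones; white labels absorb exactly the garbage $G$ without needing to bound its size. The non-emptiness problem NE is then reduced to MEM-1 via a squeeze-and-designate construction: collapse each $T\cup\mathcal{M}\cup\mathcal{K}$-labelled hyperedge to a single vertex, replace one marker by a fresh terminal $a_0$ of type $1$, and ask whether the one-vertex one-hyperedge graph $H_0$ is generated. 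This avoids any direct size bound on derivations.
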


These results improve the following ones, which have already been proved in the literature.
\begin{enumerate}
	\item The membership problem for \emph{substantial} and connection-preserving fusion grammars is decidable and it is in NP \cite{KreowskiKL17,Lye21_thesis};
	\item The non-emptiness problem for connection-preserving fusion grammars is decidable and lies in NP \cite{Lye21}.
\end{enumerate}
The substantiality property, roughly speaking, says that, if $H = \rem_{\mathcal{M} \cup \mathcal{K}}(Y)$ is generated by a fusion grammar and the derivation is of the form $Z \Rightarrow m \cdot Z = Z_1+\dotsc+Z_k \underset{\fr(P)}{\Rightarrow} Y$, then each $Z_i$ contains some ``terminal part'' which does not disappear after the fusion rule $\fr(P)$ but remains present in $H$. Given this property, we know that the number $k$ of connected components used to derive $H$ is linearly bounded by the size of $H$. This is the main ingredient used in \cite{Lye21_thesis} to prove that the membership problem is decidable but not the only one. Indeed, the assumption that the derivation has the form $Z \Rightarrow m \cdot Z \underset{\fr(P)}{\Rightarrow} Y$ works for \emph{connection-preserving} fusion grammars, according to Proposition \ref{prop_nf_cpfg}, but not for general ones. In general, fusions may lead to disconnection, and the normal form derivation has the form $Z \Rightarrow m \cdot Z \underset{\fr(P)}{\Rightarrow} H + G$ for some additional ``garbage'' hypergraph $G$. Such a hypergraph $G$ can contain all possible kinds of labels, and it is not clear how to guess it or how to estimate its size. 

Similarly, we can prove in a straightforward way that the non-emptiness problem is decidable but only for connection-preserving fusion grammars \cite{Lye21}. It is important to understand how this proof works, so we shall present it below. Namely, let us prove that the non-emptiness problem is decidable for connection-preserving fusion grammars without markers and connectors. The proof is taken from \cite{Lye21} with certain simplifications (because we only want to prove decidability). 

In order to check whether a connection-preserving fusion grammar without markers and connectors $FG = (Z,F,T)$ (where $\mathcal{C}(Z) = \{Z_1,\dotsc,Z_k\}$) generates some hypergraph, let us transform each connected component $Z_i \in \mathcal{C}(Z)$ into the integer-valued vector $$v(Z_i) = (\#_{A_1}(Z_i)-\#_{\overline{A_1}}(Z_i),\dotsc,\#_{A_l}(Z_i)-\#_{\overline{A_l}}(Z_i))$$ where $\{A_1,\dotsc,A_l\} = F$. 
Then, $L(FG)$ is non-empty if and only if the set 
$$
S = \left\{ m_1 \cdot v(Z_1) + \dotsc + m_k \cdot v(Z_k) \mid m_i \in \Nat \mbox{ and } \exists i\; m_i > 0 \right\}
$$
contains the zero vector $\vec{0} = (0,0,\dotsc)$. Indeed, $L(FG)$ is non-empty if and only if there exists a connected hypergraph $H \in \mathcal{H}(T)$ such that $Z \Rightarrow m \cdot Z \underset{\fr}{\Rightarrow}^\ast H$ for some multiplicity $m$. If such $H$ exists, consider $v(m \cdot Z) = m(Z_1) \cdot v(Z_1) + \dotsc + m(Z_k) \cdot v(Z_k) \in S$. All the hyperedges of $m \cdot Z$ labeled by elements of $F \cup \overline{F}$ must be fused; therefore, the number of $A$-labeled hyperedges coincides with the number of $\overline{A}$-labeled ones for each $A \in F$. Thus, $v(m \cdot Z) = \vec{0}$, as desired. Note that $m(Z_i) \ne 0$ for some $i$, because otherwise $H$ is the empty hypergraph.

Conversely, assume that $\vec{0} \in S$; that is, there exist $m_1,\dotsc,m_k \in \Nat$ (one of them being greater than $0$) such that $m_1 \cdot v(Z_1) + \dotsc + m_k \cdot v(Z_k) = \vec{0}$. Then, consider the disjoint union $m_1 \cdot Z_1 + \dotsc m_k \cdot Z_k$. This hypergraph has the equal number of $A$-labeled hyperedges and of $\overline{A}$-labeled ones for each $A \in F$. Let us divide hyperedges into pairs having complementary labels and fuse them. The resulting hypergraph is from $\mathcal{H}(T)$, and it is not the empty hypergraph. Thus it has at least one connected component; the latter belongs to $L(FG)$.

This proves decidability of the non-emptiness problem for connection-preserving fusion grammars because checking whether $\vec{0} \in S$ can be reduced to checking solvability of a finite number of linear systems $Ax = b$ for $A \in \mathbb{Z}^{l\times k}$, $b \in \mathbb{Z}^l$, and $x \in \Nat^k$; this is an instance of the integer linear programming, which is in NP. In \cite{Lye21}, this proof is called the \emph{quantitative argument} since only the number of fusion labels in each connected component of $Z$ matters but not their structure (namely, how hyperedges are connected to vertices and to each other). However, for an arbitrary fusion grammar where disconnections can happen during derivations, this argument is no longer valid. More precisely, it works only one-way: if $S$ contains $\vec{0}$, then $L(FG)$ is non-empty (the same argument can be used). The counterexample to the converse statement is shown below.
\begin{example}\label{example_counter_disconnection}
	Consider the fusion grammar without markers and connectors
	$
	\left(\Zex_0+\Zex_1+\Zex_2,\{A,B,C\},\{a\}\right)
	$
	where $\Zex_i$ are defined in Example \ref{example_FG}. Let us transform $\Zex_0,\Zex_1,\Zex_2$ into vectors: $v(\Zex_0) = (1,0,1)$, $v(\Zex_1) = (-1,1,0),v(\Zex_2) = (0,1,-1)$. Clearly, $m_1\cdot(1,0,1)+m_2\cdot (-1,1,0) + m_3 \cdot (0,1,-1) = (m_1-m_2,m_2+m_3,m_1-m_3)$ equals $(0,0,0)$ iff $m_1=m_2=m_3=0$. Thus the set $S$ is empty. However, the language generated by the grammar is not empty, since, as Example \ref{example_FG_der_2} shows, $\Hex_2 = \vcenter{\hbox{{\tikz[baseline=.1ex]{
					\node[vertex] (VL) at (-0.5,0) {};
					\node[vertex] (VR) at (0.5,0) {};
					\draw[-latex, thick] (VL) -- node[above] {$a$} (VR);
	}}}}$ belongs to it.

	Note that disconnection happens at the last step of the derivation (\ref{eq_FG_der_2}). After it, there are two connected components, one of them being $\Hex_2$ and the other one being a ``garbage'' hypergraph with fusion labels.
	
	Now, suppose that we replace $\Zex_0$ by a very similar hypergraph $\vcenter{\hbox{{\tikz[baseline=.1ex]{
					\node[vertex] (VO) at (0,0) {};
					\node[vertex] (VL) at (-0.5,-0.5) {};
					\node[vertex] (VR) at (0.5,-0.5) {};
					\draw[-latex, thick] (VL) -- node[below] {$a$} (VR);
					\draw[latex-, thick] (VL) -- node[above left] {$A$} (VO);
					\draw[-latex, thick] (VR) -- node[above right] {$C$} (VO);
	}}}}$. Then the derivation (\ref{eq_FG_der_2}) changes as follows:
	\begin{equation*}
		\Zex 
		\;\;\Rightarrow\;\;
		\vcenter{\hbox{{\tikz[baseline=.1ex]{
						\node[vertex] (VO) at (0,0) {};
						\node[vertex] (VL) at (-0.6,-0.6) {};
						\node[vertex] (VR) at (0.6,-0.6) {};
						\draw[-latex, thick] (VL) -- node[below] {$a$} (VR);
						\draw[latex-, thick] (VL) -- node[above left] {$A$} (VO);
						\draw[-latex, thick] (VR) -- node[above right] {$C$} (VO);
		}}}}
		\;+\;
		\vcenter{\hbox{{\tikz[baseline=.1ex]{
						\node[vertex] (V1) at (0,0) {};
						\node[vertex] (V2) at (0,0.8) {};
						\node[vertex] (V3) at (0,1.6) {};
						\draw[-latex, thick] (V1) -- node[right] {$\overline{A}$} (V2);
						\draw[-latex, thick] (V2) -- node[right] {$B$} (V3);
		}}}}
		\;+\;
		\vcenter{\hbox{{\tikz[baseline=.1ex]{
						\node[vertex] (V1) at (0,0) {};
						\node[vertex] (V2) at (0,0.8) {};
						\node[vertex] (V3) at (0,1.6) {};
						\draw[-latex, thick] (V1) -- node[right] {$\overline{C}$} (V2);
						\draw[-latex, thick] (V2) -- node[right] {$B$} (V3);
		}}}}
		\;\;\Rightarrow\;\;
		\vcenter{\hbox{{\tikz[baseline=.1ex]{
						\node[vertex] (VO) at (0,0) {};
						\node[vertex] (VL) at (-0.8,-0.8) {};
						\node[vertex] (VR) at (0.8,-0.8) {};
						\node[vertex] (V1) at (-0.8,0) {};
						\draw[-latex, thick] (VL) -- node[below] {$a$} (VR);
						\draw[-latex, thick] (VR) -- node[above right] {$C$} (VO);
						\draw[-latex, thick] (VL) -- node[left] {$B$} (V1);
		}}}}
		\;+\;
		\vcenter{\hbox{{\tikz[baseline=.1ex]{
						\node[vertex] (V1) at (0,0) {};
						\node[vertex] (V2) at (0,0.8) {};
						\node[vertex] (V3) at (0,1.6) {};
						\draw[-latex, thick] (V1) -- node[right] {$\overline{C}$} (V2);
						\draw[-latex, thick] (V2) -- node[right] {$B$} (V3);
		}}}}
		\;\;\Rightarrow\;\;
		\vcenter{\hbox{{\tikz[baseline=.1ex]{
						\node[vertex] (VO) at (0,0) {};
						\node[vertex] (VL) at (-0.5,-0.5) {};
						\node[vertex] (VR) at (0.5,-0.5) {};
						\node[vertex] (V1) at (-0.5,0.3) {};
						\node[vertex] (V2) at (0,0.8) {};
						\draw[-latex, thick] (VL) -- node[below] {$a$} (VR);
						\draw[latex-, thick] (VL) -- node[left] {$B$} (V1);
						\draw[-latex, thick] (VO) -- node[right] {$B$} (V2);
		}}}}
	\end{equation*}
	The resulting hypergraph does not contain connected components without fusion labels. In fact, the grammar $\left(\Zex_0+\Zex_1+\Zex_2,\{A,B,C\},\{a\}\right)$ with new $\Zex_0$ does not generate any hypergraph (with terminal labels only). However, we only changed the direction of one hyperedge. 
\end{example}

Example \ref{example_counter_disconnection} shows us that we cannot use only the quantitative argument for solving the non-emptiness problem in general but we must also take into account the structure of the start hypergraph, i.e.~how hyperedges are connected to vertices. 

The same holds for the membership problem. Moreover, disconnections is not the only difficulty to deal with when one studies the latter problem. Assume that we have, for example, a connection-preserving fusion grammar without markers and connectors $FG$, and we want to check whether some hypergraph of interest, e.g., 
$
\vcenter{\hbox{{\tikz[baseline=.1ex]{
				\node[hyperedge] (EL) at (1.4,0) {$c$};
				\node[vertex] (VO) at (2,-0.3) {};
				\node[vertex] (VL) at (2,0.3) {};
				\node[vertex] (VL1) at (0,0) {};
				\node[vertex] (VL3) at (0.8,0) {};
				\draw[-latex, thick] (VL1) -- node[above] {$b$} (VL3);
				\draw[-latex, thick] (VL1) to[out=210,in=150,looseness=20] node[left] {$a$} (VL1);
				\draw[-] (EL) -- (VL);
				\draw[-] (EL) -- (VO);
				\draw[-] (EL) -- node[above] {\scriptsize 3} (VL3);
				\node at (1.78,0.32) {\scriptsize 1};
				\node at (1.78,-0.07) {\scriptsize 2};
}}}}
$
belongs to $L(FG)$. The quantitative argument would not solve this problem. It would only tell us whether some hypergraph with one $a$-labeled hyperedge, one $b$-labeled one and one $c$-labeled one is generated by $FG$, which is clearly not enough. So, in general, solving the membership problem requires knowing which attachment vertices of hyperedges coincide in a hypergraph obtained by fusions and which do not. This is also illustrated by Examples \ref{example_FG_der_1} and \ref{example_FG_der_2}. In order to produce the hypergraph 
$\Hex_1 = \vcenter{\hbox{{\tikz[baseline=.1ex]{
				\node[vertex] (V) at (0,0) {};
				\draw[-latex, thick] (V) to[out=-30,in=30,looseness=18] (V);
				\node (V) at (0.65,0) {$a$};
}}}}
$
we must use $\Zex_0$, $\Zex_3$ and $\Zex_4$; however, to generate
$\Hex_2 = 
{\tikz[baseline=.1ex]{
				\node[vertex] (VL) at (-0.5,0) {};
				\node[vertex] (VR) at (0.5,0) {};
				\draw[-latex, thick] (VL) -- node[above] {$a$} (VR);
}}
$
we must use $\Zex_0$, $\Zex_1$ and $\Zex_2$. The derivation (\ref{eq_FG_der_1}) of $\Hex_1$ and the derivation (\ref{eq_FG_der_2}) of $\Hex_2$ are totally different.

We see that the membership problem is non-trivial even if the hypergraph whose membership we check consists of one hyperedge. In fact, we shall show that, if we are able to solve the problem MEM for hypergraphs with one hyperedge, then we can also solve it for arbitrary hypergraphs.

The quantitative argument is too weak for arbitrary fusion grammars. Nevertheless, surprisingly, it can be used to prove Theorem \ref{th_main_membership} and Theorem \ref{th_main_non-emptiness} after modifying a given fusion grammar in an appropriate way. Namely, given a fusion grammar, we shall encode the information about the structure of its start hypergraph in labels of its hyperedges. After this encoding, we shall use the same quantitative argument as for connection-preserving grammars, and it will work successfully because, in the modified grammar, the fact that the number of $A$-labeled hyperedges equals the number of $\overline{A}$-labeled hyperedges for $A \in F$ implies that the hypergraph obtained after fusions has a certain structure. 


\subsection{Membership Problem For Hypergraphs with One Hyperedge}\label{ssec_results_mem1}

We start proving Theorem \ref{th_main_membership}. First, let us consider the variant of the membership problem where the input is a fusion grammar without markers and connectors $FG$ and a hypergraph $H_0$ that has just one hyperedge.

\begin{problem}[MEM-1]\label{problem_mem1}
	\leavevmode
	\\
	\textit{Input.} 
	\begin{enumerate}
		\item A fusion grammar without markers and connectors $FG = (Z,F,T)$; 
		\item A hypergraph $H_0$ such that $V_{H_0} = \{v_0^1,\dotsc,v_0^n\}$; $E_{H_0} = \{e_0\}$; $lab_{H_0}(e_0) = a_0 \in T$; $H_0$ is connected (equivalently, it does not contain isolated vertices).
	\end{enumerate}
	Let $N = \type(e_0)$; clearly, $N \ge n$.
	\\
	\textit{Problem MEM-1.} Does $H_0$ belong to $L(FG)$?
\end{problem}

\begin{theorem}\label{th_mem1}
	The problem MEM-1 is decidable.
\end{theorem}

To prove it, we need a number of auxiliary definitions. 
\begin{definition}
	The set $\Colors$ of \emph{colours} is $\{v_0^1,\dotsc,v_0^n,w\}$ where $w$ is a new element called the \emph{white colour}.
\end{definition}
\begin{definition}
	A \emph{coloured hypergraph} is a pair $H^\prime = (H,col)$ where $H$ is a hypergraph and $col:V_{H} \to \Colors$ is \emph{a colouring function}.
	
	The colouring function of a coloured hypergraph $H^\prime$ is also denoted by $col_{H^\prime}$.
\end{definition}
\begin{example}\label{example_colouring_Zex0}
	Suppose that $H_0 = \vcenter{\hbox{{\tikz[baseline=.1ex]{
					\node[vertex] (VL) at (-0.5,0) {};
					\node[vertex] (VR) at (0.5,0) {};
					\draw[-latex, thick] (VL) -- node[above] {$a$} (VR);
	}}}}
	$ (where $a_0 = a$) and $FG$ is the fusion grammar $\FGex$ from Example \ref{example_FG}. Then $\Colors = \{v_0^1,v_0^2,w\}$ where $v_0^i$ is the $i$-th attachment vertex of the $a$-labeled hyperedge of $H_0$. Let us call $v_0^1$ the \emph{grey colour} and depict it using grey; let us also call $v_0^2$ the \emph{light grey colour}. 
	
	Let us take $\Zex_0 = \vcenter{\hbox{{\tikz[baseline=.1ex]{
					\node[vertex] (VO) at (0,0) {};
					\node[vertex] (VL) at (-0.5,-0.5) {};
					\node[vertex] (VR) at (0.5,-0.5) {};
					\draw[-latex, thick] (VL) -- node[below] {$a$} (VR);
					\draw[-latex, thick] (VL) -- node[above left] {$A$} (VO);
					\draw[-latex, thick] (VR) -- node[above right] {$C$} (VO);
	}}}}$; let $v_1$ be its leftmost vertex, $v_2$ be its rightmost vertex, and $v_3$ its uppermost one. Let us define the colouring function $\colex_0$ as follows: $\colex_0(v_1) = v_0^1$, $\colex_0(v_2) = v_0^2$, $\colex_0(v_3) = w$. The result is depicted as follows:
	$$\left(\Zex_0,\colex_0\right) = \vcenter{\hbox{{\tikz[baseline=.1ex]{
					\node[vertex,fill={white}] (VO) at (0,0) {};
					\node[vertex,fill={colorA}] (VL) at (-0.6,-0.6) {};
					\node[vertex,fill={colorB}] (VR) at (0.6,-0.6) {};
					\draw[-latex, thick] (VL) -- node[below] {$a$} (VR);
					\draw[-latex, thick] (VL) -- node[above left] {$A$} (VO);
					\draw[-latex, thick] (VR) -- node[above right] {$C$} (VO);
	}}}}
	$$
\end{example}
\begin{definition}
	The set of \emph{vertex colourings} of a hypergraph $H$ is the set $\VC(H) = \{(H,col) \mid col:V_H \to \Colors\}$ of coloured hypergraphs corresponding to $H$. (Clearly, it is finite.)
\end{definition}
\begin{remark}\label{remark_meaning_of_colours}
	Informally, the purpose of using colours is to prevent attachment vertices that are different in $H_0$ from being identified by means of fusions. Namely, assume that $H_0 \in L(FG)$. There is a derivation of the form $Z \Rightarrow m \cdot Z \underset{\fr(P)}{\Rightarrow} H_0 + G$. We want to colour all the vertices in $Z^\prime = m \cdot Z$ in such a way that $att_{Z^\prime}(e_0)(i)$ has the colour $att_{H_0}(e_0)(i) = v_0^j$. A colour of a vertex $v$ in $Z^\prime$ tells one with which attachment vertex of $e_0$ this vertex is identified by $\fr(P)$. So, for example, if we colour one vertex in grey and another one in light grey, then they must not be fused. The white colour is designed to colour vertices that belong to the ``garbage'' hypergraph $G$ after the fusion.

	However, this is not enough. We must also \emph{force} attachment vertices from $Z^\prime$ that are the same in $H_0$ to be identified by fusions. Assume that $att_{H_0}(e_0)(i_1) = att_{H_0}(e_0)(i_2)$ but $v_1 = att_{Z^\prime}(e_0)(i_1) \ne att_{Z^\prime}(e_0)(i_2) = v_2$. Then there must be an evidence $\fr(P)$-path between $v_1$ and $v_2$. We are going to encode the information both about colours and evidence paths in labels of hyperedges.
\end{remark}

\begin{definition}
	A \emph{connection pair $(i_1,i_2)$} is a pair $i_1,i_2 \in \Nat$ such that $1 \le i_1 < i_2 \le N$ (recall that $N = \type(e_0)$) and $att_{H_0}(e_0)(i_1) = att_{H_0}(e_0)(i_2)$.
\end{definition}
\begin{definition}
	Let us define a new fusion alphabet $\FCol$ as follows: $A \in \FCol$ if and only if $A = (\sigma,f,\tau)$ where 
	\begin{itemize}
		\item $\sigma \in \Sigma \setminus \overline{F}$ (let $\type(\sigma) = r$);
		\item $f:[r] \to \Colors$ is a function mapping elements of $\{1,\dotsc,r\}$ to colours;
		\item $\tau$ is the set of all triples $(i_1,i_2,j)$ such that $j \in [r]$, $(i_1,i_2)$ is a connection pair, and $f(j) = att_{H_0}(e_0)(i_1) = att_{H_0}(e_0)(i_2)$.
	\end{itemize}
	The type of $A$ is defined as $\type(A) = \type(\sigma)$.
\end{definition}
{\color{comment}
	We are going to replace labels of vertex colourings of the connected components of the start hypergraph $Z$ by the new ones. These new fusion labels contain threefold information. If a hyperedge $e$ is labeled by $A = (\sigma,f,\tau)$, then this is intended to mean that
	\begin{enumerate}
		\item the old label of $e$ is $\sigma$;
		\item the colour of the $i$-th attachment vertex of $e$ is $f(i)$;
		\item $(i_1,i_2,j) \in \tau$ means that the hyperedge $e$ and its $j$-th attachment vertex are a part of the evidence path from the $i_1$-th attachment vertex of $e_0$ to the $i_2$-th one (consequently, these vertices must be of the same colour).
	\end{enumerate}
}

Clearly, the alphabet $\FCol$ is finite since the number of functions $f:[r] \to \Colors$ is finite as well as the number of triples described above. Hereinafter we assume without loss of generality that $\Sigma$ and $\FCol$ are disjoint. 

\begin{definition}
	Let $\SigmaCol = \{a_0\} \uplus \FCol \uplus \overline{\FCol}$ where $\overline{\FCol} = \{\overline{B} \mid B \in \FCol\}$ is the new label alphabet.
\end{definition}

\begin{remark}
	Hereinafter we identify $\overline{(A,f,\tau)}$ with the triple $(\overline{A},f,\tau)$ for $A \in F$, $(A,f,\tau) \in \FCol$. Note that $\FCol$ also contains triples of the form $(a,f,\tau)$ for $a \in T$. In such a case, $\overline{(a,f,\tau)}$ is just a new element, which is not equal to $(\overline{a},f,\tau)$, because there is no $\overline{a}$.
\end{remark}

\begin{definition}
	A label $(\sigma,f,\tau) \in \SigmaCol$ is called \emph{white} if $f(i)=w$ for all $i = 1,\dotsc, \type(\sigma)$ and $\tau = \emptyset$. The set of white labels is denoted by $\mathfrak{W}$.
\end{definition}
Informally, white labels are expected to be placed on hyperedges of the ``garbage'' hypergraph $G$ which appears in a derivation of $H_0$.

\begin{definition}
	Let $H$ be a hypergraph and let $v \in V_H$. Then the \emph{incidence set} $\Inc_H(v)$ equals \\ $\{(e,j) \mid e \in E_H, j \in [\type(e)], att_H(e)(j)=v \}$.
\end{definition}
\noindent
{\color{comment}
	The incidence set contains information about all the hyperedges attached to $v$.
}

\begin{definition}
	Let $H \in \mathcal{H}\left(\SigmaCol\right)$, $v \in V_H$, and let $(i_1,i_2)$ be a connection pair. Then 
	$$\Cond_H(v,i_1,i_2) = \Inc_H(v) \cap \{(e,j) \mid lab_H(e) = (\sigma,f,\tau) \mbox{ and } (i_1,i_2,j) \in \tau\}$$
	is a \emph{conduction set}.
\end{definition}
\noindent
{\color{comment}
	Roughly speaking, the conduction set $\Cond_H(v,i_1,i_2)$ is supposed to correspond to the set of hyperedges attached to $v$ that belong to an evidence path from $att_{Z^\prime}(e_0)(i_1)$ to $att_{Z^\prime}(e_0)(i_2)$ together with $v$.
}

\begin{definition}\label{def_EC1}
	Let $C \in \mathcal{C}(Z)$ and let $C^\prime \in \VC(C)$ be a coloured hypergraph. The set $\EC(C^\prime)$ of its \emph{hyperedge encodings} consists of hypergraphs $C^{\prime\prime} \in \mathcal{H}\left(\FCol \cup \overline{\FCol}\right)$ such that
	\begin{itemize}
		\item $V_{C^{\prime\prime}} = V_C$; \quad $E_{C^{\prime\prime}} = E_C$; \quad $att_{C^{\prime\prime}} = att_C$;
		\item $lab_{C^{\prime\prime}}(e) = (\sigma,f,\tau)$ with the following conditions being satisfied:
		\begin{enumerate}
			\item (labeling condition) $\sigma = lab_C(e)$;
			\item (colouring condition) $f(i) = col_{C^\prime}(att_{C}(e)(i))$;
			\item (cardinality condition) for any $v \in V_{C^{\prime\prime}}$ and for any connection pair $(i_1,i_2)$ the cardinality of $\Cond_{C^{\prime\prime}}(v,i_1,i_2)$ is either 0 or 2.
		\end{enumerate}
	\end{itemize}
\end{definition}
\noindent
{\color{comment}
	The intended meaning of the cardinality condition is that an evidence path from $att_{Z^\prime}(e_0)(i_1)$ to $att_{Z^\prime}(e_0)(i_2)$ either passes through a vertex $v \in V_{C^{\prime\prime}}$ one time (then the cardinality of $\Cond_{C^{\prime\prime}}(v,i_1,i_2)$ is 2) or it does not pass through it (then $\vert \Cond_{C^{\prime\prime}}(v,i_1,i_2) \vert = 0$).
}

\begin{example}\label{example_colouring_Zex12}
	Let us take the hypergraphs $\Zex_1$, $\Zex_2$ from Example \ref{example_FG} and colour them as follows:
	$$
	(\Zex_1,\colex_1) = \;
	\vcenter{\hbox{{\tikz[baseline=.1ex]{
					\node[vertex, fill={colorA}] (V1) at (0,0) {};
					\node[vertex, fill={white}] (V2) at (0,0.8) {};
					\node[vertex, fill={white}] (V3) at (0,1.6) {};
					\draw[-latex, thick] (V1) -- node[right] {$\overline{A}$} (V2);
					\draw[-latex, thick] (V2) -- node[right] {$B$} (V3);
	}}}}
	\qquad
	\qquad
	(\Zex_2,\colex_2) = \;
	\vcenter{\hbox{{\tikz[baseline=.1ex]{
					\node[vertex, fill={colorB}] (V1) at (0,0) {};
					\node[vertex, fill={white}] (V2) at (0,0.8) {};
					\node[vertex, fill={white}] (V3) at (0,1.6) {};
					\draw[-latex, thick] (V1) -- node[right] {$\overline{C}$} (V2);
					\draw[-latex, thick] (V2) -- node[right] {$B$} (V3);
	}}}}
	$$
	Then the hypergraphs
	$$
	\vcenter{\hbox{{\tikz[baseline=.1ex]{
					\node[vertex] (V1) at (0,0) {};
					\node[vertex] (V2) at (0,0.8) {};
					\node[vertex] (V3) at (0,1.6) {};
					\draw[-latex, thick] (V1) -- node[right] {$(\overline{A},f,\emptyset)$} (V2);
					\draw[-latex, thick] (V2) -- node[right] {$(B,g,\emptyset)$} (V3);
	}}}}
	\qquad \mbox{and} \qquad
	\vcenter{\hbox{{\tikz[baseline=.1ex]{
					\node[vertex] (V1) at (0,0) {};
					\node[vertex] (V2) at (0,0.8) {};
					\node[vertex] (V3) at (0,1.6) {};
					\draw[-latex, thick] (V1) -- node[right] {$(\overline{C},h,\emptyset)$} (V2);
					\draw[-latex, thick] (V2) -- node[right] {$(B,g,\emptyset)$} (V3);
	}}}}
	$$
	belong to $\EC(\Zex_1,\colex_1)$ and $\EC(\Zex_2,\colex_2)$ resp. where $f,g,h$ are defined as follows: $f(1) = v_0^1$ (the grey colour), $h(1) = v_0^2$ (the light grey colour), $f(2) = h(2) = g(1) = g(2) = w$ (the white colour).
\end{example}

\begin{example}
	Since our running example is $H_0 = \vcenter{\hbox{{\tikz[baseline=.1ex]{
					\node[vertex] (VL) at (-0.5,0) {};
					\node[vertex] (VR) at (0.5,0) {};
					\draw[-latex, thick] (VL) -- node[above] {$a$} (VR);
	}}}}$, there exist no connection pairs. Indeed, only $(1,2)$ might be a connection pair, but $att_{H_0}(e_0)(1) \ne att_{H_0}(e_0)(2)$. Therefore, all labels from $\FCol$ are of the form $(\sigma,f,\emptyset)$, and the third component is redundant.
\end{example}

\begin{definition}\label{def_base_hg}
	Let $C \in \mathcal{C}(Z)$ be a hypergraph with a hyperedge $e_0^\prime \in E_C$ such that
	\begin{enumerate}
		\item $lab_C(e_0^\prime) = a_0$;
		\item $att_C(e_0^\prime)(i_1) = att_C(e_0^\prime)(i_2)$ implies $att_{H_0}(e_0)(i_1) = att_{H_0}(e_0)(i_2)$.
	\end{enumerate}
	Then $(C,e_0^\prime)$ is a \emph{base hypergraph}. The set of base hypergraphs is denoted by $\mathcal{C}^b(Z)$.
\end{definition}

{\color{comment}
	Let $Z \Rightarrow m \cdot Z \underset{\fr(P)}{\Rightarrow} H_0 + G$ be a derivation of $H_0$. Clearly, there must be a connected component $C$ in $m \cdot Z$ that contains the $a_0$-labeled hyperedge which, after fusions, coincides with the hyperedge $e_0$ of $H_0$. This component $C$ (along with the distinguished hyperedge corresponding to $e_0$) is what we want to call a base hypergraph. 
	The second condition of Definition \ref{def_base_hg} says that, if some attachment vertices of $e_0^\prime$ coincide in $C$, then so do corresponding vertices of $e_0$ in $H_0$. If this is not the case, then the component $C$ with the hyperedge $e_0^\prime$ cannot be the one from which $H_0$ is obtained after the fusions because attachment vertices of $e_0^\prime$ cannot cease to coincide, i.e.~to be somehow ``unfused''.
}

\begin{definition}\label{def_coloured_base_hg}
	A \emph{coloured base hypergraph} is a triple $C^\prime = (C,e_0^\prime,col)$ where $(C,e_0^\prime)$ is a base hypergraph, $(C,col)$ is a coloured hypergraph, and $col(att_C(e_0^\prime)(i)) = att_{H_0}(e_0)(i)$.
	\\
	The function $col$ of a coloured base hypergraph $C^\prime$ will also be denoted by $col_{C^\prime}$.
\end{definition}

Informally, a coloured base hypergraph is a coloured hypergraph such that its distinguished hyperedge's colouring is consistent with the colours of the hyperedge $e_0$.

\begin{example}
	The triple $\left(\Zex_0,e_0^a,\colex_0\right)$ is a coloured base hypergraph where $\colex_0$ is defined in Example \ref{example_colouring_Zex0}. Indeed, the first attachment vertex $v_1$ of the $a$-labeled hyperedge $e_0^a$ in $\Zex_0$ has the colour $v_0^1$, which is the first attachment vertex of the $a$-labeled hyperedge in $H_0 = \vcenter{\hbox{{\tikz[baseline=.1ex]{
					\node[vertex] (VL) at (-0.5,0) {};
					\node[vertex] (VR) at (0.5,0) {};
					\draw[-latex, thick] (VL) -- node[above] {$a$} (VR);
	}}}}$; similarly, the second attachment vertex $v_2$ of $e_0^a$ has the colour $v_0^2$. 

	Note that, if we colour vertices of $\Zex_0$ according to $\colex_0$ and also colour vertices of $\Zex_1,\Zex_2$ as shown in Example \ref{example_colouring_Zex12}, then, in the derivation (\ref{eq_FG_der_2}), only vertices of the same colour are identified:
	
	\begin{equation*}
		\Zex 
		\;\;\Rightarrow\;\;
		\vcenter{\hbox{{\tikz[baseline=.1ex]{
						\node[vertex,fill={white}] (VO) at (0,0) {};
						\node[vertex,fill={colorA}] (VL) at (-0.6,-0.6) {};
						\node[vertex,fill={colorB}] (VR) at (0.6,-0.6) {};
						\draw[-latex, thick] (VL) -- node[below] {$a$} (VR);
						\draw[-latex, thick] (VL) -- node[above left] {$A$} (VO);
						\draw[-latex, thick] (VR) -- node[above right] {$C$} (VO);
		}}}}
		\;\;+\;\;
		\vcenter{\hbox{{\tikz[baseline=.1ex]{
						\node[vertex,fill={colorA}] (V1) at (0,0) {};
						\node[vertex,fill={white}] (V2) at (0,0.8) {};
						\node[vertex,fill={white}] (V3) at (0,1.6) {};
						\draw[-latex, thick] (V1) -- node[right] {$\overline{A}$} (V2);
						\draw[-latex, thick] (V2) -- node[right] {$B$} (V3);
		}}}}
		\;\;+\;\;
		\vcenter{\hbox{{\tikz[baseline=.1ex]{
						\node[vertex,fill={colorB}] (V1) at (0,0) {};
						\node[vertex,fill={white}] (V2) at (0,0.8) {};
						\node[vertex,fill={white}] (V3) at (0,1.6) {};
						\draw[-latex, thick] (V1) -- node[right] {$\overline{C}$} (V2);
						\draw[-latex, thick] (V2) -- node[right] {$B$} (V3);
		}}}}
		\;\;\Rightarrow\;\;
		\vcenter{\hbox{{\tikz[baseline=.1ex]{
						\node[vertex,fill={white}] (VO) at (0,0) {};
						\node[vertex,fill={colorA}] (VL) at (-0.6,-0.6) {};
						\node[vertex,fill={colorB}] (VR) at (0.6,-0.6) {};
						\node[vertex,fill={white}] (V1) at (0,0.75) {};
						\draw[-latex, thick] (VL) -- node[below] {$a$} (VR);
						\draw[-latex, thick] (VR) -- node[above right] {$C$} (VO);
						\draw[-latex, thick] (VO) -- node[left] {$B$} (V1);
		}}}}
		\;\;+\;\;
		\vcenter{\hbox{{\tikz[baseline=.1ex]{
						\node[vertex,fill={colorB}] (V1) at (0,0) {};
						\node[vertex,fill={white}] (V2) at (0,0.8) {};
						\node[vertex,fill={white}] (V3) at (0,1.6) {};
						\draw[-latex, thick] (V1) -- node[right] {$\overline{C}$} (V2);
						\draw[-latex, thick] (V2) -- node[right] {$B$} (V3);
		}}}}
		\;\;\Rightarrow\;\;
		\vcenter{\hbox{{\tikz[baseline=.1ex]{
						\node[vertex,fill={white}] (VO) at (0,0) {};
						\node[vertex,fill={colorA}] (VL) at (-0.5,-0.5) {};
						\node[vertex,fill={colorB}] (VR) at (0.5,-0.5) {};
						\node[vertex,fill={white}] (V1) at (-0.5,0.7) {};
						\node[vertex,fill={white}] (V2) at (0.5,0.7) {};
						\draw[-latex, thick] (VL) -- node[below] {$a$} (VR);
						\draw[-latex, thick] (VO) -- node[below left] {$B$} (V1);
						\draw[-latex, thick] (VO) -- node[below right] {$B$} (V2);
		}}}}
	\end{equation*}
	Moreover, in the resulting hypergraph, all vertices that are not incident to the $a$-labeled hyperedge are coloured white, as we wanted (see Remark \ref{remark_meaning_of_colours}).
\end{example}


\begin{definition}
	The set $\VC^b(H,e_0^\prime)$ of \emph{vertex colourings} of a base hypergraph $(H,e_0^\prime)$ consists of coloured base hypergraphs of the form $(H,e_0^\prime, col)$.
\end{definition}

\begin{definition}\label{def_EC2}
	Let $C^\prime = (C,e_0^\prime,col_{C^\prime})$ be a coloured base hypergraph. The set $\EC^b(C^\prime)$  of its \emph{hyperedge encodings} consists of hypergraphs $C^{\prime\prime} \in \mathcal{H}(\SigmaCol)$ where
	\begin{itemize}
		\item $V_{C^{\prime\prime}} = V_C$; \qquad $E_{C^{\prime\prime}} = E_C$; \qquad $att_{C^{\prime\prime}} = att_C$;
		\item $lab_{C^{\prime\prime}}(e_0^\prime) = a_0$;
		\item for any $e \in E_{C^{\prime\prime}}$, $e \ne e_0^\prime$ it holds that $lab_{C^{\prime\prime}}(e) = (\sigma,f,\tau)$ with the following conditions being satisfied:
		\begin{enumerate}
			\item (labeling condition) $\sigma = lab_C(e)$;
			\item (colouring condition) $f(i) = col_{C^\prime}(att_{C}(e)(i))$;
			\item (cardinality condition-1) if, given $v \in V_{C}$ and a connection pair $(i_1,i_2)$, it is the case that $att_{C}(e_0^\prime)(i_1)\ne v$ and $att_{C}(e_0^\prime)(i_2) \ne v$, then $\vert \Cond_{C^{\prime\prime}}(v,i_1,i_2) \vert$ equals either $0$ or $2$;
			\item (cardinality condition-2) if, given $v \in V_{C}$ and a connection pair $(i_1,i_2)$, it is the case that either $att_{C}(e_0^\prime)(i_1)=v$ or $att_{C}(e_0^\prime)(i_2)=v$ but not both, then $\vert \Cond_{C^{\prime\prime}}(v,i_1,i_2) \vert = 1$;
			\item (cardinality condition-3) if, given $v \in V_{C}$ and a connection pair $(i_1,i_2)$, it is the case that $att_{C}(e_0^\prime)(i_1)=att_{C}(e_0^\prime)(i_2)= v$, then $\vert \Cond_{C^{\prime\prime}}(v,i_1,i_2) \vert = 0$.
		\end{enumerate}
	\end{itemize}
\end{definition}

{\color{comment}
	The cardinality condition is different from that in Definition \ref{def_EC1}. Recall Remark \ref{remark_meaning_of_colours}: we want to say that the evidence path from $v_1$ to $v_2$ starts at $v_1$ and thus contains exactly one hyperedge attached to $v_1$; similarly, it finishes at $v_2$ and hence contains exactly one hyperedge attached to $v_2$. This motivates the cardinality condition-2. For any other vertex, the evidence path passes through it at most once, as in Definition \ref{def_EC1}, so $\vert \Cond_{C^{\prime\prime}}(v,i_1,i_2) \vert$ equals either $0$ or $2$.
}

\begin{example}
	Consider the coloured base hypergraph $\left(\Zex_0,e_0^a,\colex_0\right)$ defined in Example \ref{example_colouring_Zex0}. Then the set $\EC^b(\Zex_0,e_0^a,\colex_0)$ contains the hypergraph
	$$
	\vcenter{\hbox{{\tikz[baseline=.1ex]{
					\node[vertex] (VO) at (0,0) {};
					\node[vertex] (VL) at (-0.6,-0.6) {};
					\node[vertex] (VR) at (0.6,-0.6) {};
					\draw[-latex, thick] (VL) -- node[below] {$a$} (VR);
					\draw[-latex, thick] (VL) -- node[above left] {$(A,f,\emptyset)$} (VO);
					\draw[-latex, thick] (VR) -- node[above right] {$(C,h,\emptyset)$} (VO);
	}}}}
	$$
	The functions $f,h$ are the same as those from Example \ref{example_colouring_Zex12}: $f(1) = v_0^1$ (grey), $h(1) = v_0^2$ (light grey), $f(2) = h(2) = w$ (white). So, for example, the label $(A,f,\emptyset)$ says that the first attachment vertex of its hyperedge has the colour $f(1) = v_0^1$, i.e.~is grey. 
	
	Note that the derivation (\ref{eq_FG_der_2}) remains valid if we replace $\Zex_i$ by hyperedge encodings of their vertex colourings as follows:
	\begin{equation*}
		\begin{aligned}
		\Zex 
		& \Rightarrow\quad
		\vcenter{\hbox{{\tikz[baseline=.1ex]{
						\node[vertex] (VO) at (0,0) {};
						\node[vertex] (VL) at (-0.6,-0.6) {};
						\node[vertex] (VR) at (0.6,-0.6) {};
						\draw[-latex, thick] (VL) -- node[below] {$a$} (VR);
						\draw[-latex, thick] (VL) -- node[above left] {$(A,f,\emptyset)$} (VO);
						\draw[-latex, thick] (VR) -- node[above right] {$(C,h,\emptyset)$} (VO);
		}}}}
		\quad+\quad
		\vcenter{\hbox{{\tikz[baseline=.1ex]{
						\node[vertex] (V1) at (0,0) {};
						\node[vertex] (V2) at (0,0.8) {};
						\node[vertex] (V3) at (0,1.6) {};
						\draw[-latex, thick] (V1) -- node[right] {$(\overline{A},f,\emptyset)$} (V2);
						\draw[-latex, thick] (V2) -- node[right] {$(B,g,\emptyset)$} (V3);
		}}}}
		\quad+\quad
		\vcenter{\hbox{{\tikz[baseline=.1ex]{
						\node[vertex] (V1) at (0,0) {};
						\node[vertex] (V2) at (0,0.8) {};
						\node[vertex] (V3) at (0,1.6) {};
						\draw[-latex, thick] (V1) -- node[right] {$(\overline{C},h,\emptyset)$} (V2);
						\draw[-latex, thick] (V2) -- node[right] {$(B,g,\emptyset)$} (V3);
		}}}}
		\quad\Rightarrow
		\\
		& \Rightarrow\quad
		\vcenter{\hbox{{\tikz[baseline=.1ex]{
						\node[vertex] (VO) at (0,0) {};
						\node[vertex] (VL) at (-0.6,-0.6) {};
						\node[vertex] (VR) at (0.6,-0.6) {};
						\node[vertex] (V1) at (0,1) {};
						\draw[-latex, thick] (VL) -- node[below] {$a$} (VR);
						\draw[-latex, thick] (VR) -- node[above right] {$(C,h,\emptyset)$} (VO);
						\draw[-latex, thick] (VO) -- node[left] {$(B,g,\emptyset)$} (V1);
		}}}}
		\quad+\quad
		\vcenter{\hbox{{\tikz[baseline=.1ex]{
						\node[vertex] (V1) at (0,0) {};
						\node[vertex] (V2) at (0,0.8) {};
						\node[vertex] (V3) at (0,1.6) {};
						\draw[-latex, thick] (V1) -- node[right] {$(\overline{C},h,\emptyset)$} (V2);
						\draw[-latex, thick] (V2) -- node[right] {$(B,g,\emptyset)$} (V3);
		}}}}
		\quad\Rightarrow\quad
		\vcenter{\hbox{{\tikz[baseline=.1ex]{
						\node[vertex] (VO) at (0,-0.1) {};
						\node[vertex] (VL) at (-0.6,-0.5) {};
						\node[vertex] (VR) at (0.6,-0.5) {};
						\node[vertex] (V1) at (-0.3,0.7) {};
						\node[vertex] (V2) at (0.3,0.7) {};
						\draw[-latex, thick] (VL) -- node[below] {$a$} (VR);
						\draw[-latex, thick] (VO) -- node[left] {$(B,g,\emptyset)$} (V1);
						\draw[-latex, thick] (VO) -- node[right] {$(B,g,\emptyset)$} (V2);
		}}}}
		\end{aligned}
	\end{equation*}
	The hypergraphs used in this derivation are from Example \ref{example_colouring_Zex12}. The label $(B,g,\emptyset)$ is white, because $g(1) = g(2) = w$.
\end{example}

\begin{example}
	In the above examples starting from Example \ref{example_colouring_Zex0}, we have been considering the hypergraph $\vcenter{\hbox{{\tikz[baseline=.1ex]{
					\node[vertex] (VL) at (-0.5,0) {};
					\node[vertex] (VR) at (0.5,0) {};
					\draw[-latex, thick] (VL) -- node[above] {$a$} (VR);
	}}}}$ as $H_0$. Now, let us change this assumption and take $H_0$ to be $\vcenter{\hbox{{\tikz[baseline=.1ex]{
					\node[vertex] (V) at (0,0) {};
					\draw[-latex, thick] (V) to[out=-15,in=45,looseness=30] (V);
					\node (V) at (0.3,-0.3) {$a$};
	}}}}
	$. Then, $V_{H_0} = \{v_0^1\}$, $E_{H_0} = \{e_0\}$, $att_{H_0}(e_0) = v_0^1v_0^1$; therefore $\Colors = \{v_0^1,w\}$. Let us call $v_0^1$ the \emph{green colour}.\footnote{In the black and white version of this article, green is displayed as grey in the below examples.} Given this $H_0$, the pair $(1,2)$ is a connection pair because $att_{H_0}(e_0)(1) = att_{H_0}(e_0)(2)$.

	Below we show examples of vertex colourings of $\Zex_0,\Zex_3,\Zex_4$ using new colours (white and green) and examples of their hyperedge encodings $\Cex_0,\Cex_3,\Cex_4$.
	\begin{itemize}
		\item 
		$(\Zex_0,\widetilde{\mathit{col}}^{(\mathit{ex})}_0) = 
		\vcenter{\hbox{{\tikz[baseline=.1ex]{
						\node[vertex,fill={white}] (VO) at (0,0) {};
						\node[vertex,fill={green}] (VL) at (-0.7,-0.7) {};
						\node[vertex,fill={green}] (VR) at (0.7,-0.7) {};
						\draw[-latex, thick] (VL) -- node[below] {$a$} (VR);
						\draw[-latex, thick] (VL) -- node[above left] {$A$} (VO);
						\draw[-latex, thick] (VR) -- node[above right] {$C$} (VO);
		}}}}
		\qquad\;\,
		\Cex_0 =
		\vcenter{\hbox{{\tikz[baseline=.1ex]{
						\node[vertex] (VO) at (0,0) {};
						\node[vertex] (VL) at (-0.7,-0.7) {};
						\node[vertex] (VR) at (0.7,-0.7) {};
						\draw[-latex, thick] (VL) -- node[below] {$a$} (VR);
						\draw[-latex, thick] (VL) -- node[above left] {$(A,f_1,\tau_1)$} (VO);
						\draw[-latex, thick] (VR) -- node[above right] {$(C,f_6,\tau_6)$} (VO);
		}}}}
		$
		\item 
		$(\Zex_3,\colex_3) = 
		\vcenter{\hbox{{\tikz[baseline=.1ex]{
						\node[vertex,fill={green}] (V1) at (0,0) {};
						\node[vertex,fill={white}] (V2) at (0,0.7) {};
						\draw[-latex, thick] (V1) to[bend left=45] node[left] {$\overline{A}$} (V2);
						\draw[-latex, thick] (V2) to[bend left=45] node[right] {$B$} (V1);
		}}}}
		\qquad\qquad
		\Cex_3 = 
		\vcenter{\hbox{{\tikz[baseline=.1ex]{
						\node[vertex] (V1) at (0,0) {};
						\node[vertex] (V2) at (0,0.7) {};
						\draw[-latex, thick] (V1) to[bend left=45] node[left] {$(\overline{A},f_2,\tau_2)$} (V2);
						\draw[-latex, thick] (V2) to[bend left=45] node[right] {$(B,f_3,\tau_3)$} (V1);
		}}}}
		$
		\item 
		$(\Zex_4,\colex_4) = 
		\vcenter{\hbox{{\tikz[baseline=.1ex]{
						\node[vertex,fill={green}] (V1) at (0,0) {};
						\node[vertex,fill={white}] (V2) at (0,0.7) {};
						\draw[latex-, thick] (V1) to[bend left=45] node[left] {$\overline{B}$} (V2);
						\draw[latex-, thick] (V2) to[bend left=45] node[right] {$\overline{C}$} (V1);
		}}}}
		\qquad\qquad
		\Cex_4 = 
		\vcenter{\hbox{{\tikz[baseline=.1ex]{
						\node[vertex] (V1) at (0,0) {};
						\node[vertex] (V2) at (0,0.7) {};
						\draw[latex-, thick] (V1) to[bend left=45] node[left] {$(\overline{B},f_4,\tau_4)$} (V2);
						\draw[latex-, thick] (V2) to[bend left=45] node[right] {$(\overline{C},f_5,\tau_5)$} (V1);
		}}}}
		$
	\end{itemize}
	Here, for $i=1,2,5,6$, $f_i(1) = v_0^1$ (green), $f_i(2) = w$ (white), and $\tau_i = \{(1,2,1)\}$; for $j=3,4$, $f_j(1) = w$, $f_j(2) = v_0^1$, and $\tau_j = \{(1,2,2)\}$.
	For example, $(1,2,1) \in \tau_2$ indicates that the first attachment vertex of $e_3^{\overline{A}}$ is a part of an evidence path from $att_{\Zex_0}(e_0^a)(1)$ to $att_{\Zex_0}(e_0^a)(2)$. It holds that $\Cex_3 \in \EC(\Zex_3,\colex_3)$, $\Cex_4 \in \EC(\Zex_4,\colex_4)$, $\Cex_0 \in \EC^b(\Zex_0,e_0^a,\widetilde{\mathit{col}}^{(\mathit{ex})}_0)$.
	The labeling and the colouring conditions can be checked straightforwardly. Let us prove that the cardinality condition holds e.g.~for $\Cex_3$. Let $v_1^\prime$ be the lowermost vertex of this hypergraph and let $v_2^\prime$ be the uppermost one; recall also that $e_3^{\overline{A}}$ is its leftmost hyperedge and $e_3^B$ is its rightmost one. Then 
	\begin{itemize}
		\item $\Inc_{\Cex_3}(v_1^\prime) = \{(e_3^{\overline{A}},1), (e_3^B,2)\}$ and $\Inc_{\Cex_3}(v_2^\prime) = \{(e_3^{\overline{A}},2), (e_3^B,1)\}$;
		\item Since $(1,2,1) \in \tau_2$ and $(1,2,2) \in \tau_3$, it holds that 
		$$
		\left\{(e,j) \mid lab_{\Cex_3}(e) = (\sigma,f,\tau) \mbox{ and } (1,2,j) \in \tau\right\}
		=
		\{(e_3^{\overline{A}},1), (e_3^B,2)\}
		$$
		and, consequently, $\Cond_{\Cex_3}(v_1^\prime) = \Inc_{\Cex_3}(v_1^\prime) = \{(e_3^{\overline{A}},1), (e_3^B,2)\}$. On the other hand, $\Cond_{\Cex_3}(v_2^\prime) = \emptyset$. So, $\vert \Cond_{\Cex_3}(v_1^\prime) \vert = 2$ and $\vert \Cond_{\Cex_3}(v_2^\prime) \vert = 0$, as required. 
	\end{itemize}
	Note that $\overline{(A,f_1,\tau_1)} = (\overline{A}, f_1,\tau_1) = (\overline{A}, f_2,\tau_2)$; $\overline{(B,f_3,\tau_3)} = (\overline{B}, f_4,\tau_4)$; $\overline{(\overline{C},f_5,\tau_5)} = (C, f_6,\tau_6)$. The fusion scheme from Example \ref{example_der_evidence_path} remains valid if we replace $\Zex_i$ by $\Cex_i$:
	$$
	\vcenter{\hbox{{\tikz[baseline=.1ex]{
					\node[vertex] (VO) at (0,0) {};
					\node[vertex] (VL) at (-1,-0.7) {};
					\node[vertex] (VR) at (1,-0.7) {};
					\draw[-latex, thick] (VL) -- node[below] {$a$} (VR);
					\draw[-latex, thick] (VL) -- (VO);
					\draw[-latex, thick] (VR) -- (VO);
					\node[] (Lb1) at (-1.17,-0.12) {$(A,f_1,\tau_1)$};
					\node[] (Lb6) at (1.17,-0.12) {$(C,f_6,\tau_6)$};
					\node[vertex] (VL1) at (-0.8,1.2-0.17) {};
					\node[vertex] (VL2) at (-2.2,1.4-0.17) {};
					\draw[latex-, thick] (VL1) to[bend left=30] (VL2);
					\draw[latex-, thick] (VL2) to[bend left=30] (VL1);
					\node[vertex] (VR1) at (0.8,1.2-0.17) {};
					\node[vertex] (VR2) at (2.2,1.4-0.17) {};
					\draw[-latex, thick] (VR1) to[bend left=30] (VR2);
					\draw[-latex, thick] (VR2) to[bend left=30] (VR1);
					\node[] (Lb2) at (-1.6,0.6) {$(\overline{A},f_2,\tau_2)$};
					\node[] (Lb3) at (-0.95,1.6) {$(B,f_3,\tau_3)$};
					\node[] (Lb5) at (1.6,0.6) {$(\overline{C},f_5,\tau_5)$};
					\node[] (Lb4) at (0.95,1.62) {$(\overline{B},f_4,\tau_4)$};
					\draw[-, comment, thick, dashed] (VL) to[bend left=60] (VL2);
					\draw[-, comment, dashed] (VO) to[bend left=50] (VL1);
					\draw[-, comment, thick, densely dotted] (VL2) to[bend left=55] (VR2);
					\draw[-, comment, densely dotted] (VL1) to[bend left=50] (VR1);
					\draw[-, comment, thick, dashdotted] (VR2) to[bend left=60] (VR);
					\draw[-, comment, dashdotted] (VR1) to[bend left=50] (VO);
	}}}}
	\quad \underset{\fr(\Pex)}{\Rightarrow} \quad
	\vcenter{\hbox{{\tikz[baseline=.1ex]{
					\node[vertex] (VO) at (0,0) {};
					\node[vertex] (V) at (0,-0.8) {};
					\draw[-latex, thick] (V) to[out=-15,in=45,looseness=30] (V);
					\node (V) at (0.3,-1.1) {$a$};
	}}}}
	$$
	The triples $\tau_i$ ($i=1,\dotsc,6$) contain information about the evidence path from the leftmost attachment vertex of $e_0^a$ to its rightmost one (presented in Example \ref{example_der_evidence_path}).
\end{example}

{\color{comment}
	We have presented all necessary encodings. We have introduced colours corresponding to vertices of the target hypergraph $H_0$ and also evidence paths. We have encoded information about colours and about evidence paths in hyperedge labels. A new alphabet of fusion labels has been denoted by $\FCol$, and $\SigmaCol = \FCol \cup \overline{\FCol} \cup \{a_0\}$. 
	
	Now, finally, we are ready to present the main construction. Namely, given an input $(FG,H_0)$ of the problem MEM-1, we effectively construct a semilinear set which contains $\vec{0}$ if and only if $H_0 \in L(FG)$. 
}

\begin{definition}\label{def_beta}
	Let $\FCol \setminus \mathfrak{W} = \{A_1,\dotsc,A_d\}$ be the set of all \emph{non-white} fusion labels from $\FCol$. Given a hypergraph $H \in \mathcal{H}(\SigmaCol)$, we define $\beta(H) \in \Nat^{d}$ as follows: 
	$$\beta(H) = \left(\#_{A_1}(H)-\#_{\overline{A_1}}(H),\dotsc,\#_{A_d}(H)-\#_{\overline{A_d}}(H)\right).$$
\end{definition}

\begin{definition}\label{def_mem}
	Let $S,S^b$ be the following sets:
	\begin{equation*}
		\begin{aligned}
			S & = \bigcup\limits_{C \in \mathcal{C}(Z)}\; \bigcup\limits_{C^\prime \in \VC(C)} \; \bigcup\limits_{C^{\prime\prime} \in \EC(C^\prime)} \{C^{\prime\prime}\};
			\\
			S^b &= \bigcup\limits_{C \in \mathcal{C}^b(Z)}\; \bigcup\limits_{C^\prime \in \VC^b(C)} \; \bigcup\limits_{C^{\prime\prime} \in \EC^b(C^\prime)} \{C^{\prime\prime}\}.
		\end{aligned}
	\end{equation*}
	We define the semilinear set $\Mem(FG,H_0)$ as follows:
	\begin{equation*}
		\Mem(FG,H_0) \eqdef 
		\bigcup\limits_{B \in S^b} \left\{\beta(B)+\sum_{C \in S} k(C) \beta(C) \middle\vert k(C) \in \Nat \right\}.
	\end{equation*}
\end{definition}

{\color{comment}
	The set $S$ consists of hyperedge encodings of vertex colourings of connected components from $Z$; similarly, $S^b$ consists of hyperedge encodings of vertex colourings of base hypergraphs. Then, $\Mem(FG,H_0)$ consists of the sums of $\beta$-images of hypergraphs from $S$ and $S^b$ such that there is exactly one $\beta$-image of a hypergraph from $S^b$.
}

\begin{lemma}\label{lemma_mem1_1}
	If $H_0$ belongs to $L(FG)$, then $\vec{0}$ belongs to $\Mem(FG,H_0)$. 
\end{lemma}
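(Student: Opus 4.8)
The plan is to take a derivation witnessing $H_0\in L(FG)$, colour and relabel its start hypergraph so that it becomes a disjoint union of one hyperedge colouring of a coloured base hypergraph together with hyperedge colourings of coloured connected components of $Z$, and then to recognise $\vec 0$ as the $\beta$-image of that union. First I would apply the definition of $L(FG)$ together with Proposition~\ref{prop_nf} to fix a most parallelized derivation $Z \Rightarrow Z' \underset{\fr(P)}{\Rightarrow} W$ with $Z'=m\cdot Z$, where some connected component of $W$ is isomorphic to $H_0$; identifying that component with $H_0$, the hyperedge $e_0$ is the image of some $e_0'\in E_{Z'}$ with $lab_{Z'}(e_0')=a_0$ and $att_{H_0}(e_0)(i)=[att_{Z'}(e_0')(i)]_{\equiv_P}$. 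Let $C\in\mathcal C(Z')$ be the component containing $e_0'$. Since $\fr(P)$ only ever merges vertices, $att_C(e_0')(i_1)=att_C(e_0')(i_2)$ forces $att_{H_0}(e_0)(i_1)=att_{H_0}(e_0)(i_2)$, so $(C,e_0')$ is a base hypergraph (isomorphic to one in $\mathcal C^b(Z)$). Colour $Z'$ by setting $col(v)=att_{H_0}(e_0)(j)$ whenever $v\equiv_P att_{Z'}(e_0')(j)$ for some $j$, and $col(v)=w$ otherwise; this is well defined by transitivity of $\equiv_P$, is constant on $\equiv_P$-classes, and makes $col(att_C(e_0')(i))=att_{H_0}(e_0)(i)$. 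For each connection pair $(i_1,i_2)$, since $att_{Z'}(e_0')(i_1)\equiv_P att_{Z'}(e_0')(i_2)$, Proposition~\ref{prop_evidence_path} supplies an evidence path $\pi(i_1,i_2)$; fix one per connection pair. Now relabel every $e\ne e_0'$ of $Z'$ by $(lab_{Z'}(e),f_e,\tau_e)$ where $f_e(i)=col(att_{Z'}(e)(i))$ and $(i_1,i_2,j)\in\tau_e$ exactly when $(e,j)$ occurs in the evidence set of $\pi(i_1,i_2)$; keep $lab(e_0')=a_0$. One checks routinely that these triples are legal $\SigmaCol$-labels, and calls the resulting hypergraph $\widehat{Z'}$.

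Secondly I would verify that $\widehat{Z'}=C''+\sum_{D\in\mathcal C(Z')\setminus\{C\}}D''$, where $C''\in\EC^b(C')$ for $C'=(C,e_0',col \restriction_{V_C})$ and each $D''\in\EC\big((D,col \restriction_{V_D})\big)$. The labeling and colouring conditions and the identity $col(att_C(e_0')(i))=att_{H_0}(e_0)(i)$ hold by construction, and the base-hypergraph condition was noted above. The cardinality conditions are the crux: since the vertices of an evidence path are pairwise distinct, $\pi(i_1,i_2)$ meets any vertex $v$ at most once, and then contributes to $\Cond_{\widehat{Z'}}(v,i_1,i_2)$ exactly the pair(s) incident to $v$ immediately before and after $v$ along the path — two for an interior vertex, one for an endpoint $att_{Z'}(e_0')(i_1)$ or $att_{Z'}(e_0')(i_2)$, and none when the path is trivial (its two endpoints coincide). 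As both endpoints lie in $C$, every vertex of a component $D\ne C$ is missed or interior, giving cardinality $0$ or $2$ there; inside $C$ the three cardinality clauses of Definition~\ref{def_EC2} correspond exactly to $v$ missed/interior, a genuine endpoint, and a collapsed endpoint. Hence $C''\in S^b$ and $D''\in S$, up to the harmless identification of components of $m\cdot Z$ with components of $Z$.

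Thirdly I would show $\beta(\widehat{Z'})=\vec 0$. Two facts suffice. (i) A terminal-labeled hyperedge $e\ne e_0'$ of $Z'$ is attached only to white vertices — otherwise $e$, which survives $\fr(P)$, would end up attached to a vertex of $H_0$ and would therefore have to be the unique hyperedge $e_0$ of $H_0$, a contradiction — and $\tau_e=\emptyset$ since evidence sets only contain pairs built from fusion-labeled hyperedges of $P$; thus $e$ receives a white label. (ii) Consequently every non-white fusion-labeled hyperedge of $Z'$ lies in some pair of $P$ (same survival argument, or because a non-empty $\tau_e$ already forces $e$ into $P$). For $\{e,\overline e\}\in P$ with $lab_{Z'}(e)=\alpha\in F$, the vertices $att_{Z'}(e)(i)$ and $att_{Z'}(\overline e)(i)$ are $\equiv_P$-identified, so $f_e=f_{\overline e}$; and whenever $(e,j)$ lies in an evidence set, so does $(\overline e,j)$ at the adjacent position, because the two hyperedges of a fusion step are traversed as a block with the same index, so $\tau_e=\tau_{\overline e}$. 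Hence $e\mapsto\overline e$ is a bijection between the $\alpha$-labeled and the $\overline\alpha$-labeled hyperedges of $Z'$ carrying any fixed non-white data, so $\#_{(\alpha,f,\tau)}(\widehat{Z'})=\#_{\overline{(\alpha,f,\tau)}}(\widehat{Z'})$; and a non-white label with terminal first component never occurs on $\widehat{Z'}$ at all, contributing $0-0$. So every coordinate of $\beta(\widehat{Z'})$ vanishes. Finally, additivity of $\beta$ over disjoint union gives $\vec 0=\beta(\widehat{Z'})=\beta(C'')+\sum_D\beta(D'')=\beta(B)+\sum_{E\in S}k(E)\beta(E)$, where $B\in S^b$ is isomorphic to $C''$ and $k(E)$ counts the $D$ with $D''$ isomorphic to $E$; hence $\vec 0\in\Mem(FG,H_0)$.

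The step I expect to be hardest is the bookkeeping behind the cardinality conditions and the identities $f_e=f_{\overline e}$, $\tau_e=\tau_{\overline e}$: both rely on the distinctness of the vertices of an evidence path and on the fact that the two hyperedges glued at a single parallelized-fusion step are traversed together, and one must carefully exclude the degenerate configurations — a hyperedge occurring twice along one evidence path, a path revisiting a vertex, two endpoints colliding — that the distinctness clause of Proposition~\ref{prop_evidence_path} precisely forbids.
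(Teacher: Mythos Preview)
Your proposal is correct and follows essentially the same approach as the paper: take a most parallelized derivation, locate the component containing the $a_0$-hyperedge, colour vertices by their $\equiv_P$-class relative to the attachment vertices of $e_0$, fix one evidence path per connection pair, relabel every hyperedge by the triple $(lab,f,\tau)$ recording colour and evidence-path participation, verify the cardinality conditions from the distinctness of vertices along an evidence path, and conclude $\beta=\vec 0$ by pairing fused hyperedges (same $f$ and $\tau$ across a fusion pair) and noting that surviving hyperedges other than $e_0'$ receive white labels. The organisation and the justifications match the paper's proof almost step for step.
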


\begin{remark}\label{remark_decidability_linear_integer_programming}
	Checking whether $\vec{0} \in \Mem(FG,H_0)$ is decidable. Indeed, this is equivalent to solvability of at least one system of equations 
	\begin{equation}\label{eq_system_mem1}
		\sum_{C \in S} k(C) \beta(C) = -\beta(B)
	\end{equation}
	with $k(C)$ being non-negative integer variables. This system has the form $Ax = b$ where $x \in \Nat^{\vert S \vert}$ consists of the unknown numbers $k(C)$ for $C \in S$, $b \in \mathbb{Z}^d$, and $A \in \mathbb{Z}^{d \times \vert S \vert}$ is a matrix composed of $\beta(C)$ for $C \in S$. Checking solvability of the system (\ref{eq_system_mem1}) is a variant of the linear integer programming problem as noticed in \cite{Lye21}, which is in NP (consult also \cite[p.245]{GareyJ79}). The number of equations (\ref{eq_system_mem1}) equals the number of $B \in S^b$, which is finite.
\end{remark}

\begin{proof}[Proof of Lemma \ref{lemma_mem1_1}]
	There is a most parallelised derivation of the form 
	$$
	Z \underset{m}{\Rightarrow} m \cdot Z \underset{\fr(P)}{\Rightarrow} H_0 + G
	$$
	where $G$ is some hypergraph. Let $Z^\prime = m \cdot Z$. Let $V_{H_0} = \{v_0^1,\dotsc,v_0^n\}$ and $E_{H_0} = \{e_0\}$ (as in the statement of the problem MEM-1). Note that, since $H_0+G$ is the result of the fusion rule $\fr(P)$, each vertex in $H_0+G$ is an equivalence class of vertices from $Z^\prime$ (cf. Section \ref{ssec_fusion_grammars}, item \ref{item_parallelised_fr}).

	The hypergraph $Z^\prime$ must include the hyperedge $e_0 \in E_{H_0}$; let $e_0 \in E_{B_0}$ for $B_0 \in \mathcal{C}(Z^\prime)$. Then $Z^\prime = B_0 + m^\prime \cdot Z$ where $m^\prime(B_0) = m(B_0)-1$ and $m^\prime(C) = m(C)$ otherwise. 
	The pair $(B_0,e_0)$ is a base hypergraph, because $lab_{B_0}(e_0) = a_0$ and $att_{B_0}(e_0)(i_1)=att_{B_0}(e_0)(i_2)$ implies $att_{H_0}(e_0)(i_1)=[att_{B_0}(e_0)(i_1)]_{\equiv_P} = [att_{B_0}(e_0)(i_2)]_{\equiv_P} = att_{H_0}(e_0)(i_2)$.
	
	To prove the lemma, we shall colour each connected component in $Z^\prime$ and also encode information about evidence paths justifying fusions of $e_0$'s vertices. This will result in defining a relabeling function $\chi:E_{Z^\prime} \to \SigmaCol$ such that $\chi(B_0) \in S^b$ is a hyperedge encoding of a coloured base hypergraph and such that, for $C \in \mathcal{C}(Z^\prime), C\ne B_0$, it holds that $\chi(C) \in S$. Finally, we shall show that $\beta(\chi(Z^\prime)) = \vec{0}$, which shall imply that $\vec{0} \in \Mem(FG,H_0)$.
	
	Let us start constructing such a relabeling. Let $(i_1,i_2)$ be a connection pair; this means that $att_{Z^\prime}(e_0)(i_1) \equiv_P att_{Z^\prime}(e_0)(i_2)$. Let us fix an evidence $\fr(P)$-path for each such pair of vertices and the corresponding evidence $\fr(P)$-set; let us denote them by $EvP(i_1,i_2)$ and $EvS(i_1,i_2)$ respectively. 
	
	Let $col_0:V_{Z^\prime} \to \Colors$ be defined as follows: $col_0(v) = v_0^j$ if and only if $[v]_{\equiv_P} = v_0^j$ (for $j \in [n]$); otherwise, $col_0(v) = w$.
	The function $\chi: E_{Z^\prime} \to \SigmaCol$ is defined as follows:
	\begin{itemize}
		\item $\chi(e_0) = a_0$;
		\item for each $e\ne e_0$, let $\chi(e) = (lab_{Z^\prime}(e),f,\tau)$ where
		\begin{itemize}
			\item $f(i) = col_0(att_{Z^\prime}(e)(i))$;
			\item $(i_1,i_2,j) \in \tau$ if and only if $(e,j)$ belongs to the evidence set $EvS(i_1,i_2)$ (i.e.~$(e,j)$ is a member of the fixed evidence path from $att_{Z^\prime}(e_0)(i_1)$ to $att_{Z^\prime}(e_0)(i_2)$).
		\end{itemize}
	\end{itemize}
	It is the case that $(C,col_0{\restriction}_{V_{C}})$ for $C \in \mathcal{C}(Z^\prime)$, $C \ne B_0$ is a coloured hypergraph and that $(B_0,e_0,{col_0}{\restriction}_{V_{B_0}})$ is a coloured base hypergraph. The latter holds since $col_0(att_{B_0}(e_0)(i)) = [att_{B_0}(e_0)(i)]_{\equiv_P} = att_{H_0}(e_0)(i)$, as required by Definition \ref{def_coloured_base_hg}.
	
	We claim that the relabeling $\chi(B_0)$ belongs to $\EC^b(B_0,e_0,col_0{\restriction}_{V_{B_0}})$ and that $\chi(C) \in \EC(C,col_0{\restriction}_{V_{C}})$ for $C \in \mathcal{C}(Z^\prime) \setminus \{B_0\}$. This is proved by comparing the definition of $\chi$ with the conditions from Definitions \ref{def_EC1} and \ref{def_EC2}. The labeling condition and the colouring condition are checked straightforwardly from the definition of $\chi$. Let us check the cardinality condition. Note that for each $C \in \mathcal{C}(Z^\prime)$ and each $v \in V_{C}$ it holds that $\Cond_{\chi(C)}(v,i_1,i_2) = \Cond_{\chi(Z^\prime)}(v,i_1,i_2)$.
	
	The pair $(e,j)$ belongs to the set $\Cond_{\chi(Z^\prime)}(v,i_1,i_2)$ if and only if $att_{\chi(Z^\prime)}(e)(j) = att_{Z^\prime}(e)(j) = v$, $\chi(e) = (\sigma,f,\tau)$, and $(i_1,i_2,j) \in \tau$. Equivalently, it must be the case that $att_{Z^\prime}(e)(j) = v$ and $(e,j) \in EvS(i_1,i_2)$ (by the definition of $\chi$). Let us consider the evidence path $EvP(i_1,i_2)$ and use the notation (\ref{equation_evidence_path}) from Proposition \ref{prop_evidence_path} for it.
	\begin{itemize}
		\item If $v \ne att_{Z^\prime}(e_0)(i_1)$, $v \ne att_{Z^\prime}(e_0)(i_2)$, then either $v$ does not belong to the evidence path $EvP(i_1,i_2)$ (then $\vert \Cond_{\chi(Z^\prime)}(v,i_1,i_2)\vert = 0$) or it occurs there exactly once. In the latter case, $v$ cannot be the beginning or the end of $EvP(i_1,i_2)$ because this vertex is not equal to $att_{Z^\prime}(e_0)(i_1)$ or to $att_{Z^\prime}(e_0)(i_2)$, which are $v_0$ and $v_l$ of the evidence path respectively. Then $v = v_i$ for some $i \in \{1,\dotsc,l-1\}$. The conditions $att_{Z^\prime}(e)(j) = v_i$ and $(e,j) \in EvS(i_1,i_2)$ are satisfied for exactly two pairs, namely, for $(e_i^-,k_i^-)$ and $(e_i^+,k_i^+)$. This justifies that $\vert \Cond_{\chi(Z^\prime)}(v,i_1,i_2)\vert = 2$.
		\item Let $v = att_{Z^\prime}(e_0)(i_1)$ or $v = att_{Z^\prime}(e_0)(i_2)$ but not both. Then either $v = v_0$ or $v = v_l$, and $v_0 \ne v_l$. If $v=v_0$, then the conditions $att_{Z^\prime}(e)(j) = v_0$ and $(e,j) \in EvS(i_1,i_2)$ are satisfied for exactly one pair, namely, for $(e_0^+,k_0^+)$. Consequently, $\vert \Cond_{\chi(Z^\prime)}(v,i_1,i_2)\vert = 1$. The case where $v = v_l$ is dealt with similarly.
		\item If $v = att_{Z^\prime}(e_0)(i_1) = att_{Z^\prime}(e_0)(i_2)$, then $v_0 = v_l$, which, according to the definition of an evidence path (Proposition \ref{prop_evidence_path}), implies that $EvS(i_1,i_2) = \emptyset$ and, as a consequence, that $\vert \Cond_{\chi(Z^\prime)}(v,i_1,i_2)\vert = 0$.
	\end{itemize}
	
	Finally, we define the vector $u_0$ as follows:
	$$
	u_0 = 
	\beta(\chi(Z^\prime))
	=
	\beta(\chi(B_0))+\sum_{C \in \mathcal{C}(Z^\prime)\setminus \{B_0\}} \beta(\chi(C)).
	$$
	Clearly, $u_0 \in \Mem(FG,H_0)$: indeed, $\chi(B_0) \in S^b$, $\chi(C) \in S$ for $C \ne B_0$. We want to show that $u_0 = \vec{0}$. 
	Let $(h,\overline{h}) \in P$ (recall that this means that the hyperedges $h$ and $\overline{h}$ are fused by the application of $\fr(P)$). Let $\chi(h) = (lab_{Z^\prime}(h),f,\tau)$, $\chi(\overline{h}) = (lab_{Z^\prime}(\overline{h}),\overline{f},\overline{\tau})$. Observe the following.
	\begin{itemize}
		\item $\overline{lab_{Z^\prime}(h)} = lab_{Z^\prime}(\overline{h})$ because $h$ and $\overline{h}$ are fused in $Z^\prime$ by $\fr(P)$.
		\item Since $att_{Z^\prime}(h)(i) \equiv_P att_{Z^\prime}(\overline{h})(i)$ (these vertices are identified by $\fr(P)$), it holds that, if $f(i) = [att_{Z^\prime}(h)(i)]_{\equiv_P} = v_0^j$, then $\overline{f}(i) = [att_{Z^\prime}(\overline{h})(i)]_{\equiv_P} = v_0^j$ as well; consequently, if $f(i) = w$, then $\overline{f}(i)=w$ as well. Therefore, $f = \overline{f}$.
		\item $(i_1,i_2,j)$ belongs to $\tau$ if and only if $(h,j) $ belongs to $EvS(i_1,i_2)$ if and only if $(\overline{h},j)$ belongs to $EvS(i_1,i_2)$ if and only if $(i_1,i_2,j)$ belongs to $\overline{\tau}$. Thus $\tau = \overline{\tau}$. It is true that $(h,j) \in EvS(i_1,i_2)$ if and only if $(\overline{h},j) \in EvS(i_1,i_2)$ because the evidence set is the union of pairs $\{(e_i^+,k_i^+),(e_{i+1}^-,k_{i+1}^-)\}$ where $\{e_i^+,e_{i+1}^-\}$ belongs to $P$.
	\end{itemize}
	
	The above observations imply that $\overline{\chi(h)} = \chi(\overline{h})$, i.e.~that the labels of the hyperedges $h,\overline{h}$ are complementary.
	
	Let us now show that, if $h$ is not fused in the application of $\fr(P)$ (i.e.~if it does not belong to any pair from $P$) and if $h \ne e_0$, then $\chi(h) \in \mathfrak{W}$. Let $\chi(h) = (lab_{Z^\prime}(h),f,\tau)$. The hyperedge $h$ does not disappear after the application of $\fr(P)$; therefore, $h \in E_G$ must not be adjacent to $e_0$. Hence $[att_{Z^\prime}(h)(i)]_{\equiv_P} \ne v_0^j$ for any $j$ and, consequently, $f(i) = w$. This implies that $\tau = \emptyset$ and that $\chi(h) \in \mathfrak{W}$, as desired.
	
	Summing up, we have shown that, for any hyperedge $h$ in $\chi(Z^\prime)$ with a label that does not belong to $\{a_0\} \cup \mathfrak{W}$, it holds that $h \in p \in P$ for some $p = \{h,\overline{h}\}$ and that, moreover, $\overline{\chi(h)} = \chi(\overline{h})$. Therefore, we have proved that, if $A \in \FCol \setminus  \mathfrak{W}$, then the number of $A$-labeled hyperedges in $\chi(Z^\prime)$ equals the number of $\overline{A}$-labeled ones. This proves that $u_0 = \vec{0}$ and completes the proof of the lemma. 
\end{proof}

\begin{lemma}\label{lemma_mem1_2}
	If $\vec{0}$ belongs to $\Mem(FG,H_0)$, then $H_0$ belongs to $L(FG)$. 
\end{lemma}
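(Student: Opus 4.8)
The plan is to reverse the construction of Lemma~\ref{lemma_mem1_1}: from a witness of $\vec 0 \in \Mem(FG,H_0)$ I will build an explicit derivation $Z \Rightarrow^\ast R$ in $FG$ in which a copy of $H_0$ appears as a terminal connected component. Unpacking Definition~\ref{def_mem}, $\vec 0 \in \Mem(FG,H_0)$ yields a hyperedge colouring $B \in S^b$ of some coloured base hypergraph $(C_B,e_0',col_B)$ together with multiplicities $k(C)\in\Nat$ for $C \in S$ such that $\beta(B)+\sum_{C\in S}k(C)\beta(C)=\vec 0$. I will form the disjoint union
$$
W \;=\; B \;+\; \sum_{C\in S} k(C)\cdot C \;\in\; \mathcal{H}(\SigmaCol),
$$
so that $\beta(W)=\vec 0$ by additivity of $\beta$; equivalently $\#_A(W)=\#_{\overline A}(W)$ for every non-white $A\in\FCol\setminus\mathcal W$, where $\#_{\overline A}(W)$ is read as $0$ when $\overline A$ is not a legal label — which already forces $W$ to contain no non-white hyperedge whose first label-component is terminal. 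I will equip $W$ with the colouring $col_W$ read off from the $f$-components of its labels (consistent since within each summand the labels come from a single vertex colouring, and $W$ is a disjoint union), and record that $col_W(att_W(e_0')(i))=att_{H_0}(e_0)(i)$ and that $e_0'$ is the unique hyperedge of $W$ with the bare label $a_0$ (all other labels lie in $\FCol\cup\overline{\FCol}$).

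Next I would strip the colouring information via the relabeling $\pi\colon a_0\mapsto a_0,\ (\sigma,f,\tau)\mapsto\sigma,\ \overline{(\sigma,f,\tau)}\mapsto\overline\sigma$. By the labeling conditions in Definitions~\ref{def_EC1} and~\ref{def_EC2}, $\pi$ acts as a left inverse of the relabelings used to build $S,S^b$, so $\pi(C)\in\mathcal{C}(Z)$ for $C\in S$ and $\pi(B)=C_B\in\mathcal{C}(Z)$; hence $\pi(W)$ is, up to isomorphism, $m\cdot Z$ for the multiplicity $m$ that counts summands of $W$ with a given $\pi$-image, and so $Z\Rightarrow m\cdot Z\cong\pi(W)$ in one multiplication step. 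Then I would choose the fusion: for each non-white $A=(A_0,f,\tau)$ with $A_0\in F$, the equality $\#_{(A_0,f,\tau)}(W)=\#_{(\overline{A_0},f,\tau)}(W)$ lets me fix a bijection between the $(A_0,f,\tau)$-labeled and the $(\overline{A_0},f,\tau)$-labeled hyperedges of $W$; the resulting set $P$ of matched pairs — each pair carrying the \emph{same} $f$ and the \emph{same} $\tau$ — becomes, under $\pi$, a valid pair-set for a parallelized fusion rule over $m\cdot Z$ (the $\pi$-labels are $A_0\in F$ and $\overline{A_0}\in\overline F$, pairs disjoint), and $e_0'\notin\bigcup P$. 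Setting $R=(\pi(W))/{\equiv_P}$ after deleting $\bigcup P$ gives $Z\Rightarrow m\cdot Z\underset{\fr(P)}{\Rightarrow} R$, so $Z\Rightarrow^\ast R$.

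It remains to show $H_0$ is a terminal connected component of $R$. Because every matched pair uses the same $f$, the generators of $\equiv_P$ only identify equally-coloured vertices, so $\equiv_P$ preserves $col_W$; and by the count analysis of the first paragraph every surviving hyperedge other than $e_0'$ is white, hence attached only to white-coloured vertices. Since the attachment vertices of $e_0'$ are coloured by $att_{H_0}(e_0)$ (never $w$), colour-preservation implies that no white hyperedge is attached to anything $\equiv_P$-equivalent to an $att_W(e_0')(i)$; thus the component $D$ of $R$ containing $e_0'$ consists of $e_0'$ and the classes $[att_W(e_0')(i)]_{\equiv_P}$ only, and $D\in\mathcal{H}(T)$. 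To identify $D$ with $H_0$ I must verify $[att_W(e_0')(i_1)]_{\equiv_P}=[att_W(e_0')(i_2)]_{\equiv_P}$ iff $att_{H_0}(e_0)(i_1)=att_{H_0}(e_0)(i_2)$: ``only if'' is colour-preservation, and for ``if'', given a connection pair $(i_1,i_2)$ with $att_W(e_0')(i_1)\neq att_W(e_0')(i_2)$, I would build the auxiliary multigraph $G^\ast$ on $V_W$ with an edge $\{att_W(h)(j),att_W(\overline h)(j)\}$ for each matched pair $\{h,\overline h\}\in P$ and each $j$ with $(i_1,i_2,j)\in\tau_h=\tau_{\overline h}$. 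Every edge of $G^\ast$ joins $\equiv_P$-equivalent vertices, and $\deg_{G^\ast}(v)=|\Cond_W(v,i_1,i_2)|$ (each conduction-set element is exactly one edge-endpoint at $v$, using $\Cond_W=\Cond_C$ for the summand containing $v$), so the cardinality conditions in Definitions~\ref{def_EC1} and~\ref{def_EC2} force all degrees to be $0$ or $2$ except the two attachment vertices, which have degree $1$. A finite multigraph of maximum degree $\le 2$ is a disjoint union of paths and cycles, so its two odd-degree vertices lie on a common path; following it yields $att_W(e_0')(i_1)\equiv_P att_W(e_0')(i_2)$, and hence $D\cong H_0$, giving $H_0\in\mathcal{C}(R)\cap\mathcal{H}(T)$ and $H_0\in L(FG)$.

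I expect the last step to be the main obstacle: converting the purely numerical cardinality conditions on the conduction sets back into a genuine evidence $\fr(P)$-path joining the two attachment vertices. Everything else — forgetting colours to recover a multiplication of $Z$, pairing complementary non-white labels into a legal parallelized fusion, and using colour-preservation together with whiteness to show the $e_0'$-component of $R$ is clean — is routine bookkeeping; the degree/parity argument on $G^\ast$ is precisely where the conduction-set machinery of Definitions~\ref{def_EC1} and~\ref{def_EC2} is used, and one must be slightly careful about the degenerate case $att_W(e_0')(i_1)=att_W(e_0')(i_2)$, handled separately by cardinality-condition~5.
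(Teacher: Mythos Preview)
Your proposal is correct and follows the same overall strategy as the paper: form the coloured disjoint union $W$ (the paper's $Z^2$), project via $\pi$ to recover a multiplication $m\cdot Z$ (the paper's $Z^0$), pair up complementary non-white labels into a parallelized fusion $P$ (the paper's $P^0$), and then verify that the component of $e_0'$ in the result is exactly $H_0$ via colour-preservation for the ``only if'' direction and a conduction-set argument for the ``if'' direction.

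The one genuine difference is in the ``if'' direction. The paper builds the evidence path by an explicit inductive walk: starting at $att_W(e_0')(i_1)$, it repeatedly extends the path using the second element of the current conduction set, with a lengthy case analysis to show the new vertex is fresh and the procedure terminates only at $att_W(e_0')(i_2)$. Your multigraph argument is a cleaner reformulation of exactly the same idea: reading $\deg_{G^\ast}(v)=|\Cond_W(v,i_1,i_2)|$ and invoking the path/cycle decomposition of a max-degree-$2$ multigraph replaces the paper's inductive bookkeeping by a one-line parity observation. Both arguments use the cardinality conditions in the same essential way; yours just packages the conclusion more structurally. Your handling of the terminal-label case (non-white $(\sigma,f,\tau)$ with $\sigma\in T$ must be absent because its ``complement'' never occurs in $W$) is also correct and is left implicit in the paper.
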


\begin{proof}
	Let $\vec{0} \in \Mem(FG,H_0)$. Equivalently, 
	\begin{equation}\label{eq_mem1_2_zero}
		\beta(B^2) + \sum_{C^0 \in \mathcal{C}(Z)} \; \sum_{C^1 \in \VC(C^0)} \; \sum_{C^2 \in \EC(C^1)} k(C^2) \beta(C^2) = \vec{0}
	\end{equation}
	for some $B^2 \in \EC^b(B^1)$ where $B^1 = (B^0,\tilde{e}_0,col_{B^1})$ is a coloured base hypergraph and for some non-negative integers $k(C^2) \in \Nat$. 
	Consider the following hypergraphs:
	\begin{eqnarray}
		Z^2 = B^2 + \sum_{C^0 \in \mathcal{C}(Z)} \; \sum_{C^1 \in \VC(C^0)} \; \sum_{C^2 \in \EC(C^1)} k(C^2) C^2; \label{eq_Z2} \\
		Z^0 = B^0 + \sum_{C^0 \in \mathcal{C}(Z)} \; \sum_{C^1 \in \VC(C^0)} \; \sum_{C^2 \in \EC(C^1)} k(C^2) C^0. \label{eq_Z0}
	\end{eqnarray}
	In plain words, $Z^2$ is composed of the hypergraph $B^2 \in S^b$, which corresponds to a coloured base hypergraph with the distinguished hyperedge $\tilde{e}_0$, and of hypergraphs from $S$, i.e.~of all possible hypergraphs obtained from those in $\mathcal{C}(Z)$ by adding a vertex colouring and then encoding it as well as information about evidence paths in hyperedges. Namely, there are $k(C^2)$ copies of $C^2$, which is a hyperedge encoding of a vertex colouring of $C^0 \in \mathcal{C}(Z)$, in $Z^2$. The hypergraph $Z^0$ is obtained from $Z^2$ by forgetting the information about colourings and evidence paths: it holds that $lab_{Z^2}(\tilde{e}_0) = lab_{Z^0}(\tilde{e}_0) = a_0$ and that, for each $e \ne \tilde{e}_0$, $lab_{Z^2}(e) = (lab_{Z^0}(e),f_e,\tau_e)$ for some $f_e$, $\tau_e$.
	
	Let us introduce some new notation. For any $C \in \mathcal{C}(Z^2)$, $C \ne B^2$, it holds that $C \in \EC(C^1)$ for some coloured hypergraph $C^1 = (C^0,col_{C^1})$ where $C^0 \in \mathcal{C}(Z)$. Let us denote such $C^1$ by $g^1(C)$ and such $C^0$ by $g^0(C)$. Then, we can rewrite (\ref{eq_Z0}) as follows:
	$$
	Z^0 = B^0 + \sum_{C^0 \in \mathcal{C}(Z)} \; \sum_{C^1 \in \VC(C^0)} \; \sum_{C^2 \in \EC(C^1)} k(C^2) g^0(C^2).
	$$
	
	Recall that $col_{g^1(C)}$ denotes the colouring function of $g^1(C)$. Let us define a coloured hypergraph $Z^1 = (Z^0,col_{Z^1})$ by combining all vertex colourings of its connected components which are encoded in $Z^2$. Formally, let $col_{Z^1} {\restriction}_{V_{B^0}} = col_{B^1}$ and $col_{Z^1} {\restriction}_{V_{g^0(C)}} = col_{g^1(C)}$ for $C \in \mathcal{C}(Z^0)$, $C \ne B^0$. 
	
	The equation (\ref{eq_mem1_2_zero}) says that $\beta(Z^2) = \vec{0}$ and thus $\#_A(Z^2) = \#_{\overline{A}}(Z^2)$ for $A \in \FCol \setminus \mathfrak{W}$. Let $\{e_1^A,\dotsc,e_{l(A)}^A\}$ be the set of all hyperedges from $E_{Z^2}$ with the label $A$; $l(A)$ is the number of such hyperedges. Then $l(A) = l(\overline{A})$ for $A \in \FCol \setminus \mathfrak{W}$. We define $P^0$ as follows:
	$$
	P^0 = \bigcup\limits_{A \in \FCol \setminus \mathfrak{W}} \left\{\left\{e_1^A,e_1^{\overline{A}}\right\},\dotsc,\left\{e_{l(A)}^A,e_{l(\overline{A})}^{\overline{A}}\right\}\right\}.
	$$
	Consider the following derivation:
	$$
	Z \Rightarrow Z^0 \underset{\fr(P^0)}{\Rightarrow} \widetilde{H}.
	$$
	Its first step is valid since $Z^0$ is the sum of connected components of $Z$. The second step is also valid because $e_i^A$ and $e_i^{\overline{A}}$ have complementary labels in $Z^2$, which implies that these labels are also complementary in $Z^0$. Formally,
	\begin{equation}\label{eq_overline_reasonings}
		\overline{lab_{Z^2}(e_i^A)} = (\overline{lab_{Z^0}(e_i^A)},f_{e_i^A},\tau_{e_i^A}) = lab_{Z^2}(e_i^{\overline{A}}) = (lab_{Z^0}(e_i^{\overline{A}}),f_{e_i^{\overline{A}}},\tau_{e_i^{\overline{A}}}),
	\end{equation}
	hence $\overline{lab_{Z^0}(e_i^A)} =  lab_{Z^0}(e_i^{\overline{A}})$.
	
	Our main goal now is to prove that $\widetilde{H}$ contains a connected component isomorphic to $H_0$; this would prove the lemma. We claim that the hyperedge $\tilde{e}_0$ with its attachment vertices forms such a subhypergraph, i.e.~that the following two facts are true:
	\begin{enumerate}
		\item\label{fact1_lemma} for any $i \in [N]$, there is no hyperedge attached to $att_{\widetilde{H}}(\tilde{e}_0)(i)$ other than $\tilde{e}_0$ (recall that $N$ is the type of $e_0$ in the definition of the membership problem);
		\item\label{fact2_lemma} $att_{\widetilde{H}}(\tilde{e}_0)(i_1) = att_{\widetilde{H}}(\tilde{e}_0)(i_2)$ holds if and only if $att_{H_0}(e_0)(i_1) = att_{H_0}(e_0)(i_2)$ holds.
	\end{enumerate}
	Let $v^\prime \equiv_{P^0} v^{\prime\prime}$ be two vertices from $V_{Z^0}$ that become equal after the application of $\fr(P^0)$. Let us prove that then $col_{Z^1}(v^\prime) = col_{Z^1}(v^{\prime\prime})$. If $v^\prime = v^{\prime\prime}$, then there is nothing to prove. Otherwise, consider the evidence $\fr(P^0)$-path from $v^\prime$ to $v^{\prime\prime}$. Assume that it is of the form (\ref{equation_evidence_path}). Let $lab_{Z^2}(e_i^\pm) = (lab_{Z^0}(e_i^\pm), f^\pm_i,\tau^\pm_i)$ where $\pm \in \{+,-\}$. We know that:
	\begin{itemize}
		\item $f_i^+(k_i^+) = col_{Z^1}(att_{Z^0}(e_i^+)(k_i^+)) = col_{Z^1}(v_i)$ for $i=0,\dotsc,l-1$;
		
		{\color{comment}
			The first equality follows from the colouring condition, and the second one is simply a reminder that $v_i = att_{Z^0}(e_i^+)(k_i^+)$ where $v_i$ is the $i$-th vertex in the evidence path, as in (\ref{equation_evidence_path}).
		}
		\item $f_i^-(k_i^-) = col_{Z^1}(att_{Z^0}(e_i^-)(k_i^-)) = col_{Z^1}(v_i)$ for $i=1,\dotsc,l$;
		\item $\{e_i^+,e_{i+1}^-\} \in P^0$ and $k_i^+=k_{i+1}^-$ for $i=0,\dotsc,l-1$ (this is one of the conditions in the definition of an evidence path, see Proposition \ref{prop_evidence_path}). Consequently, $\overline{lab_{Z^2}(e_i^+)} = lab_{Z^2}(e_{i+1}^-)$ and hence $f_i^+ = f_{i+1}^-$ (see the argument in (\ref{eq_overline_reasonings})). Then, since $k_i^+ = k_{i+1}^-$, it holds that $f_i^+(k_i^+) = f_{i+1}^-(k_{i+1}^-)$.
	\end{itemize}
	A consequence of the above observations is that $f_i^-(k_i^-) = f_i^+(k_i^+)$ for $i=1,\ldots,l-1$. Finally, we derive the desired equality:
	\begin{equation*}
		col_{Z^1}(v^\prime) = col_{Z^1}(v_0) = f_0^+(k_0^+) = f_1^-(k_1^-) = f_1^+(k_1^+) = \dotsc = f_{l-1}^-(k_{l-1}^-) = f_{l-1}^+(k_{l-1}^+) = f_l^-(k_l^-) = col_{Z^1}(v^{\prime\prime}).
	\end{equation*}
	
	Let us prove the statement \ref{fact1_lemma}. The proof is ex falso. Let $att_{\widetilde{H}}(e^\prime)(i^\prime) = att_{\widetilde{H}}(\tilde{e}_0)(i)$ for some hyperedge $e^\prime \in E_{\widetilde{H}}$ not equal to $\tilde{e}_0$ and for some $i^\prime$. Equivalently, $att_{Z^0}(e^\prime)(i^\prime) \equiv_{P^0} att_{Z^0}(\tilde{e}_0)(i)$; this implies that
	$col_{Z^1}(att_{Z^0}(e^\prime)(i^\prime)) = col_{Z^1}(att_{Z^0}(\tilde{e}_0)(i))$, as we proved earlier.
	The colour $col_{Z^1}(att_{Z^0}(\tilde{e}_0)(i))$ equals $att_{H_0}(e_0)(i)$ according to Definition \ref{def_coloured_base_hg} (because $(B_0,\tilde{e}_0,col_{B^1})$ is a coloured base hypergraph). On the other hand, notice that $lab_{Z^2}(e^\prime)$ must belong to $\mathfrak{W}$, because otherwise $e^\prime$ must be in one of the pairs of $P^0$ and therefore it does not remain after the application of $\fr(P^0)$. Then, $lab_{Z^2}(e^\prime) = (lab_{Z^0}(e^\prime),f^\prime,\emptyset)$ where $f^\prime(j) = w$ for all $j$. In particular, $f^\prime(i^\prime) = col_{Z^1}(att_{Z^0}(e^\prime)(i^\prime)) = w \ne att_{H_0}(e_0)(i)$, which is a contradiction.
	
	Similarly, we prove the ``only if'' direction of the statement \ref{fact2_lemma}. If $att_{\widetilde{H}}(\tilde{e}_0)(i_1) = att_{\widetilde{H}}(\tilde{e}_0)(i_2)$, which is equivalent to the fact that $att_{Z^0}(\tilde{e}_0)(i_1) \equiv_{P^0} att_{Z^0}(\tilde{e}_0)(i_2)$, then 
	$$
		col_{B^1}(att_{B^0}(\tilde{e}_0)(i_1)) =
		col_{Z^1}(att_{Z^0}(\tilde{e}_0)(i_1)) =
		col_{Z^1}(att_{Z^0}(\tilde{e}_0)(i_2)) = 
		col_{B^1}(att_{B^0}(\tilde{e}_0)(i_2)) .
	$$ 
	According to Definition \ref{def_coloured_base_hg} applied to $B^1 = (B_0,\tilde{e}_0,col_{B^1})$, this is equivalent to the desired fact that $att_{H_0}(e_0)(i_1) = att_{H_0}(e_0)(i_2)$.
	
	It remains to prove the ``if'' direction of the statement \ref{fact1_lemma}. That is, given $att_{H_0}(e_0)(i_1) = att_{H_0}(e_0)(i_2)$, we aim to show that $att_{Z^0}(\tilde{e}_0)(i_1) \equiv_{P^0} att_{Z^0}(\tilde{e}_0)(i_2)$. To do this, we shall construct an evidence $\fr(P^0)$-path from $att_{Z^0}(\tilde{e}_0)(i_1)$ to $att_{Z^0}(\tilde{e}_0)(i_2)$. If $att_{Z^0}(\tilde{e}_0)(i_1) = att_{Z^0}(\tilde{e}_0)(i_2)$, then there is nothing to do. Otherwise, without loss of generality, assume that $i_1 < i_2$. Then, $(i_1,i_2)$ is a connection pair. We construct an evidence path inductively as follows.
	\begin{enumerate}
		\item Let $v_0 = att_{Z^0}(\tilde{e}_0)(i_1)$.
		\item\label{step2_inductive} Assume that we have constructed a sequence of the form 
		\begin{equation}\label{equation_evidence_path_construction}
			v_0, (e_0^+,k_0^+), (e_1^-,k_1^-), v_1, (e_1^+,k_1^+), (e_2^-,k_2^-), \dotsc, (e_{s-1}^+,k_{s-1}^+), (e_s^-,k_s^-), v_s
		\end{equation}
		for some $s \ge 0$ that satisfies the following \emph{evidence path requirements}:
		\begin{itemize}
			\item $v_i \in V_{Z^0}$; \qquad $e_i^+,e_i^- \in E_{Z^0}$; \qquad $v_0,\dotsc,v_s$ are distinct;
			\item $\{e_i^+,e_{i+1}^-\} \in P^0$ and $k_i^+ = k_{i+1}^-$;
			\item $att_{Z^0}(e_i^+)(k_i^+) = v_i$ for $i=0,\dotsc,s-1$; $att_{Z^0}(e_i^-)(k_i^-) = v_i$ for $i=1,\dotsc,s$;
			\item $(e_i^+,k_i^+) \in \Cond_{Z^2}(v_i,i_1,i_2)$ for $i=0,\dotsc,s-1$ and \\ $(e_i^-,k_i^-) \in \Cond_{Z^2}(v_i,i_1,i_2)$ for $i=1,\dotsc,s$.
		\end{itemize}
		Note that (\ref{equation_evidence_path_construction}) is an evidence path, and $v_0 \equiv_{P^0} v_s$; as a consequence, $col_{Z^1}(v_0) = col_{Z^1}(v_s)$ (recall that $col_{Z^1}(v_0) = att_{H_0}(e_0)(i_1)$).
		
		The hypergraph $Z^2$ consists of several connected components, one of which is $B^2 \in \EC^b(B^1)$, while for any other one $C^2$ it holds that $C^2 \in \EC(g^1(C^2))$. The cardinality conditions from Definitions \ref{def_EC1} and \ref{def_EC2} imply that 
		$
			\vert \Cond_{Z^2}(att_{Z^0}(\tilde{e}_0)(i_1),i_1,i_2) \vert = \vert \Cond_{Z^2}(att_{Z^0}(\tilde{e}_0)(i_2),i_1,i_2) \vert = 1
		$,
		while $\vert \Cond_{Z^2}(v,i_1,i_2) \vert$ equals 0 or 2 for any vertex $v$ other than $att_{Z^0}(\tilde{e}_0)(i_1)$ or $att_{Z^0}(\tilde{e}_0)(i_2)$.
		Let us focus on the vertex $v_s$. The first possible case is that $\vert \Cond_{Z^2}(v_s,i_1,i_2) \vert = 1$; then either $v_s = att_{Z^0}(\tilde{e}_0)(i_1) = v_0$ or $v_s = att_{Z^0}(\tilde{e}_0)(i_2)$. The second possibility is that $\vert \Cond_{Z^2}(v_s,i_1,i_2) \vert$ is either 0 or 2. Let us consider each of these cases in detail.
		\begin{enumerate}
			\item If $v_s = v_0$, then $s = 0$ (all vertices in (\ref{equation_evidence_path_construction}) are distinct). Let us denote by $(e_s^+,k_s^+) = (e_0^+,k_0^+)$ the only element of $\Cond_{Z^2}(v_0,i_1,i_2)$. Then, proceed to step \ref{item_step3_mem1_2}.
			\item\label{item_terminate} If $v_s = att_{Z^0}(\tilde{e}_0)(i_2)$, then let us terminate our inductive procedure and return (\ref{equation_evidence_path_construction}) as its output. Clearly, this is a desired evidence path.
			\item Otherwise, $s > 0$, $v_s$ equals neither $att_{Z^0}(\tilde{e}_0)(i_1)$ nor $att_{Z^0}(\tilde{e}_0)(i_2)$, and $\vert \Cond_{Z^2}(v_s,i_1,i_2) \vert$ is either 0 or 2. Since $(e_s^-,k_s^-) \in \Cond_{Z^2}(v_s,i_1,i_2)$, it must be the case that $\vert \Cond_{Z^2}(v_s,i_1,i_2) \vert = 2$. Let $(e_s^+,k_s^+) \in \Cond_{Z^2}(v_s,i_1,i_2)$ be the second element of this set. Then, proceed to step \ref{item_step3_mem1_2}.
		\end{enumerate}
		
		\item\label{item_step3_mem1_2} At step \ref{step2_inductive}, we defined $(e_s^+,k_S^+)$. Let $lab_{Z^2}(e_s^+) = (lab_{Z^0}(e_s^+),f_s^+,\tau_s^+)$. We know that:
		\begin{itemize}
			\item $att_{Z^2}(e_s^+)(k_s^+) = v_s$ (this follows from the definition of a conduction set and of an incidence set);
			\item $f_s^+(k_s^+) = col_{Z^1}(att_{Z^2}(e_s^+)(k_s^+)) = col_{Z^1}(v_s) = col_{Z^1}(v_0) = att_{H^0}(e_0)(i_1) \ne w$ (the first equality follows from the colouring condition);
			\item $(i_1,i_2,k_s^+) \in \tau_s^+$.  
		\end{itemize}
		Since $f_s^+$ is not identical to $w$ and $\tau_s^+ \ne \emptyset$, it holds that $lab_{Z^2}(e_s^+) \in \FCol \setminus \mathfrak{W}$, hence $e_s^+$ must belong to one of the pairs in $P^0$. Let $\{e_s^+,e\} \in P^0$ for some hyperedge $e$ (note that such $e$ is defined uniquely). Let us define $e_{s+1}^- = e$. Let also $k_{s+1}^- = k_s^+$. Let $v_{s+1} = att_{Z^2}(e_{s+1}^-)(k_{s+1}^-)$.
		
		We claim that $(e_{s+1}^-,k_{s+1}^-) \in \Cond_{Z^2}(v_{s+1},i_1,i_2)$. First, note that $(e_{s+1}^-,k_{s+1}^-) \in \Inc_{Z^2}(v_{s+1})$. Secondly, observe that $lab_{Z^2}(e_{s+1}^-) = \overline{lab_{Z^2}(e_{s}^+)} = (\overline{lab_{Z^0}(e_s^+)},f_s^+,\tau_s^+)$, thus $(i_1,i_2,k_{s+1}^-) = (i_1,i_2,k_{s}^+) \in \tau_s^+$. This proves our claim.
		
		Now, we want to prove that $v_{s+1} \ne v_j$ for any $j=0,\dotsc,s$. The proof is ex falso. Assume that $v_{s+1} = v_j$ for some $j=0,\dotsc,s$. Then, $(e_{s+1}^-,k_{s+1}^-) \in \Cond_{Z^2}(v_{j},i_1,i_2)$. There are several cases. 
		\begin{enumerate}
			\item\label{item_reasonings_a} $j = 0$, $s>1$. Since $\Cond_{Z^2}(v_{0},i_1,i_2)$ has the cardinality $1$, $(e_{s+1}^-,k_{s+1}^-) = (e_0^+,k_0^+)$. Then, $(e_s^+,k_s^+) = (e_1^-,k_1^-)$, because $e_s^+$ and $e_1^-$ are the hyperedges fused with $e_{s+1}^-$ and $e_0^+$ resp. Consequently, $v_1 = att_{Z^2}(e_1^-)(k_1^-) = att_{Z^2}(e_s^+)(k_s^+) = v_s$, which contradicts the fact that $v_1 \ne v_s$.
			\item\label{item_reasonings_b} $j = 0$, $s = 1$. Similarly, $(e_{s+1}^-,k_{s+1}^-) = (e_2^-,k_2^-) = (e_0^+,k_0^+)$. Then $(e_1^+,k_1^+) = (e_1^-,k_1^-)$. However, $(e_1^+,k_1^+)$ was chosen as the second element of the set $\Cond_{Z^2}(v_1,i_1,i_2)$ apart from $(e_1^-,k_1^-)$. This is a contradiction.
			\item\label{item_reasonings_c} $j=0$, $s=0$. Then, $(e_{s+1}^-,k_{s+1}^-) = (e_1^-,k_1^-) = (e_0^+,k_0^+)$. This contradicts the fact that $lab_{Z^2}(e_{1}^-) = \overline{lab_{Z^2}(e_{0}^+)}$.
			\item $j > 0$. If $v_{s+1} = v_j$, then $(e_{s+1}^-,k_{s+1}^-)$, which belongs to $\Cond_{Z^2}(v_{j},i_1,i_2)$, coincides with one of the pairs $(e_j^-,k_j^-)$, $(e_j^+,k_j^+)$ according to the cardinality condition. 
			\begin{enumerate}
				\item If $(e_{s+1}^-,k_{s+1}^-) = (e_j^-,k_j^-)$, then $(e_{s}^+,k_{s}^+) = (e_{j-1}^+,k_{j-1}^+)$, hence $v_s = att_{Z^2}(e_s^+)(k_s^+) = att_{Z^2}(e_{j-1}^+)(k_{j-1}^+) = v_{j-1}$, which contradicts the fact that $v_{j-1}$ and $v_s$ are different.
				\item If $(e_{s+1}^-,k_{s+1}^-) = (e_j^+,k_j^+)$ and $j < s - 1$, then $(e_{s}^+,k_{s}^+) = (e_{j+1}^-,k_{j+1}^-)$, hence \\ $v_s = att_{Z^2}(e_s^+)(k_s^+) = att_{Z^2}(e_{j+1}^-)(k_{j+1}^-) = v_{j+1}$, which contradicts the fact that $v_{j+1}$ and $v_s$ are different. This is similar to Case \ref{item_reasonings_a}.
				\item If $(e_{s+1}^-,k_{s+1}^-) = (e_j^+,k_j^+)$ and $j = s-1$, then the reasonings are the same to those for Case \ref{item_reasonings_b}.
				\item If $(e_{s+1}^-,k_{s+1}^-) = (e_j^+,k_j^+)$ and $j = s$, then the reasonings are the same to those for Case \ref{item_reasonings_c}.
			\end{enumerate}
		\end{enumerate}
		
		Thus we have proved that $v_{s+1}$ cannot be equal to any of $v_j$ for $j \le s$. Summing up, the following extension of (\ref{equation_evidence_path_construction}) also satisfies the evidence path requirements:
		\begin{equation}\label{equation_evidence_path_construction_step}
			v_0, (e_0^+,k_0^+), (e_1^-,k_1^-), v_1, \dotsc, (e_{s-1}^+,k_{s-1}^+), (e_s^-,k_s^-), v_s, (e_{s}^+,k_{s}^+), (e_{s+1}^-,k_{s+1}^-), v_{s+1}
		\end{equation}
		Then, we proceed to step \ref{step2_inductive} of our inductive procedure taking (\ref{equation_evidence_path_construction_step}) as a new sequence.	
	\end{enumerate}
	The described procedure is deterministic. Moreover, it must terminate because $v_0,\dotsc,v_s$ in (\ref{equation_evidence_path_construction}) are distinct, hence their number does not exceed the total number of vertices in $V_{Z^0}$. If one inspects the procedure carefully, they obvserve that the only way for it to terminate is Case \ref{item_terminate} where $v_s = att_{Z^0}(\tilde{e}_0)(i_2)$. As we noticed, $v_0 \equiv_{P^0} v_s$; thus $att_{Z^0}(\tilde{e}_0)(i_1) \equiv_{P^0} att_{Z^0}(\tilde{e}_0)(i_2)$, which concludes the proofs both of the ``if'' direction of statement \ref{fact2_lemma} and of the lemma.
\end{proof}


Clearly, the above lemmas imply Theorem \ref{th_mem1} immediately.
\begin{proof}[Proof of Theorem \ref{th_mem1}]
	Given a fusion grammar $FG$ without markers and connectors and a hypergraph $H_0$ satisfying the requirements of MEM-1, construct the set $\Mem(FG,H_0)$. Checking whether $\vec{0} \in \Mem(FG,H_0)$ is decidable (Remark \ref{remark_decidability_linear_integer_programming}). This is equivalent to checking that $H_0 \in L(FG)$ according to Lemmas \ref{lemma_mem1_1} and \ref{lemma_mem1_2}. 
\end{proof}

\subsection{Membership Problem For Arbitrary Hypergraphs}\label{ssec_results_mem2}

Now, we shall show how to reduce the general membership problem MEM to the case of one-hyperedge hypergraphs MEM-1. This is possible because the most important part in checking whether a hypergraph is generated by a fusion grammar is to be able to check which vertices are fused and which are not. Let us extend MEM-1 to MEM-2, which is almost the desired membership problem MEM.

\begin{problem}[MEM-2]
	\leavevmode
	\\
	\textit{Input.} 
	\begin{enumerate}
		\item A fusion grammar without markers and connectors $FG = (Z,F,T)$; 
		\item A connected hypergraph $H_1 \in \mathcal{H}(T)$, which is not a single vertex with no hyperedges.
	\end{enumerate}
	\textit{Problem MEM-2.} Does $H_1$ belong to $L(FG)$?
\end{problem}

\begin{theorem}\label{th_mem2}
	MEM-2 is decidable.
\end{theorem}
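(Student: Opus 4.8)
The plan is to reduce MEM-2 to the one–hyperedge problem MEM-1 (Theorem~\ref{th_mem1}) by contracting the target $H_1$ into a single hyperedge while encoding the whole shape of $H_1$ into a modified grammar; the decision procedure then builds the instance $(FG',H_0)$ described below and invokes the algorithm of Theorem~\ref{th_mem1}.

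Write $V_{H_1}=\{v^1,\dots,v^n\}$, $E_{H_1}=\{\bar e_1,\dots,\bar e_p\}$, $lab_{H_1}(\bar e_q)=\bar a_q\in T$, and set $N=\sum_{q=1}^{p}\type(\bar e_q)$. The MEM-1 target is the single–hyperedge hypergraph $H_0$ with $V_{H_0}=V_{H_1}$, $E_{H_0}=\{e_0\}$, a fresh terminal $a_0$ with $\type(a_0)=N$, and $att_{H_0}(e_0)=att_{H_1}(\bar e_1)\,att_{H_1}(\bar e_2)\cdots att_{H_1}(\bar e_p)$ (concatenation of the attachment words of all hyperedges of $H_1$); as $H_1$ has no isolated vertices, every $v^i$ occurs in this word, so $H_0$ is a legal MEM-1 input. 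The grammar is $FG'=(Z',F',T')$ with $T'=\{a_0\}$; for every $t\in T$ add a fresh complementary pair $\langle t\rangle,\overline{\langle t\rangle}$ with $\type(\langle t\rangle)=\type(t)$ and put $F'=F\uplus\{\langle t\rangle\mid t\in T\}$; let $Z'=\widehat Z+D$, where $\widehat Z$ is $Z$ with every $t$-labelled hyperedge ($t\in T$) relabelled to $\langle t\rangle$, and $D$ is the connected ``collector'': $N$ fresh vertices $u_1,\dots,u_N$, an $a_0$-labelled hyperedge $e_0$ with $att_D(e_0)=u_1\cdots u_N$, and for each $q\in[p]$ a hyperedge $h_q$ labelled $\overline{\langle\bar a_q\rangle}$ attached to the $q$-th block $u_{N_{q-1}+1}\cdots u_{N_q}$, where $N_q=\sum_{q'\le q}\type(\bar e_{q'})$. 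The claim is $H_1\in L(FG)\iff H_0\in L(FG')$, which together with Theorem~\ref{th_mem1} proves the theorem.

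For the direction $\Leftarrow$ (well, the ``$H_1\in L(FG)\Rightarrow H_0\in L(FG')$'' half): by Proposition~\ref{prop_nf} there is a derivation $Z\Rightarrow m\cdot Z\underset{\fr(P)}{\Rightarrow}\mathcal{K}$ with a connected component $C_1\cong H_1$ of $\mathcal{K}$; fix such an isomorphism and let $e_q'$ be the hyperedge of $m\cdot Z$ corresponding to $\bar e_q$. In $FG'$ take $m'\cdot Z'$ with $m'(C)=m(C)$ for $C\in\mathcal{C}(Z)$ and $m'(D)=1$, and apply $\fr(P')$ with $P'=P\cup\{\{h_q,e_q'\}\mid q\in[p]\}$; this is well formed because relabelling leaves $F$-labels untouched (so $P$ stays legal), $\overline{lab(h_q)}=\langle\bar a_q\rangle=lab_{\widehat Z}(e_q')$, and all pairs are pairwise disjoint. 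Since the $u_j$ are fresh and occur only in $e_0$ and the $h_q$, the extra pairs merge no two vertices of $m\cdot Z$; they merely drag the attachment of $e_0$ onto the (already merged) attachment vertices of the $e_q'$. Hence in the result the connected component of $e_0$ is $\{e_0\}$ alone (the $e_q'$ and $h_q$ are deleted, and the vertices of the $C_1$-component, touched in $\mathcal{K}$ only by $e_1',\dots,e_p'$, are now touched only by $e_0$), and its attachment word has the same repetition pattern as $att_{H_0}(e_0)$; so this component is $\cong H_0$ and carries only the terminal label $a_0$, giving $H_0\in L(FG')$. For the converse, take a most parallelized derivation $Z'\Rightarrow m'\cdot Z'\underset{\fr(P')}{\Rightarrow}\widetilde H$ whose $e_0$-component is $\cong H_0$. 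That $e_0$ comes from a copy of $D$ contributing $e_0,h_1,\dots,h_p$; since the component is a lone hyperedge, each $h_q$ is fused, necessarily with a $\langle\bar a_q\rangle$-labelled hyperedge $e_q'$ (the symbols $\overline{\langle t\rangle}$ occur only on the $h$'s), which is a relabelled copy of a $\bar a_q$-labelled hyperedge of $Z$. Split $P'=\widehat P\uplus\{\{h_q,e_q'\}\}$ with $\widehat P$ consisting of $F$-pairs only; in $FG$ apply $\fr(\widehat P)$ to the corresponding $m\cdot Z$, obtaining $\widehat{\mathcal K}$. Relabelling back, $\widehat P$ is a legal parallelized fusion rule and it induces on $V_{m\cdot Z}$ the same equivalence as $P'$ does. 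Then the sub-hypergraph of $\widehat{\mathcal K}$ on $\{e_1',\dots,e_p'\}$ has the same labels and the same attachment pattern as $H_1$, hence is $\cong H_1$; it is connected (being $\cong H_1$), and because the $e_0$-component of $\widetilde H$ was exactly $\{e_0\}$, no other hyperedge of $\widehat{\mathcal K}$ is attached to its vertices. Thus it is a connected component of $\widehat{\mathcal K}$ in $\mathcal H(T)$, so $H_1\in L(FG)$.

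The main obstacle, and essentially the only place that needs genuine care, is the two bookkeeping arguments that the $e_0$-component of the fused hypergraph is really a lone hyperedge whose attachment pattern matches $H_0$: one must identify exactly which hyperedges are consumed (the ``blob'' fusion-connected to the relevant copy of $D$), check that the fresh vertices $u_j$ never force unintended identifications among vertices of $m\cdot Z$, and transport this data faithfully across the two grammars via the relabelling $\langle t\rangle\leftrightarrow t$ and Remark~\ref{rem_parallelization}. The degenerate cases (a target $H_1$ with a single hyperedge, or with a type-$0$ hyperedge) are either already instances of MEM-1 or handled by inspection.
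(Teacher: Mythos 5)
Your construction is essentially the paper's own (Construction~\ref{construction_mem2}): your collector $D$ is exactly the paper's hypergraph $I$ (all attachment vertices of the complementary-labelled copies of $H_1$'s hyperedges kept distinct and gathered by one big $a_0$-hyperedge), and the renaming $t\mapsto\langle t\rangle$ is only a cosmetic variant of the paper's choice $F_1=F\uplus T$ with $\overline{T}$ fresh. The target $H_0$ coincides with the paper's, the forward direction is sound, and the key structural observation you invoke --- that the fresh vertices $u_j$ each occur in exactly one $h_q$ at exactly one position, so the extra fusions force no unintended identifications and the identification pattern on the $u_j$ mirrors that on the attachment vertices of the $e_q'$ --- is precisely what the paper isolates as Lemma~\ref{lemma_fusion_isomorphism} via the evidence-path argument.

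There is, however, a concrete gap in your converse direction, at the step ``Split $P'=\widehat P\uplus\{\{h_q,e_q'\}\}$ with $\widehat P$ consisting of $F$-pairs only.'' Nothing forces the most parallelized derivation of $H_0$ to use only one copy of $D$: the multiplicity of $D$ in $m'\cdot Z'$ may be at least $2$, and then $P'$ may contain pairs $\{h',e''\}$ in which $h'$ is an $\overline{\langle t\rangle}$-labelled hyperedge of a \emph{second} copy of $D$ and $e''$ is a $\langle t\rangle$-labelled hyperedge coming from $\widehat Z$. Such pairs are neither $F$-pairs nor of the form $\{h_q,e_q'\}$, so your decomposition of $P'$ is not exhaustive, and these pairs cannot be transported to $FG$ because the extra copies of $D$ have no preimage in $m\cdot Z$. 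Simply discarding them refines the equivalence $\equiv_{P'}$, so the claim that $\fr(\widehat P)$ ``induces on $V_{m\cdot Z}$ the same equivalence as $P'$ does'' no longer follows without further argument. The paper closes exactly this hole before splitting: it passes to $P^\prime=\{p\in P\mid p\cap E_{I_j}=\emptyset \mbox{ for all } j>1\}$ and proves (ex falso) that no hyperedge fused into another copy of $I$ can end up adjacent to $e_0$, since that would drag the $a_0$-hyperedge of that copy next to $e_0$ in the final hypergraph, contradicting the fact that the $e_0$-component is a lone hyperedge. You need to supply this pruning step (or an equivalent reduction to a single copy of $D$) before the remainder of your converse goes through.
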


The following construction will be used to reduce MEM-2 to MEM-1. 

\begin{construction}\label{construction_mem2}
	Let $E_{H_1} = \{e_1,\dotsc,e_k\}$ and let $N = \type(e_1) + \dotsc + \type(e_k)$; let $\vert V_{H_1} \vert = n$. Let us define a fusion grammar $FG_1 = (Z_1,F_1,T_1)$ where
	\begin{itemize}
		\item $T_1 = \{a_0\}$ where $a_0$ is a new label such that $\type(a_0) = N$;
		\item $F_1 = F \uplus T$;
		\item $Z_1 = Z + I$ where $I$ is defined below.
		\begin{itemize}
			\item $V_I = \{(i,j) \mid 1 \le i \le k, 1 \le j \le \type(e_i)\}$. Let us fix the lexicographic order $\prec$ on $V_I$ and let us use the notation $V_I = \{\hat{1},\dotsc,\hat{N}\}$ for vertices where $\hat{1} \prec \dotsc \prec \hat{N}$. The other way round, if $\hat{l} = (i,j)$, then let $\pi_1(l) = i$, $\pi_2(l) = j$.
			\item $E_I = E_{H_1} \sqcup \{e_0\}$ where $e_0$ is a new hyperedge.
			\item $att_I(e_i)(j) = (i,j)$ for $(i,j) \in V_I$; $att_I(e_0)(l) = \hat{l}$.
			\item $lab_I(e_i) = \overline{lab_{H_1}(e_i)}$ for $i=1,\dotsc,k$; $lab_I(e_0) = a_0$.
		\end{itemize}
	\end{itemize}
	Finally, consider the hypergraph $H_0$ such that
	\begin{itemize}
		\item $V_{H_0} = V_{H_1}$;
		\item $E_{H_0} = \{e_0\}$;
		\item $att_{H_0}(e_0)(l) = att_{H_1}(e_{\pi_1(l)})(\pi_2(l))$ for $l=1,\ldots,N$;
		\item $lab_{H_0}(e_0) = a_0$.
	\end{itemize}
\end{construction}

\begin{example}
	Assume that we are given a fusion grammar $FG = (Z,\{A,B,C\},\{a,b,c\})$ and we want to know if $H_1 = \vcenter{\hbox{{\tikz[baseline=.1ex]{
					\node[hyperedge] (EL) at (1.4,0) {$c$};
					\node[vertex] (VO) at (2,-0.3) {};
					\node[vertex] (VL) at (2,0.3) {};
					\node[vertex] (VL1) at (0,0) {};
					\node[vertex] (VL3) at (0.8,0) {};
					\draw[-latex, thick] (VL1) -- node[above] {$b$} (VL3);
					\draw[-latex, thick] (VL1) to[out=210,in=150,looseness=20] node[left] {$a$} (VL1);
					\draw[-] (EL) -- (VL);
					\draw[-] (EL) -- (VO);
					\draw[-] (EL) -- node[above] {\scriptsize 3} (VL3);
					\node at (1.78,0.32) {\scriptsize 1};
					\node at (1.78,-0.07) {\scriptsize 2};
	}}}}
	$ belongs to $L(FG)$. Construction \ref{construction_mem2} turns the labels $a,b,c$ into fusion ones and introduces a new terminal label $a_0$. The following hypergraphs are constructed then.
	$$
	I = 
	\vcenter{\hbox{{\tikz[baseline=.1ex]{
					\node[hyperedge] (E0) at (0/1.7,0/1.7) {$a_0$};
					\node[vertex] (V1) at (0/1.7,1.4/1.7) {};
					\node[vertex] (V2) at (-1.251/1.7,0.996/1.7) {};
					\node[vertex] (V3) at (-1.56/1.7,-0.356/1.7) {};
					\node[vertex] (V4) at (-0.694/1.7,-1.442/1.7) {};
					\node[vertex] (V5) at (0.694/1.7,-1.442/1.7) {};
					\node[vertex] (V6) at (1.56/1.7,-0.356/1.7) {};
					\node[vertex] (V7) at (1.251/1.7,0.996/1.7) {};
					\node[hyperedge] (Ec) at (1.251/1.7,2.2/1.7) {$\overline{c}$};
					\draw[-latex, thick] (V3) -- node[below left] {$\overline{a}$} (V4);
					\draw[-latex, thick] (V5) -- node[below right] {$\overline{b}$} (V6);
					\draw[-] (Ec) -- node[right] {\scriptsize 1} (V7);
					\draw[-] (Ec) -- node[above] {\scriptsize 2} (V1);
					\draw[-] (Ec) to[bend right = 40] node[left] {\scriptsize 3} (V2);
					\draw[-] (E0) -- node[right] {\scriptsize 6} (V1);
					\draw[-] (E0) -- node[right] {\scriptsize 7} (V2);
					\draw[-] (E0) -- node[above] {\scriptsize 1} (V3);
					\draw[-] (E0) -- node[left] {\scriptsize 2} (V4);
					\draw[-] (E0) -- node[left] {\scriptsize 3} (V5);
					\draw[-] (E0) -- node[below] {\scriptsize 4} (V6);
					\draw[-] (E0) -- node[below] {\scriptsize 5} (V7);
	}}}}
	\qquad
	\qquad
	H_0 = 
	\vcenter{\hbox{{\tikz[baseline=.1ex]{
					\node[hyperedge] (E0) at (-1/1.2,-1/1.2) {$a_0$};
					\node[vertex] (N1) at (0/1.2,0/1.2) {};
					\node[vertex] (N2) at (-2/1.2,0/1.2) {};
					\node[vertex] (N3) at (-2/1.2,-2/1.2) {};
					\node[vertex] (N4) at (0/1.2,-2/1.2) {};
					\draw[-] (E0) -- node[right] {\scriptsize 6} (N4);
					\draw[-] (E0) to[bend right = 40] node[right] {\scriptsize 7} (N1);
					\draw[-] (E0) to[bend left = 40] node[left] {\scriptsize 1} (N2);
					\draw[-] (E0) to[bend left = 10] node[right] {\scriptsize 2} (N2);
					\draw[-] (E0) to[bend right = 40] node[above] {\scriptsize 3} (N2);
					\draw[-] (E0) to[bend left = 40] node[above] {\scriptsize 4} (N1);
					\draw[-] (E0) -- node[left] {\scriptsize 5} (N3);
	}}}}
	$$
	Finally, $Z_1 = Z + I$. We are going to prove that $H_0$ is generated by $(Z_1,\{A,B,C,a,b,c\},\{a_0\})$ if and only if $H_1$ is generated by the initial fusion grammar $FG$. Informally, we simply make a hypergraph $I$ with one big $a_0$-labeled hyperedge, to which we attach hyperedges of the target hypergraph $H_1$. It is not hard to see that, if we fuse $H_1$ and $I$, then the result is exactly $H_0$.
\end{example}

\begin{remark}\label{remark_eI}
	According to Construction \ref{construction_mem2}, $E_I = E_{H_1} \sqcup \{e_0\}$. Hereinafter, we shall consider the disjoint union $H_1 + I$ where the hyperedges of $H_1$ and $I$ shall be made disjoint. Let us make them disjoint in advance by saying that $E_I = \{e^I \mid e \in E_{H_1}\} \sqcup \{e_0\}$ where $e^I$ are fresh copies of hyperedges from $E_{H_1}$.
\end{remark}

\begin{lemma}\label{lemma_fusion_isomorphism}
	Let $H \in \mathcal{H}(T)$. Assume that $H+I \underset{\fr(P)}{\Rightarrow} Y$ for some $P$; let us define the function $\varphi:E_{H_1} \to E_H$ as follows: $\varphi(e) = e^\prime$ for $\{e^I,e^\prime\} \in P$. Then, $Y$ is isomorphic to $H_0$ if and only if $\varphi$ induces an isomorphism of $H$ and $H_1$ in the sense of Remark \ref{remark_isomorphism_isolated}. 
\end{lemma}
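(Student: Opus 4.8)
The plan is to reduce the whole equivalence to a comparison of two single-hyperedge hypergraphs. First I would pin down the structure of $P$. Every hyperedge of $H$ carries a label in $T \subseteq F_1$, every hyperedge of $I$ other than $e_0$ carries a label in $\overline{T} \subseteq \overline{F_1}$, and $e_0$ carries $a_0 \notin F_1 \cup \overline{F_1}$; since $T$ and $\overline{F_1}$ are disjoint, no two hyperedges of $H$ are complementary, no two of the $e^I$ are, and $e_0$ lies in no pair. Hence every pair of $P$ has the form $\{e^I, e'\}$ with $e \in E_{H_1}$, $e' \in E_H$ and $lab_H(e') = \overline{lab_I(e^I)} = lab_{H_1}(e)$, and, because the pairs of $P$ are pairwise disjoint, the rule $e \mapsto e'$ is a well-defined partial injection (this is the $\varphi$ of the statement) which respects labels and satisfies $P = \{\{e^I, \varphi(e)\} \mid e \in \mathrm{dom}(\varphi)\}$.

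I would then split on whether $\varphi$ is a total bijection $E_{H_1} \to E_H$. If it is not, then it cannot induce an isomorphism (Remark~\ref{remark_isomorphism_isolated} asks for a bijection), and at the same time some hyperedge of $H + I$ other than $e_0$ — a leftover $e^I$ or $e'$ — is not covered by $P$, hence survives in $Y$ with a label in $\overline{T}$ or $T$, in any case distinct from $a_0$; so $Y \not\cong H_0$ and both sides of the equivalence fail. This is the easy case.

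The substantive case is when $\varphi$ is a bijection. Then $\bigcup P = (E_I \setminus \{e_0\}) \cup E_H$, so $Y$ has the single hyperedge $e_0$ with $lab_Y(e_0) = a_0$, just like $H_0$. For single-hyperedge hypergraphs with equal label, $Y \cong H_0$ holds iff (a) $att_Y(e_0)$ and $att_{H_0}(e_0)$ have the same coincidence pattern on the index set $[N]$, and (b) $Y$ and $H_0$ have the same number of isolated vertices; since $H_1$, hence $H_0$, has none, (b) becomes ``$Y$ has no isolated vertices''. To evaluate (a) and (b) I would describe $\equiv_P$: writing $u_l = att_H(\varphi(e_{\pi_1(l)}))(\pi_2(l)) \in V_H$, the generating relation of $\equiv_P$ on $V_H \sqcup V_I$ consists precisely of the pairs $\hat l \leftrightarrow u_l$ (using $att_I(e^I_{\pi_1(l)})(\pi_2(l)) = \hat l$ and $att_I(e_0)(l) = \hat l$), so this graph is bipartite between $\{\hat 1, \dots, \hat N\}$ and $V_H$, and its classes are $\{u_l\} \cup \{\hat{l'} \mid u_{l'} = u_l\}$ together with the singletons $\{u\}$ for $u \in V_H$ isolated in $H$. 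Consequently $att_Y(e_0)(l) = [\hat l]_{\equiv_P}$, so $att_Y(e_0)(l_1) = att_Y(e_0)(l_2) \Leftrightarrow u_{l_1} = u_{l_2}$; comparing with $att_{H_0}(e_0)(l) = att_{H_1}(e_{\pi_1(l)})(\pi_2(l))$ and using that $l \mapsto (e_{\pi_1(l)}, \pi_2(l))$ runs over all pairs $(e,i)$ with $e \in E_{H_1}$, $1 \le i \le \type(e)$, condition (a) becomes exactly condition~1 of Remark~\ref{remark_isomorphism_isolated} for $\varphi$. Moreover every class $[\hat l]_{\equiv_P}$ is incident to $e_0$ in $Y$, so the isolated vertices of $Y$ are exactly the singleton classes, i.e. $Y$ has no isolated vertices iff $H$ has none.

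Putting these together: when $\varphi$ is a bijection, $Y \cong H_0$ iff $H$ has no isolated vertices and $\varphi$ satisfies condition~1, and condition~2 of Remark~\ref{remark_isomorphism_isolated} is automatic from the label-respecting property of $\varphi$; by that remark this is precisely the statement that $\varphi$ induces an isomorphism of $H$ and $H_1$. Combined with the non-bijective case, the lemma follows. The only genuinely delicate points I foresee are the careful handling of isolated vertices in the single-hyperedge comparison and the index bookkeeping $l \leftrightarrow (\pi_1(l),\pi_2(l))$; the bipartiteness argument for the $\equiv_P$-classes is then routine.
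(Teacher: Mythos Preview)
Your proof is correct, and it takes a somewhat different organisational route from the paper's. The paper analyses the equivalence $\equiv_P$ via the evidence-path machinery of Proposition~\ref{prop_evidence_path}: it argues that any evidence path between two vertices $(i_1,j_1),(i_2,j_2)\in V_I$ must have length exactly~$2$ (because each such vertex is attached to a unique $e_i^I$, forcing $(e_2^-,k_2^-)=(e_2^+,k_2^+)$ and hence $v_1=v_3$ if the path were longer), and from that deduces the coincidence criterion $(i_1,j_1)\equiv_P(i_2,j_2)\Leftrightarrow att_H(\varphi(e_{i_1}))(j_1)=att_H(\varphi(e_{i_2}))(j_2)$. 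It then proves the two directions of the equivalence separately, establishing in the $Y\cong H_0$ direction that $\varphi$ must be a bijection.

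You replace the evidence-path step by a direct description of the generating relation of $\equiv_P$ as a bipartite graph $\hat l\leftrightarrow u_l$ on $V_I\sqcup V_H$, from which the equivalence classes (and hence the same coincidence criterion and the characterisation of isolated vertices of $Y$) drop out immediately. You also front-load the case analysis on whether $\varphi$ is a total bijection, which lets you dispose of the degenerate case in one line and keeps the substantive case symmetric. The net effect is the same, but your argument is more self-contained (no appeal to the evidence-path framework) and slightly cleaner on the bookkeeping of isolated vertices; the paper's version, in turn, reuses machinery it has already built and will use again, so it is stylistically consistent with the rest of Section~\ref{ssec_results_mem1}.
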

\begin{proof}
	We start the proof with the following observation. The hypergraph $I$ is designed in such a way that all the attachment vertices of all hyperedges $e_1^I,\dotsc,e_k^I$ are distinct. Assume that two vertices $(i_1,j_1)$ and $(i_2,j_2)$ from $V_{I}$ are identified after the application of $\fr(P)$ and fix an evidence path between them; we shall use the same notation as in (\ref{equation_evidence_path}) from Proposition \ref{prop_evidence_path}. The vertex $v_0$ equals $(i_1,j_1)$ and $v_l$ equals $(i_2,j_2)$. Note that $H$ does not contain two complementary hyperedges as well as $I$. Therefore, the vertex $v_1$ must be from $V_H$ and the vertex $v_2$ must be from $V_I$. If $l > 2$, then $v_3$ must again be from $V_H$. However, since $v_2 = att_I(e_i^I)(j)$ holds for exactly one pair of the form $(e_i^I,j)$, it must be the case that $(e_2^-,k_2^-) = (e_2^+,k_2^+)$. As a consequence, $v_1 = v_3 = att_H(\varphi(e_i^I))(j)$. This contradicts the definition of an evidence path since $v_1$ and $v_3$ must be distinct. Thus $l = 2$, and ergo $(i_1,j_1) \equiv_P (i_2,j_2)$ holds if and only if $att_{H}(\varphi(e_{i_1}))(j_1) = att_{H}(\varphi(e_{i_2}))(j_2)$. (This follows from the definition of $\varphi$.)
	
	Let $H+I \underset{\fr(P)}{\Rightarrow} H_0$. First, note that $H$ does not contain isolated vertices because otherwise $H_0$ would, which is not the case. Then, let us show that $\varphi$ indeed induces an isomorphism of $H$ and $H_1$. It is not hard to see that $\varphi$ is a bijection because, after the application of $\fr(P)$, all the hyperedges in $E_{H+I}$ except for $e_0 \in E_I$ disappear. Furthermore, no two hyperedges within $H$ or within $I$ can be fused with each other because all labels in $H$ belong to $T$ while those in $I$ belong to $\overline{T} \cup \{a_0\}$, which is disjoint with $T$. Hence, for any pair $\{e^\prime,e^{\prime\prime}\} \in P$ one of the hyperedges is from $E_H$ while the other one is from $E_I \setminus \{e_0\} = E_{H_1}$. This proves that $\varphi$ is a bijection.
	
	Now, let us check the conditions from Remark \ref{remark_isomorphism_isolated}.
	\begin{enumerate}
		\item As we showed, $(i_1,j_1) \equiv_P (i_2,j_2)$ if and only if $att_{H}(\varphi(e_{i_1}))(j_1) = att_{H}(\varphi(e_{i_2}))(j_2)$. On the other hand, the definition of $H_0$ implies that $(i_1,j_1) \equiv_P (i_2,j_2)$ holds if and only if $att_{H_1}(e_{i_1})(j_1) = att_{H_1}(e_{i_2})(j_2)$. Thus the first condition is checked.
		\item It is the case that $lab_H(\varphi(e)) = \overline{lab_I(e^I)} = lab_{H_1}(e)$. The first equality holds because $\{e^I,\varphi(e)\} \in P$, while the second one holds according to the definition of $I$.
	\end{enumerate}
	Thus we have proved that $\varphi$ induces an isomorphism of $H$ and $H_1$.
	
	Conversely, let us prove the ``if'' direction of the lemma. Let $\varphi$ induce an isomorphism of $H$ and $H_1$. The rule $\fr(P)$ fuses $e^I \in E_I$ with $\varphi(e) \in E_H$. We know that $(i_1,j_1) \equiv_P (i_2,j_2)$ if and only if $att_{H}(\varphi(e_{i_1}))(j_1) = att_{H}(\varphi(e_{i_2}))(j_2)$. Since $\varphi$ induces an isomorphism, the latter is equivalent to $att_{H}(e_{i_1})(j_1) = att_{H}(e_{i_2})(j_2)$. Therefore, $(i_1,j_1) \equiv_P (i_2,j_2)$ if and only if $att_{H}(e_{i_1})(j_1) = att_{H}(e_{i_2})(j_2)$. This implies that the resulting hypergraph is isomorphic to $H_0$, as desired (this follows from the definition of $H_0$).
\end{proof}

\begin{lemma}\label{lemma_mem2_main}
	$H_1 \in L(FG)$ if and only if $H_0 \in L(FG_1)$. 
\end{lemma}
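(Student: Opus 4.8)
The plan is to transfer derivations between $FG$ and $FG_1$ using Lemma \ref{lemma_fusion_isomorphism} and the normal form of Proposition \ref{prop_nf}; the two directions turn out to be quite asymmetric in difficulty.

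For the ``only if'' direction, assume $H_1 \in L(FG)$. By Proposition \ref{prop_nf} there is a derivation $Z \Rightarrow m \cdot Z \underset{\fr(P)}{\Rightarrow} H + G$ such that some connected component $Y'$ of $H + G$ satisfies $Y' \cong H_1$ via an isomorphism whose hyperedge part I call $\psi$, and $Y' \in \mathcal{H}(T)$; write $H + G = Y' + R$. I would replay this inside $FG_1$: multiply $Z_1 = Z + I$ by the multiplicity agreeing with $m$ on $\mathcal{C}(Z)$ and equal to $1$ on $I$, obtaining $m \cdot Z + I$; apply $\fr(P)$ (legal in $FG_1$ since $F \cup \overline{F} \subseteq F_1 \cup \overline{F_1}$, and the rule only affects the $m \cdot Z$ part), obtaining $Y' + R + I$; and finally apply $\fr(P^*)$ with $P^* = \{\{e^I, \psi(e)\} \mid e \in E_{H_1}\}$, which is well defined because $lab_I(e^I) = \overline{lab_{H_1}(e)} = \overline{lab_{Y'}(\psi(e))}$. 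As $P^*$ only touches $E_{Y'} \cup E_I$, this step amounts to $Y' + I \underset{\fr(P^*)}{\Rightarrow} W$ with $R$ left alone, and Lemma \ref{lemma_fusion_isomorphism}, applied with $H := Y' \in \mathcal{H}(T)$, gives $W \cong H_0$, since $\psi$ — being the hyperedge part of an isomorphism $H_1 \to Y'$ — does induce an isomorphism of $Y'$ and $H_1$ in the sense of Remark \ref{remark_isomorphism_isolated}. Thus $W \cong H_0$ is a connected component with only the terminal label $a_0$ of the derived hypergraph $W + R$, so $H_0 \in L(FG_1)$. (Remark \ref{rem_parallelization} is used to realize the parallelized steps as genuine derivations.)

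For the ``if'' direction, assume $H_0 \in L(FG_1)$. Proposition \ref{prop_nf} gives $Z_1 \Rightarrow Z_1^+ \underset{\fr(Q)}{\Rightarrow} H^\dagger + G^\dagger$ with a connected component $Y^\dagger \cong H_0$, where $Z_1^+ = Z^+ + t \cdot I$ with $Z^+$ a multiplication of $Z$ and $t \ge 1$. Since $a_0$ is terminal, the unique $a_0$-labeled hyperedge of $Y^\dagger$ is the copy $e_0^*$ of $e_0$ from one particular copy $I^*$ of $I$, and $e_0^*$ survives $\fr(Q)$. As every vertex of $I^*$ is an attachment vertex of $e_0^*$, every other hyperedge $e_i^{I^*}$ of $I^*$ must be consumed by $\fr(Q)$ — otherwise it would belong to $Y^\dagger$, contradicting $Y^\dagger \cong H_0$. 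Its partner $\hat e_i$ has $lab(\hat e_i) = lab_{H_1}(e_i) \in T$, hence lies in $Z^+$ (copies of $I$ carry only labels in $\overline{T} \cup \{a_0\}$); the $\hat e_i$, $i = 1, \dotsc, k$, are pairwise distinct. Letting $Q_F \subseteq Q$ consist of the pairs having an endpoint labeled in $F$ — all such pairs lie in $E_{Z^+}$ — the derivation $Z \Rightarrow Z^+ \underset{\fr(Q_F)}{\Rightarrow} H^\natural$ is legal in $FG$, and I claim that the subhypergraph $C$ of $H^\natural$ on the hyperedges $\hat e_1, \dotsc, \hat e_k$ together with their attachment vertices is a connected component of $H^\natural$ isomorphic to $H_1$; this gives $H_1 \cong C \in L(FG)$.

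The heart of the argument — and the step I expect to be the main obstacle — is the analysis of $\equiv_Q$. First, since in $I$ each vertex $(i,j)$ is an attachment vertex of exactly one hyperedge ($e_i$) besides $e_0$, any $\equiv_Q$-chain between two vertices of $Z^+$ can only make trivial excursions into a copy of $I$ (entering at some $(i,j)^{(s)}$ and returning immediately along the same fused pair), so $\equiv_Q$ and $\equiv_{Q_F}$ coincide on $V_{Z^+}$. Second, the isomorphism $Y^\dagger \cong H_0$, combined with the defining property $att_{H_0}(e_0)(l) = att_{H_1}(e_{\pi_1(l)})(\pi_2(l))$ of $H_0$ and with the fusions $\{e_i^{I^*}, \hat e_i\}$ (which give $att(e_i^{I^*})(j) \equiv_Q att(\hat e_i)(j)$), forces the exact correspondence $att(\hat e_i)(j) \equiv_Q att(\hat e_{i'})(j') \iff att_{H_1}(e_i)(j) = att_{H_1}(e_{i'})(j')$. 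Together with the first point this is precisely condition (1) of Remark \ref{remark_isomorphism_isolated} for the bijection $e_i \mapsto \hat e_i$; condition (2) is immediate from the labels. Hence $C \cong H_1$ and, in particular, $C$ is connected. It then remains to check that no hyperedge of $H^\natural$ outside $C$ is incident to $V_C$: a hyperedge of $Z^+$ surviving $\fr(Q)$ and incident to $V_C$ would, after $\fr(Q)$, land in $Y^\dagger$ and thus equal $e_0^* \notin E_{Z^+}$, a contradiction; and a hyperedge paired in $Q$ with a hyperedge of a copy $I^{(s)}$ with $s \ne *$ that touched $V_C$ would force the $a_0$-hyperedge $e_0^{(s)}$ of $I^{(s)}$ into $Y^\dagger$, again contradicting $Y^\dagger \cong H_0$. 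Therefore $C$ is a connected component of $H^\natural$ lying in $\mathcal{H}(T)$, which concludes the proof.
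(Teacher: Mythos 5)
Your proof is correct and follows essentially the same route as the paper: the ``only if'' direction appends a single copy of $I$ and fuses it onto the derived copy of $H_1$ via Lemma \ref{lemma_fusion_isomorphism}, and the ``if'' direction isolates the copy $I^\ast$ containing the surviving $a_0$-hyperedge, observes that its remaining hyperedges must be consumed by partners $\hat e_i$ in the $Z$-part, restricts the fusion to the $F$-labelled pairs to obtain a derivation in $FG$, and shows the $\hat e_i$ form a connected component isomorphic to $H_1$. The only organizational difference is that in the ``if'' direction you analyze $\equiv_Q$ directly (inlining the ``trivial excursion'' observation and the attachment-pattern correspondence that the paper packages into Lemma \ref{lemma_fusion_isomorphism} and its two-stage decomposition $P \supseteq P' \supseteq P''$), which is equally valid.
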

\begin{proof}
	Let $H_1 \in L(FG)$. Then, there is a derivation of the following form:
	$$
	Z \Rightarrow m\cdot Z \underset{\fr(P)}{\Rightarrow} H_1 + G.
	$$
	Let us transform it into the following one:
	$$
	Z_1 \Rightarrow m\cdot Z + I \underset{\fr(P)}{\Rightarrow} H_1 + I + G \underset{\fr(P^\prime)}{\Rightarrow} H_1^\prime + G.
	$$
	The set $P^\prime$ that occurs at its last step equals $\{\{e,e^I\} \mid e \in E_{H_1}\}$; in other words, we fuse hyperedges of $H_1$ and $I$ with the same names. This is a valid derivation in $FG_1$ according to the definition of $I$. Lemma \ref{lemma_fusion_isomorphism} implies that the result $H_1^\prime$ of this fusion is isomorphic to $H_0$ (because $\varphi(e) = e$ induces the trivial isomorphism of $H_1$ and $H_1$). Thus $H_0 \in L(FG_1)$.
	
	The other way round, let $H_0 \in L(FG_1)$. Then, there is a derivation of the form:
	$$
	Z_1 \Rightarrow m \cdot Z_1 \underset{\fr(P)}{\Rightarrow} H_0 + G
	$$
	for some $m$, $P$ and $G$. Clearly, $c = m(I) \ge 1$; consequently, $m \cdot Z_1 = m^\prime \cdot Z + I_1 + \dotsc + I_c$ where $I_1,\dotsc,I_c$ are disjoint isomorphic copies of $I$ and $m^\prime = m \restriction_{\mathcal{C}(Z)}$. We know that $e_0 \in E_{H_0} \subseteq E_{m \cdot Z_1}$. Without loss of generality, let $e_0$ belong to $E_{I_1}$. Let $P^{(1)} = \{p \in P \mid p \cap E_{I_j} = \emptyset \mbox{ for all } j > 1 \}$. Consider the following decomposition of the above derivation (we can do this according to Remark \ref{rem_parallelization}):
	$$
	Z_1 \Rightarrow m^\prime \cdot Z + I_1 + I_2 + \dotsc + I_c \underset{\fr(P^{(1)})}{\Rightarrow} H^\prime + I_2 + \dotsc + I_c
	\underset{\fr(P \setminus P^{(1)})}{\Rightarrow} H_0 + G.
	$$
	The hypergraph $H^\prime$ contains the hyperedge $e_0$. We claim that $e_0$ is not adjacent to any other hyperedge in $H^\prime$, i.e.~that, for all $i \in [N]$, the only hyperedge incident to $att_{H^\prime}(e_0)(i)$ is $e_0$. The proof is ex falso. Assume that $att_{H^\prime}(e^\prime)(i^\prime) = att_{H^\prime}(e_0)(i)$ for some other $e^\prime \in E_{H^\prime}$. This hyperedge must disappear after the application of $\fr(P \setminus P^{(1)})$, hence $\{e^\prime,e^{\prime\prime}\} \in P \setminus P^{(1)}$ for some hyperedge $e^{\prime\prime}$. Therefore, $e^{\prime\prime}$ belongs to $E_{I_j}$ for some $j>1$ and $att_{H^\prime}(e^\prime)(i^\prime)$ is identified with $att_{I_j}(e^{\prime\prime})(i^\prime)$. However, there is an $a_0$-labeled hyperedge in $I_j$ that is incident to $att_{I_j}(e^{\prime\prime})(i^\prime)$; thus, after the application of $\fr(P \setminus P^{(1)})$, it becomes adjacent to $e_0$. This is a contradiction since $e_0$ in $H_0 + G$ is not adjacent to any other $a_0$-labeled hyperedge.
	
	As a consequence of the above reasonings, $H^\prime = H_0^\prime + G^\prime$ where $H_0^\prime$ is a connected hypergraph such that $E_{H_0^\prime} = \{e_0\}$. We conclude from this that the fusion rule application $\fr(P \setminus P^{(1)})$ affects only $G^\prime + I_2 + \dotsc + I_c$. This implies that $H_0^\prime = H_0$. Thus we can construct the following decomposition:
	$$
	Z_1 \Rightarrow m^\prime \cdot Z + I_1 \underset{\fr(P^{(1)})}{\Rightarrow} H_0 + G^\prime.
	$$
	In other words, we have proved that, if $H_0 \in L(FG_1)$, then we can derive it using exactly one copy of $I$.
	
	Now, let $P^{(2)} = \{p \in P^{(1)} \mid p \cap E_{I_1} = \emptyset \}$. Consider the following derivation:
	$$
	Z_1 \Rightarrow m^\prime \cdot Z + I_1 \underset{\fr(P^{(2)})}{\Rightarrow} H^{\prime\prime} + I_1
	\underset{\fr(P^{(1)} \setminus P^{(2)})}{\Rightarrow} H_0 + G^\prime.
	$$
	Let $E_{I_1} = \{e_1^I, \dotsc, e_k^I, e_0\}$  (we introduced this notation in Remark \ref{remark_eI}). Since each $e_i^I$ is adjacent to $e_0$, the hyperedge $e_i^I$ must disappear after the fusion $\fr(P^{(1)} \setminus P^{(2)})$. Therefore, $P^{(1)} \setminus P^{(2)} = \{\{e_i^{\prime\prime},e_i^I\} \mid i = 1,\dotsc, k\}$ for some hyperedges $e_i^{\prime\prime} \in E_{H^{\prime\prime}}$ ($i=1,\dotsc,k$).
	
	Consider the following hypergraph $H^{\prime\prime}_1$:
	\begin{itemize}
		\item $E_{H^{\prime\prime}_1} = \{e_1^{\prime\prime},\dotsc,e_k^{\prime\prime}\}$; \qquad $V_{H^{\prime\prime}_1}$ consists of the attachment vertices of the hyperedges from $E_{H^{\prime\prime}_1}$;
		\item $att_{H^{\prime\prime}_1} = att_{H^{\prime\prime}} {\restriction}_{E_{H^{\prime\prime}_1}}$; $lab_{H^{\prime\prime}_1} = lab_{H^{\prime\prime}} {\restriction}_{E_{H^{\prime\prime}_1}}$.
	\end{itemize}
	Clearly, $H^{\prime\prime}_1$ is a subhypergraph of $H^{\prime\prime}$. We claim that $H^{\prime\prime} \cong H^{\prime\prime}_1 + H^{\prime\prime}_2$ for some hypergraph $H^{\prime\prime}_2$. To prove this, it suffices to explain why there is no hyperedge $e \in E_{H^{\prime\prime}}$ that does not belong to $E_{H^{\prime\prime}_1}$ and that is attached to one of the vertices from $V_{H^{\prime\prime}_1}$. The proof is ex falso. Assume that there is such a hyperedge $e$. It does not disappear after the application of $\fr(P^{(1)} \setminus P^{(2)})$ (because it is not equal to any of $e_i^{\prime\prime}$); then, it is adjacent to $e_0$ in $H_0+G^\prime$. However, no hyperedge is adjacent to $e_0$ in $H_0+G^\prime$, which leads to a contradiction.
	
	Thus, we have established that $H^{\prime\prime} = H^{\prime\prime}_1 + H^{\prime\prime}_2$. The hypergraph $H^{\prime\prime}_2$ is not affected by the fusion $\fr(P^{(1)} \setminus P^{(2)})$, and we have that
	$$
	H^{\prime\prime}_1 + I_1
	\underset{\fr(P^{(1)} \setminus P^{(2)})}{\Rightarrow} H_0.
	$$
	Lemma \ref{lemma_fusion_isomorphism} implies that $H^{\prime\prime}_1$ is isomorphic to $H_1$. Finally, we have that
	$$
	Z \Rightarrow m^\prime \cdot Z \underset{\fr(P^{(2)})}{\Rightarrow} H_1^{\prime\prime} + H_2^{\prime\prime}
	$$
	Therefore, $H_1 \cong H_1^{\prime\prime} \in L(FG)$, as expected.
\end{proof}

\begin{proof}[Proof of Theorem \ref{th_mem2}]
	Given $FG$ and $H_1$, we introduce the fusion grammar $FG_1$ and the hypergraph $H_0$ according to Construction \ref{construction_mem2}. Then, $H_1 \in L(FG)$ if and only if $H_0 \in L(FG_1)$ according to Lemma \ref{lemma_mem2_main}. The latter problem is decidable according to Theorem \ref{th_mem1}.
\end{proof}

This almost proves decidability of the general problem MEM. What cases have not we covered? Only the case where the input hypergraph is a single vertex without hyperedges. (Note that fusion grammars without markers and connectors generate only connected hypergraphs.) Let $O = \langle \{v\},\emptyset,\emptyset,\emptyset\rangle$ be such a hypergraph. Let us prove decidability of the membership problem for it too.

\begin{problem}[MEM-O]
	\leavevmode
	\\
	\textit{Input.} 
	A fusion grammar without markers and connectors $FG = (Z,F,T)$.
	\\
	\textit{Problem MEM-O.} Does $O = \langle \{v\},\emptyset,\emptyset,\emptyset\rangle$ belong to $L(FG)$?
\end{problem}

\begin{proposition}\label{prop_mem-o}
	MEM-O is decidable.
\end{proposition}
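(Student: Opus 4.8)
The plan is to re-use the colour-based quantitative argument developed for \textit{MEM-1} (Lemmas~\ref{lemma_mem1_1} and~\ref{lemma_mem1_2}), in a much lighter form: since the target $O$ has no hyperedges, there are no connection pairs, no evidence paths and no conduction sets to maintain, and only two colours are needed. First I would dispose of the degenerate case: if some connected component of $Z$ is a single isolated vertex, then $O \in \mathcal{C}(Z)\cap\mathcal{H}(T)$, so $O\in L(FG)$; this is immediately checkable, so henceforth assume $Z$ has no isolated vertices, which means every vertex of every multiple $m\cdot Z$ is incident to a hyperedge.

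Then I would introduce the colour set $\Colors_O=\{g,w\}$, call a colouring $col:V_C\to\Colors_O$ of a connected component $C\in\mathcal{C}(Z)$ \emph{admissible} if no terminal hyperedge of $C$ has a green attachment vertex, and, for such a colouring, relabel each non-terminal hyperedge $e$ by $(lab_C(e),f_e)$ with $f_e(i)=col(att_C(e)(i))$ (terminal hyperedges keep their label), declaring $(\sigma,f)$ \emph{white} when $f\equiv w$ and setting $\overline{(\sigma,f)}=(\overline\sigma,f)$. Let $S_O$ be the finite set of all admissibly coloured, relabeled connected components of $Z$, let $S_O^b\subseteq S_O$ be those containing at least one non-white hyperedge, and let $\beta(H)$ be the balance vector of the non-white fusion labels of $H$, exactly as in Definition~\ref{def_mem}. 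Define
$$
\Mem_O(FG)\eqdef\bigcup_{B\in S_O^b}\left\{\beta(B)+\sum_{C\in S_O}k(C)\,\beta(C)\ \middle|\ k(C)\in\Nat\right\}.
$$

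The claim to establish is: $O\in L(FG)$ iff $Z$ has an isolated-vertex component or $\vec0\in\Mem_O(FG)$; decidability then follows exactly as in Remark~\ref{remark_decidability_linear_integer_programming}. For the ``only if'' direction I would take a most parallelized derivation $Z\Rightarrow m\cdot Z\underset{\fr(P)}{\Rightarrow}H+G$ in which some connected component is a single vertex, pick a preimage $v^\ast\in V_{m\cdot Z}$ of that vertex, colour $u$ green exactly when $u\equiv_P v^\ast$, and relabel as above; the colouring is admissible because a terminal hyperedge is never fused, so one at a green vertex would survive attached to the isolated class. Since $att(e)(i)\equiv_P att(\overline e)(i)$ for $\{e,\overline e\}\in P$, paired hyperedges get complementary labels, while an unpaired non-terminal hyperedge survives and hence (being attached to no green vertex) is white; thus every non-white label is balanced, and the component carrying $v^\ast$ contributes a non-white hyperedge, so $\vec0\in\Mem_O(FG)$. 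For the ``if'' direction I would assemble $m\cdot Z$ from one uncoloured copy of the chosen $B$ and $k(C)$ copies of each $C$, pair complementary non-white hyperedges into $P$ (possible since $\beta$ vanishes), leave white and terminal hyperedges unpaired, and apply $\fr(P)$: green vertices merge only with green ones, terminal and white-labeled hyperedges touch only white vertices, and every hyperedge at a green vertex is non-white hence fused, so each green vertex becomes isolated and at least one exists (from $B$); therefore $O\in L(FG)$.

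The step I expect to be the main obstacle is the bookkeeping in the two directions, just as in Section~\ref{ssec_results_mem1}: verifying that the green/white partition is genuinely respected by $\equiv_P$ and that no surviving (terminal or white-labeled) hyperedge can become incident to the would-be isolated class. This is the colouring-condition argument of Lemmas~\ref{lemma_mem1_1} and~\ref{lemma_mem1_2} stripped of its evidence-path layer, so it should be notably shorter. (An alternative route is to reduce \textit{MEM-O} to \textit{MEM-1}: adjoin a fresh unary terminal $t_0$ and, for every vertex $v$ of every component $Z_i$, a copy $Z_i^{v}$ of $Z_i$ with an extra $t_0$-hyperedge on $v$, then ask whether the one-hyperedge hypergraph carrying a single $t_0$ is generated; this needs a surgery lemma in the style of Lemma~\ref{lemma_mem2_main} to discard spurious copies of the $Z_i^{v}$.)
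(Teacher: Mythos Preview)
Your main argument is correct and is a genuinely different, more direct route than the paper's. The paper does not redo the colouring machinery for $O$; instead it takes exactly your ``alternative route'': introduce a fresh unary terminal $o$, form $Z^o = Z + \sum_{C \in \mathcal{C}(Z)}\sum_{v \in V_C} o(C,v)$, and show $O \in L(FG)$ iff the single-$o$-hyperedge hypergraph $H^o$ lies in $L(FG^o)$, which is an instance of \textit{MEM-1}. Your direct approach buys a self-contained argument that exposes precisely which parts of the MEM-1 machinery are doing the work (only the colouring condition, no evidence paths), at the cost of repeating a specialised version of Lemmas~\ref{lemma_mem1_1}--\ref{lemma_mem1_2}; the paper's reduction buys brevity and reuse of Theorem~\ref{th_mem1} as a black box.

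One remark on your alternative route: the ``surgery lemma in the style of Lemma~\ref{lemma_mem2_main}'' you anticipate is not actually needed. The paper's backward direction is a one-liner: given a derivation $Z^o \Rightarrow^\ast H^o + G$ in $FG^o$, simply apply $\rem_{\{o\}}$ to every hypergraph in the derivation. Since $o$ is terminal it never participates in a fusion, so the derivation remains valid in $FG$ and yields $Z \Rightarrow^\ast O + \rem_{\{o\}}(G)$. No need to argue that extra $o$-carrying components can be discarded.
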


We are going to reduce this problem to MEM-1. Let us add a new terminal label $o$ to $T$ such that $\type(o) = 1$. Given $C \in \mathcal{C}(Z)$ and $v \in V_C$, let $o(C,v)$ denote the hypergraph obtained from $C$ by attaching an $o$-labeled hyperedge to $v$. Let us define $FG^o = (Z^o,F,T \sqcup \{o\})$ where $Z^o$ is as follows: 
$$
Z^o = Z + \sum_{C \in \mathcal{C}(Z)} \sum_{v \in V_C} o(C,v).
$$

{\color{comment}
	Informally, we take each connected component $C$ of $Z$ and attach an $o$-labeled hyperedge to one of its vertices; the disjoint union of all such hypergraphs along with connected components from $Z$ is $Z^o$.
}

Let $H^o = \vcenter{\hbox{{\tikz[baseline=.1ex]{
				\node[hyperedge] (E) at (0,0) {$o$};
				\node[vertex] (V) at (0.7,0) {};
				\draw[-] (E) -- node[above] {\scriptsize 1} (V);
}}}}
$ be the hypergraph that has one $o$-labeled hyperedge attached to the only vertex of $H^o$.
\begin{lemma}
	$O \in L(FG)$ if and only if $H^o \in L(FG^o)$.
\end{lemma}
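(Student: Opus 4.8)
The plan is to prove both implications by transferring normal-form derivations (Proposition~\ref{prop_nf}) between the two grammars, exploiting the single structural fact that a parallelized fusion rule never touches a terminal hyperedge: attaching, or deleting, a pendant $o$-labeled hyperedge commutes with any application of $\fr(P)$.

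For the ``only if'' direction I would start from $O \in L(FG)$, invoke Proposition~\ref{prop_nf} to obtain a derivation $Z \Rightarrow m \cdot Z \underset{\fr(P)}{\Rightarrow} H + G$ with a connected component $Y \in \mathcal{C}(H) \subseteq \mathcal{C}(H+G)$, $Y \cong O$, so that $Y$ is a single vertex. Since a vertex of $H+G$ is an $\equiv_P$-class of vertices of $m \cdot Z$, write $Y = \{[u]_{\equiv_P}\}$ for some $u \in V_{m \cdot Z}$; the vertex $u$ lies in a copy of some $C^0 \in \mathcal{C}(Z)$, corresponding to a vertex $w_0 \in V_{C^0}$. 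I would then alter only the multiplication step: replace this one copy of $C^0$ by a copy of the connected component $o(C^0,w_0) \in \mathcal{C}(Z^o)$ (formally, use the multiplicity $m'$ on $\mathcal{C}(Z^o)$ with $m'(C^0) = m(C^0)-1$, $m'(o(C^0,w_0)) = 1$, equal to $m$ on the other components of $Z$, and $0$ on the other $o$-components). Now $m' \cdot Z^o$ is $m \cdot Z$ with one extra pendant $o$-hyperedge $e^o$ attached to $u$; applying the same $\fr(P)$ (which cannot use $e^o$, as $o$ is terminal) produces $H+G$ with $e^o$ attached to $[u]_{\equiv_P}$. Since $Y = \{[u]_{\equiv_P}\}$ has no other incident hyperedge, the component containing $[u]_{\equiv_P}$ is now one vertex carrying exactly one $o$-hyperedge, which is $\cong H^o$ and lies in $\mathcal{H}(T \sqcup \{o\})$; hence $H^o \in L(FG^o)$.

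For the ``if'' direction I would start from $H^o \in L(FG^o)$, apply Proposition~\ref{prop_nf} to get $Z^o \Rightarrow n \cdot Z^o \underset{\fr(Q)}{\Rightarrow} H' + G'$ with a component $Y' \cong H^o$: one vertex $[u]_{\equiv_Q}$ plus a single incident $o$-hyperedge and nothing else. Because $Z$ has no $o$-hyperedges, every $o$-hyperedge of $n \cdot Z^o$ originates from a copy of some $o(C^0,v)$, so $n$ is positive on at least one $o$-component. I would ``project'' $n$ to a multiplicity $m''$ on $\mathcal{C}(Z)$ by $m''(C^0) = n(C^0) + \sum_{v \in V_{C^0}} n(o(C^0,v))$, so that $m'' \cdot Z \cong \rem_{\{o\}}(n \cdot Z^o)$ is nonempty. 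Since no pair of $Q$ contains an $o$-hyperedge, $Q$ is still a valid parallelized fusion rule on $m'' \cdot Z$, the quotient $\equiv_Q$ is unchanged, and $\rem_{\{o\}}$ commutes with $\fr(Q)$, giving
\[
  Z \Rightarrow m'' \cdot Z \underset{\fr(Q)}{\Rightarrow} \rem_{\{o\}}(H' + G').
\]
Deleting the $o$-hyperedges destroys $Y'$'s unique $o$-hyperedge and creates no new adjacency, so the vertex of $Y'$ becomes an isolated vertex, hence a connected component of $\rem_{\{o\}}(H'+G') \in \mathcal{H}(T)$ that is $\cong O$; therefore $O \in L(FG)$.

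The point that needs genuine care --- everything else is bookkeeping about multiplicities and disjoint unions --- is the commutation of adding/deleting a pendant $o$-hyperedge with the application of a parallelized fusion rule, together with the claim that the distinguished one-vertex component stays a \emph{maximal} connected subhypergraph after the modification. Both reduce to the observation that $o$ is a terminal label: $o$-hyperedges never belong to any fusion pair, the quotient relation $\equiv_P$ (resp.\ $\equiv_Q$) depends only on the fused hyperedges, which are left intact; and in $H+G$ (resp.\ $H'+G'$) the vertex $[u]_{\equiv_P}$ (resp.\ $[u]_{\equiv_Q}$) has no incident hyperedge other than (at most) the pendant $o$-hyperedge, so neither adding nor removing that hyperedge can link the component to the rest of the graph.
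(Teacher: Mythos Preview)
Your proof is correct and follows essentially the same approach as the paper's. Both directions hinge on the observation that $o$-labeled hyperedges are terminal and therefore never participate in fusion pairs, so attaching or deleting a pendant $o$-hyperedge commutes with $\fr(P)$; the paper's ``if'' direction is phrased more tersely (it simply applies $\rem_{\{o\}}$ to the entire derivation $Z^o \Rightarrow^\ast H^o + G$ rather than first passing to normal form and explicitly constructing the projected multiplicity $m''$), but your more detailed bookkeeping is equivalent.
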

\begin{proof}
	Let $O \in L(FG)$. Then there is a most parallelised derivation of the form 
	$$
	Z \underset{m}{\Rightarrow} Z^\prime \underset{\fr(P)}{\Rightarrow} O + G
	$$
	where $Z^\prime = m \cdot Z$. The hypergraph $O$ has one vertex (say, $v_O$) and no hyperedges. This only vertex is the equivalence class of vertices in $V_{Z^\prime}$, i.e.~$v_O = [v^\prime]_{\equiv_P}$ for some $v^\prime \in V_{Z^\prime}$. The latter vertex $v^\prime$ belongs to one of the connected components $C^\prime \in \mathcal{C}(Z^\prime)$. Let us replace this component by $o(C^\prime,v^\prime)$. The new derivation will be a valid one in the new grammar $FG^o$, and the resulting hypergraph will be $H^o+G$, hence $H^o \in L(FG^o)$.
	
	Conversely, let $H^o \in L(FG^o)$; this means that there is a derivation $Z^o \Rightarrow^\ast H^o + G$. Removing all $o$-labeled hyperedges from hypergraphs within it results in a derivation $Z \Rightarrow^\ast O + \rem_{\{o\}}(G)$, hence $O \in L(FG)$.
\end{proof}

Finally, deciding whether $H^o \in L(FG^o)$ is an instance of MEM-1, for which an algorithm has already been presented. This completes the proof of Theorem \ref{th_main_membership}.

\subsection{Non-Emptiness Problem}\label{ssec_results_ne}

Now, we are going to prove Theorem \ref{th_main_non-emptiness}. It turns out that the non-emptiness problem NE can be reduced to the membership problem for fusion grammars without markers and connectors, namely, to MEM-1.

Assume that we are given a fusion grammar $FG = (Z,F,T, \mathcal{M},\mathcal{K})$ for which we aim to answer whether $L(FG)$ is empty or not. Let us extend the terminal alphabet $T$ by a new label $a_0$ such that $\type(a_0) = 1$.\footnote{This is not the same $a_0$ as that from Construction \ref{construction_mem2}. We use the same symbol because we are going to reduce the non-emptiness problem to MEM-1, and $a_0$ plays the same role as in Section \ref{ssec_results_mem1}.}
\begin{definition}
	The \emph{squeezing function} $\sqz$ is defined on $\mathcal{H}(\Sigma)$ as follows: $\sqz(H)$ equals the hypergraph $H/{\equiv}$ where $\equiv$ is the smallest equivalence relation such that, whenever $lab_H(e)$ belongs to $T \cup \mathcal{M} \cup \mathcal{K}$ for some $e \in E_H$, it holds that $att_H(e)(i) \equiv att_H(e)(j)$ for all $i,j \in [\type(e)]$.
\end{definition}
\begin{example}
	Let us consider the fusion grammar $\FGPex = (\Zex,\{A,B,C\},\{a\},\{b\},\{c\})$ from Example \ref{example_FG} and let us apply the squeezing function to each $\Zex_i$.
	$$
	\sqz(\Zex_0) \;\;=\;\; 
	\vcenter{\hbox{{\tikz[baseline=.1ex]{
					\node[vertex] (VO) at (0,0) {};
					\node[vertex] (VB) at (0,-0.7) {};
					\draw[-latex, thick] (VB) to[out=-60,in=-120,looseness=20] node[below] {$a$} (VB);
					\draw[-latex, thick] (VB) to[bend left = 40] node[left] {$A$} (VO);
					\draw[-latex, thick] (VB) to[bend right = 40] node[right] {$C$} (VO);
	}}}}
	\qquad\qquad
	\sqz(\Zex_1) = \Zex_1 \;\;=\;\; 
	\vcenter{\hbox{{\tikz[baseline=.1ex]{
					\node[vertex] (V1) at (0,0) {};
					\node[vertex] (V2) at (0,0.7) {};
					\node[vertex] (V3) at (0,1.4) {};
					\draw[-latex, thick] (V1) -- node[right] {$\overline{A}$} (V2);
					\draw[-latex, thick] (V2) -- node[right] {$B$} (V3);
	}}}}
	\qquad\qquad
	\sqz(\Zex_i) = \Zex_i \mbox{ for } i=2,3,4
	$$
	$$
	\sqz(\Zex_5) \;\;=\;\; 
	\vcenter{\hbox{{\tikz[baseline=.1ex]{
					\node[vertex] (V) at (0,0) {};
					\node[hyperedge] (EL) at (-0.1,-0.9) {$c$};
					\draw[-latex, thick] (V) to[out=-20,in=20,looseness=40] node[right] {$b$} (V);
					\draw[-latex, thick] (V) to[out=60,in=100,looseness=40] node[above] {$b$} (V);
					\draw[-latex, thick] (V) to[out=140,in=180,looseness=40] node[left] {$\overline{A}$} (V);
					\draw[-] (EL) to[bend left = 60] node[left] {\scriptsize 1} (V);
					\draw[-] (EL) to[bend left = 20] node[right] {\scriptsize 2} (V);
					\draw[-] (EL) to[bend right = 40] node[right] {\scriptsize 3} (V);
	}}}}
	\qquad\qquad
	\sqz(\Zex_6) \;\;=\;\; 
	\vcenter{\hbox{{\tikz[baseline=.1ex]{
					\node[vertex] (V) at (0,0) {};
					\node[hyperedge] (EL) at (-0.1,-0.9) {$c$};
					\draw[-latex, thick] (V) to[out=0,in=60,looseness=30] node[right] {$a$} (V);
					\draw[-latex, thick] (V) to[out=120,in=180,looseness=30] node[left] {$\overline{C}$} (V);
					\draw[-] (EL) to[bend left = 60] node[left] {\scriptsize 1} (V);
					\draw[-] (EL) to[bend left = 20] node[right] {\scriptsize 2} (V);
					\draw[-] (EL) to[bend right = 40] node[right] {\scriptsize 3} (V);
	}}}}
	$$
\end{example}

\begin{definition}
	Given a hypergraph $H$ along with a hyperedge $e_0 \in E_H$, the \emph{designating function} $\dsg(H,e_0)$ is defined as follows: 
	\begin{enumerate}
		\item if $att_H(e_0) = v_0\dotsc v_0$ (i.e.~all the attachment vertices of $e_0$ coincide) and $lab_H(e_0) \in \mathcal{M}$, then $\dsg(H,e_0) = H^\prime$ such that $V_{H^\prime} = V_H$, $E_{H^\prime} = E_H$, $att_{H^\prime}(e_0) = v_0$, $lab_{H^\prime}(e_0) = a_0$, and $att_{H^\prime}$ coincides with $att_H$ for all the remaining hyperedges as well as $lab_{H^\prime}$ and $lab_H$;
		\item if the condition of the previous sentence is not met, then the output is the empty hypergraph.
	\end{enumerate}
	The \emph{sum of designated hypergraphs} $\Dsg(H)$ equals $H + \sum_{e \in E_H} \dsg(H,e)$.
\end{definition}

{\color{comment}
	The designating function is designed to be applied to a hypergraph after the squeezing function. It takes a hyperedge with a marker label and replaces it with a hyperedge labeled by the new special label $a_0$ of type 1. Formally, $\dsg(H,e_0)$ is defined even if $e_0$ is not labeled by an element of $\mathcal{M}$. This is done in order to define $\Dsg(H)$ without making additional clarifications regarding the summation limits. Indeed, if $e_0$ is not labeled by a marker, then $\dsg(H,e_0)$ is the empty hypergraph and it makes no contribution to the sum of designated hypergraphs.
}

\begin{example}
	In the grammar $\FGPex$, the only marker symbol is $b$. Since, for each $\Zex_i$ such that $i\ne 5$ and for each hyperedge $e \in E_{\Zex_i}$, the label $lab_{\Zex_i}(e)$ is not the marker $b$, it holds that $\dsg(\sqz(\Zex_i),e)$ is the empty hypergraph. The hypergraph $\Zex_5$, however, contains two $b$-labeled hyperedges (denote them by $e^b_1$ and $e^b_2$). Then
	$$
	\dsg(\sqz(\Zex_5),e^b_1) = 
	\vcenter{\hbox{{\tikz[baseline=.1ex]{
					\node[vertex] (V) at (0,0) {};
					\node[hyperedge] (EL) at (-0.1,-0.9) {$c$};
					\node[hyperedge] (EB) at (0.9,0.1) {$a_0$};
					\draw[-] (EB) -- node[above] {\scriptsize 1} (V);
					\draw[-latex, thick] (V) to[out=60,in=100,looseness=40] node[above] {$b$} (V);
					\draw[-latex, thick] (V) to[out=140,in=180,looseness=40] node[left] {$\overline{A}$} (V);
					\draw[-] (EL) to[bend left = 60] node[left] {\scriptsize 1} (V);
					\draw[-] (EL) to[bend left = 20] node[right] {\scriptsize 2} (V);
					\draw[-] (EL) to[bend right = 40] node[right] {\scriptsize 3} (V);
	}}}}
	\qquad
	\qquad
	\dsg(\sqz(\Zex_5),e^b_2) = 
	\vcenter{\hbox{{\tikz[baseline=.1ex]{
					\node[vertex] (V) at (0,0) {};
					\node[hyperedge] (EL) at (-0.1,-0.9) {$c$};
					\node[hyperedge] (EB) at (0.2,0.7) {$a_0$};
					\draw[-] (EB) -- node[left] {\scriptsize 1} (V);
					\draw[-latex, thick] (V) to[out=-20,in=20,looseness=40] node[right] {$b$} (V);
					\draw[-latex, thick] (V) to[out=140,in=180,looseness=40] node[left] {$\overline{A}$} (V);
					\draw[-] (EL) to[bend left = 60] node[left] {\scriptsize 1} (V);
					\draw[-] (EL) to[bend left = 20] node[right] {\scriptsize 2} (V);
					\draw[-] (EL) to[bend right = 40] node[right] {\scriptsize 3} (V);
	}}}}
	$$
	Consequently, the sum of designated hypergraphs $\Dsg(\sqz(\Zex_i))$ equals $\sqz(\Zex_i)$ for $i \ne 5$ (because \\ $\Dsg(\sqz(\Zex_i)) = \sqz(\Zex_i) + \sum_{e \in E_{\Zex_i}} \dsg(\sqz(\Zex_i),e)$, and all the summands except for the first one are empty hypergraphs). The situation is different only for $\sqz(\Zex_5)$.
	$$
	\Dsg(\sqz(\Zex_5))
	=
	\vcenter{\hbox{{\tikz[baseline=.1ex]{
					\node[vertex] (V) at (0,0) {};
					\node[hyperedge] (EL) at (-0.1,-0.9) {$c$};
					\draw[-latex, thick] (V) to[out=-20,in=20,looseness=40] node[right] {$b$} (V);
					\draw[-latex, thick] (V) to[out=60,in=100,looseness=40] node[above] {$b$} (V);
					\draw[-latex, thick] (V) to[out=140,in=180,looseness=40] node[left] {$\overline{A}$} (V);
					\draw[-] (EL) to[bend left = 60] node[left] {\scriptsize 1} (V);
					\draw[-] (EL) to[bend left = 20] node[right] {\scriptsize 2} (V);
					\draw[-] (EL) to[bend right = 40] node[right] {\scriptsize 3} (V);
	}}}}
	\;+\;
	\vcenter{\hbox{{\tikz[baseline=.1ex]{
					\node[vertex] (V) at (0,0) {};
					\node[hyperedge] (EL) at (-0.1,-0.9) {$c$};
					\node[hyperedge] (EB) at (0.9,0.1) {$a_0$};
					\draw[-] (EB) -- node[above] {\scriptsize 1} (V);
					\draw[-latex, thick] (V) to[out=60,in=100,looseness=40] node[above] {$b$} (V);
					\draw[-latex, thick] (V) to[out=140,in=180,looseness=40] node[left] {$\overline{A}$} (V);
					\draw[-] (EL) to[bend left = 60] node[left] {\scriptsize 1} (V);
					\draw[-] (EL) to[bend left = 20] node[right] {\scriptsize 2} (V);
					\draw[-] (EL) to[bend right = 40] node[right] {\scriptsize 3} (V);
	}}}}
	\;+\;
	\vcenter{\hbox{{\tikz[baseline=.1ex]{
					\node[vertex] (V) at (0,0) {};
					\node[hyperedge] (EL) at (-0.1,-0.9) {$c$};
					\node[hyperedge] (EB) at (0.2,0.7) {$a_0$};
					\draw[-] (EB) -- node[left] {\scriptsize 1} (V);
					\draw[-latex, thick] (V) to[out=-20,in=20,looseness=40] node[right] {$b$} (V);
					\draw[-latex, thick] (V) to[out=140,in=180,looseness=40] node[left] {$\overline{A}$} (V);
					\draw[-] (EL) to[bend left = 60] node[left] {\scriptsize 1} (V);
					\draw[-] (EL) to[bend left = 20] node[right] {\scriptsize 2} (V);
					\draw[-] (EL) to[bend right = 40] node[right] {\scriptsize 3} (V);
	}}}}
	$$
\end{example}

\begin{construction}\label{construction_ne}
	First, hereinafter, let $H_0$ denote the hypergraph $\vcenter{\hbox{{\tikz[baseline=.1ex]{
					\node[hyperedge] (E) at (0,0) {$a_0$};
					\node[vertex] (V) at (0.7,0) {};
					\draw[-] (E) -- node[above] {\scriptsize 1} (V);
	}}}}$ with one $a_0$-labeled hyperedge attached to the only vertex. Secondly, let us define a fusion grammar without markers and connectors $\widehat{FG} = (\Xi, F, \{a_0\})$ where
	$$
	\Xi = \sum_{C \in \mathcal{C}(Z)} \rem_{T \cup \mathcal{M} \cup \mathcal{K}} \left(\Dsg(\sqz(C))\right).
	$$
\end{construction}

\begin{example}
	The resulting hypergraph $\Xi$ constructed from the fusion grammar $\FGPex$ is shown below.
	$$
	\Xi = 
	\vcenter{\hbox{{\tikz[baseline=.1ex]{
					\node[vertex] (VO) at (0,0) {};
					\node[vertex] (VB) at (0,-1) {};
					\draw[-latex, thick] (VB) to[bend left = 30] node[left] {$A$} (VO);
					\draw[-latex, thick] (VB) to[bend right = 30] node[right] {$C$} (VO);
	}}}}
	\:+\:
	\vcenter{\hbox{{\tikz[baseline=.1ex]{
					\node[vertex] (V1) at (0,0) {};
					\node[vertex] (V2) at (0,0.8) {};
					\node[vertex] (V3) at (0,1.6) {};
					\draw[-latex, thick] (V1) -- node[right] {$\overline{A}$} (V2);
					\draw[-latex, thick] (V2) -- node[right] {$B$} (V3);
	}}}}
	\:+\:
	\vcenter{\hbox{{\tikz[baseline=.1ex]{
					\node[vertex] (V1) at (0,0) {};
					\node[vertex] (V2) at (0,0.8) {};
					\node[vertex] (V3) at (0,1.6) {};
					\draw[-latex, thick] (V1) -- node[right] {$\overline{C}$} (V2);
					\draw[-latex, thick] (V2) -- node[right] {$B$} (V3);
	}}}}
	\:+\:
	\vcenter{\hbox{{\tikz[baseline=.1ex]{
					\node[vertex] (V1) at (0,0) {};
					\node[vertex] (V2) at (0,1) {};
					\draw[-latex, thick] (V1) to[bend left=30] node[left] {$\overline{A}$} (V2);
					\draw[-latex, thick] (V2) to[bend left=30] node[right] {$B$} (V1);
	}}}}
	\:+\:
	\vcenter{\hbox{{\tikz[baseline=.1ex]{
					\node[vertex] (V1) at (0,0) {};
					\node[vertex] (V2) at (0,1) {};
					\draw[latex-, thick] (V1) to[bend left=30] node[left] {$\overline{B}$} (V2);
					\draw[latex-, thick] (V2) to[bend left=30] node[right] {$\overline{C}$} (V1);
	}}}}
	\:+\:
	\vcenter{\hbox{{\tikz[baseline=.1ex]{
					\node[vertex] (V) at (0,0) {};
					\draw[-latex, thick] (V) to[out=120,in=180,looseness=20] node[left] {$\overline{A}$} (V);
	}}}}
	\:+\:
	\vcenter{\hbox{{\tikz[baseline=.1ex]{
					\node[vertex] (V) at (0,0) {};
					\node[hyperedge] (EB) at (0.6,0.1) {$a_0$};
					\draw[-] (EB) -- node[above] {\scriptsize 1} (V);
					\draw[-latex, thick] (V) to[out=120,in=180,looseness=20] node[left] {$\overline{A}$} (V);
	}}}}
	\:+\:
	\vcenter{\hbox{{\tikz[baseline=.1ex]{
					\node[vertex] (V) at (0,0) {};
					\node[hyperedge] (EB) at (0,0.9) {$a_0$};
					\draw[-] (EB) -- node[right] {\scriptsize 1} (V);
					\draw[-latex, thick] (V) to[out=120,in=180,looseness=20] node[left] {$\overline{A}$} (V);
	}}}}
	\:+\:
	\vcenter{\hbox{{\tikz[baseline=.1ex]{
					\node[vertex] (V) at (0,0) {};
					\draw[-latex, thick] (V) to[out=120,in=180,looseness=20] node[left] {$\overline{C}$} (V);
	}}}}
	$$
\end{example}

Our objective is to prove that the nonemptiness problem for $FG$ is equivalent to the membership problem for $\widehat{FG}$ and $H_0$.
\begin{lemma}\label{lemma_ne_1}
	If $L(FG)$ is non-empty, then $H_0$ belongs to $L(\widehat{FG})$.
\end{lemma}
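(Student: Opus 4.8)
The plan is to mirror a derivation witnessing $L(FG)\ne\emptyset$ by a derivation of $H_0$ inside $\widehat{FG}$, using that $\sqz$, $\Dsg$ and $\rem_{T\cup\mathcal{M}\cup\mathcal{K}}$ never delete or relabel a hyperedge whose label lies in $F\cup\overline F$ (so all of the fusion machinery survives), whereas the ``terminal part'' reachable from a marker hyperedge gets squeezed to one vertex and erased, leaving exactly the type-$1$ $a_0$-hyperedge that constitutes $H_0$. Concretely: since $L(FG)\ne\emptyset$, the definition of $L(FG)$ together with Proposition \ref{prop_nf} provides a derivation $Z\Rightarrow m\cdot Z\underset{\fr(P)}{\Rightarrow}H$ with $Y\in\mathcal{C}(H)\cap\mathcal{H}(T\cup\mathcal{M}\cup\mathcal{K})$ and $\#_{\mathcal{M}}(Y)\ge1$ (the ``garbage'' hypergraph of the normal form being absorbed into $H$). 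Fix a marker hyperedge $e^\ast\in E_Y$; a fusion step neither deletes nor relabels markers, so $e^\ast$ is already a marker hyperedge of $m\cdot Z$, lying in some component $Z_{i^\ast}$ of $m\cdot Z=Z_1+\dots+Z_k$ (each $Z_j$ a copy of a component of $Z$). I then set
$$
\widehat Z=\rem_{T\cup\mathcal{M}\cup\mathcal{K}}\bigl(\dsg(\sqz(Z_{i^\ast}),e^\ast)\bigr)+\sum_{j\ne i^\ast}\rem_{T\cup\mathcal{M}\cup\mathcal{K}}\bigl(\sqz(Z_j)\bigr).
$$
Since $e^\ast$ is a marker, all its attachment vertices coincide after squeezing, so $\dsg(\sqz(Z_{i^\ast}),e^\ast)$ is a non-empty summand of $\Dsg(\sqz(Z_{i^\ast}))$; as $\rem_{T\cup\mathcal{M}\cup\mathcal{K}}$ distributes over $+$, every connected component of $\widehat Z$ is isomorphic to a connected component of $\Xi$, so $\Xi\Rightarrow\widehat Z$ is a legal multiplication step.

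Next I would transfer the fusion rule. Because none of $\sqz$, $\dsg(\cdot,e^\ast)$, $\rem_{T\cup\mathcal{M}\cup\mathcal{K}}$ touches any $F\cup\overline F$-labelled hyperedge, the fusion hyperedges of $m\cdot Z$ stand in a label-preserving bijection with those of $\widehat Z$; let $\widehat P$ be the image of $P$ under this bijection. Then $\fr(\widehat P)$ is applicable to $\widehat Z$ (its pairs remain pairwise disjoint and complementary, and $F$ is also the fusion alphabet of $\widehat{FG}$), so by Remark \ref{rem_parallelization} there is a derivation $\Xi\Rightarrow\widehat Z\underset{\fr(\widehat P)}{\Rightarrow}\widehat H$. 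It then remains to exhibit in $\widehat H$ a connected component isomorphic to $H_0$, since $a_0$ is the only terminal label of $\widehat{FG}$.

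To that end I would compare $\widehat H$ with $\sqz(H)$. Erasing or replacing a hyperedge deletes no vertex and $\sqz$ is a vertex quotient, so $V_{\widehat Z}=V_{m\cdot Z}/{\equiv_{\sqz}}$ and hence $V_{\widehat H}=V_{m\cdot Z}/({\equiv_{\sqz}}\vee{\equiv_P})$; as the terminal, marker and connector hyperedges of $H$ are exactly those of $m\cdot Z$ with $att_H(e)=[att_{m\cdot Z}(e)]_{\equiv_P}$, this equals $V_H/{\equiv_{\sqz^H}}=V_{\sqz(H)}$. On hyperedges, $E_{\widehat H}$ comprises the $F\cup\overline F$-labelled hyperedges of $H$ together with the single new type-$1$ $a_0$-hyperedge $\widehat e$, attached to the class $w^\ast$ of $att_{m\cdot Z}(e^\ast)(1)$; in other words $\widehat H$ is $\sqz(H)$ with all terminal/marker/connector hyperedges erased except $e^\ast$, which survives relabelled to $a_0$. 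The decisive point is that $w^\ast$ carries nothing but $\widehat e$: a further hyperedge $e$ of $\widehat H$ attached to $w^\ast$ would be an $F\cup\overline F$-labelled hyperedge of $H$ with an attachment vertex $\equiv_{\sqz^H}$-equivalent to an attachment vertex of $e^\ast$, but $\equiv_{\sqz^H}$ identifies only vertices inside one connected component of $H$, forcing $e$ into the component $Y$ of $e^\ast$ and contradicting $Y\in\mathcal{H}(T\cup\mathcal{M}\cup\mathcal{K})$. Hence the connected component of $w^\ast$ in $\widehat H$ is just $w^\ast$ with the loop $\widehat e$, a copy of $H_0$, which yields $H_0\in L(\widehat{FG})$. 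The main obstacle is exactly this last paragraph: making precise that $\fr(\widehat P)$ performed after the transformations gives the same hypergraph as the transformations performed after $\fr(P)$ (a commutation of squeezing/designating/erasing with fusion) and that the surviving $a_0$-hyperedge lands on an otherwise isolated vertex; the rest --- the normal form, distributivity of $\rem$ over $+$, and legality of the multiplication and fusion steps --- is routine.
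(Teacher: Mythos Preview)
Your proof is correct and follows essentially the same approach as the paper's: take a normal-form derivation $Z\Rightarrow m\cdot Z\underset{\fr(P)}{\Rightarrow} Y+G$ with $Y$ connected, marked, and terminal-labelled; locate the marker $e^\ast$ in a component of $m\cdot Z$; apply $\sqz$, then $\dsg(\cdot,e^\ast)$ to that one component, then $\rem_{T\cup\mathcal M\cup\mathcal K}$; and observe that the resulting derivation in $\widehat{FG}$ produces $H_0$ as a connected component. The paper's proof is terser (it simply asserts that $\sqz$ commutes with multiplication and fusion, and that $\rem_{T\cup\mathcal M\cup\mathcal K}(\dsg(\sqz(Y),e_0))=H_0$), whereas you spell out the vertex-quotient computation $V_{\widehat H}=V_{m\cdot Z}/(\equiv_{\sqz}\vee\equiv_P)=V_{\sqz(H)}$ and give the explicit reason the $a_0$-vertex is isolated (any other incident hyperedge would have to lie in the component $Y$ yet carry a fusion label). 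Your observation that the summands $\rem_{T\cup\mathcal M\cup\mathcal K}(\sqz(Z_j))$ and $\rem_{T\cup\mathcal M\cup\mathcal K}(\dsg(\sqz(Z_{i^\ast}),e^\ast))$ are already among the connected components of $\Xi$ (because $\Dsg(\sqz(C))$ contains $\sqz(C)$ as its first summand) is exactly what legitimises the multiplication step $\Xi\Rightarrow\widehat Z$, and the paper relies on the same fact implicitly.
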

\begin{proof}
	The set $L(FG)$ is non-empty if and only if there exists a most parallelised derivation of the form 
	\begin{equation}\label{eq_der_NE1}
		Z \Rightarrow m \cdot Z \underset{\fr(P)}{\Rightarrow} Y + G
	\end{equation}
	for some $G$ and for a connected hypergraph $Y \in \mathcal{H}(T \cup \mathcal{M} \cup \mathcal{K})$ that contains at least one marker label. Fix any hyperedge $e_0 \in E_Y$ such that $lab_Y(e_0) \in \mathcal{M}$. Any hyperedge from $Y$ is also present in $m \cdot Z$. Let $e_0 \in E_{C_0}$ for $C_0 \in \mathcal{C}(m \cdot Z)$.
	
	We shall transform (\ref{eq_der_NE1}) into a derivation of $H_0$ in $\widehat{FG}$. First, let us apply the squeezing function to the hypergraphs in (\ref{eq_der_NE1}):
	\begin{equation*}
		\sqz(Z) \Rightarrow \sqz(m \cdot Z) = m \cdot \sqz(Z) \underset{\fr(P)}{\Rightarrow} \sqz(Y) + \sqz(G).
	\end{equation*}
	This is also a valid derivation. Note that the squeezing function commutes with both multiplication and fusion. It is also important to note that $\sqz(Y)$ is a hypergraph with one vertex, because $Y$ is connected and all its hyperedges are squeezed (because they are labeled by members of $T \cup \mathcal{M} \cup \mathcal{K}$).
	
	Secondly, we know that $\sqz(m \cdot Z)$ contains $\sqz(C_0)$. Let us replace this connected component within $\sqz(m \cdot Z)$ by $\dsg(\sqz(C_0),e_0)$. In other words, we simply replace $e_0$ in $C_0$ by a type-1 hyperedge labeled by $a_0$. Let us denote the resulting hypergraph by $Z^\prime$. The parallelised fusion rule application now looks as follows:
	\begin{equation*}
		Z^\prime \underset{\fr(P)}{\Rightarrow} \dsg(\sqz(Y),e_0) + \sqz(G).
	\end{equation*}

	Thirdly, let us remove all the labels from $T \cup \mathcal{M} \cup \mathcal{K}$, which results in the following derivation:
	\begin{equation*}
		\Xi \Rightarrow \rem_{T \cup \mathcal{M} \cup \mathcal{K}}(Z^\prime) \underset{\fr(P)}{\Rightarrow} \rem_{T \cup \mathcal{M} \cup \mathcal{K}}(\dsg(\sqz(Y),e_0)) + \rem_{T \cup \mathcal{M} \cup \mathcal{K}}(\sqz(G)).
	\end{equation*}
	Recall that $\Xi$ is the hypergraph defined in Construction \ref{construction_ne}.
	It remains to observe that $\rem_{T \cup \mathcal{M} \cup \mathcal{K}}(\dsg(\sqz(Y),e_0))$ is equal to $H_0$; indeed, $\dsg(\sqz(Y),e_0)$ is a hypergraph with one vertex, to which $e_0$ is attached along with hyperedges with labels from $T \cup \mathcal{M} \cup \mathcal{K}$; removing the latter yields $\vcenter{\hbox{{\tikz[baseline=.1ex]{
					\node[hyperedge] (E) at (0,0) {$a_0$};
					\node[vertex] (V) at (0.7,0) {};
					\draw[-] (E) -- node[above] {\scriptsize 1} (V);
	}}}} = H_0$.
\end{proof}

\begin{lemma}\label{lemma_ne_2}
	If $H_0$ belongs to $L(\widehat{FG})$, then $L(FG)$ is non-empty.
\end{lemma}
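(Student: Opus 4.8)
The plan is to reverse the transformation of Lemma \ref{lemma_ne_1}: starting from a derivation of $H_0$ in $\widehat{FG}$, reconstruct a derivation in $FG$ whose output contains a connected component with a marker. Concretely, suppose $H_0 \in L(\widehat{FG})$. By Proposition \ref{prop_nf} there is a most parallelized derivation $\Xi \Rightarrow n \cdot \Xi \underset{\fr(Q)}{\Rightarrow} H_0 + G'$ for some $n$, $Q$, $G'$. Each connected component of $\Xi$ is, by Construction \ref{construction_ne}, of the form $\rem_{T \cup \mathcal{M} \cup \mathcal{K}}(\dsg(\sqz(C),e))$ for some $C \in \mathcal{C}(Z)$ and some $e \in E_C$ (or $\rem_{T \cup \mathcal{M} \cup \mathcal{K}}(\sqz(C))$ itself, the ``$e$ contributes nothing'' case); the only components bearing an $a_0$-labeled hyperedge are those coming from $\dsg$ applied to a genuine marker-hyperedge whose attachment vertices were all identified by squeezing. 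First I would locate the component $\Xi_0$ of $n \cdot \Xi$ that contains the hyperedge $e_0$ with $lab(e_0)=a_0$ surviving into $H_0$; it originates from a specific $C_0 \in \mathcal{C}(Z)$ and a specific marker-hyperedge $e_0^{orig} \in E_{C_0}$.

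Next I would build the corresponding derivation in $FG$. The idea is: the multiplicities $n$ used on components of $\Xi$ translate back to multiplicities on the components $C \in \mathcal{C}(Z)$ they descend from; taking $m \cdot Z$ with the appropriate $m$ (for $C_0$ we need $m(C_0) \ge 1$ to host the distinguished copy), we form $m \cdot Z$ and then apply a parallelized fusion rule $\fr(P)$ that mimics $\fr(Q)$ — each pair $\{d,\overline d\} \in Q$ between fusion-labeled hyperedges of $\Xi$ corresponds to the same pair of fusion-labeled hyperedges in $m \cdot Z$ (this is legitimate because the labels in $F \cup \overline F$ are untouched by $\sqz$, $\dsg$, and $\rem_{T \cup \mathcal{M} \cup \mathcal{K}}$). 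The key point is that the equivalence relation $\equiv_P$ on $V_{m \cdot Z}$ is finer than — but in a controlled way related to — the relation $\equiv_Q$ on $V_{n \cdot \Xi}$: squeezing only collapses vertices joined by a terminal/marker/connector hyperedge, and $\rem$ deletes such hyperedges, so two vertices identified via an evidence $\fr(Q)$-path through $F \cup \overline F$-hyperedges are also identified via the same $\fr(P)$-path in $m \cdot Z$. Hence the component $Y$ of $m \cdot Z$ after $\fr(P)$ that contains $e_0^{orig}$ will have all its fusion labels removed (matching the fact that $H_0$ has only the $a_0$-label) and will contain the marker $lab(e_0^{orig}) \in \mathcal{M}$; so $Y \in \mathcal{C}(H) \cap \mathcal{H}(T \cup \mathcal{M} \cup \mathcal{K})$ with $\#_{\mathcal{M}}(Y) \ge 1$, whence $\rem_{\mathcal{M} \cup \mathcal{K}}(Y) \in L(FG)$, so $L(FG) \ne \emptyset$.

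I expect the main obstacle to be bookkeeping the interaction between $\sqz$, $\dsg$, $\rem$ and the fusion rule precisely enough to guarantee that $Y$ — the pre-squeezing component in $m \cdot Z$ corresponding to the post-squeezing, post-$\dsg$, post-$\rem$ component of $\Xi$ that yields $H_0$ — really does consist \emph{only} of terminal, marker and connector labels after $\fr(P)$. The danger is a fusion-labeled hyperedge $d$ in $m \cdot Z$ that does \emph{not} get fused by $\fr(P)$ because its $\Xi$-image got deleted by $\rem_{T\cup\mathcal M\cup\mathcal K}$ — but this cannot happen, since $\rem$ only removes $T \cup \mathcal{M} \cup \mathcal{K}$-labeled hyperedges, never $F \cup \overline F$-labeled ones, so every fusion-labeled hyperedge of $\Xi$ is in bijection with one in $Z$ (and hence in $m \cdot Z$), and if it survives unfused into $H_0 + G'$ it lands in $G'$, i.e.\ outside the component $\Xi_0$; the corresponding hyperedge in $m\cdot Z$ then lands outside $Y$ for the same reason. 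Making this ``components correspond, and the distinguished one loses exactly its fusion labels'' argument airtight — likely by an evidence-path comparison in the spirit of Proposition \ref{prop_evidence_path}, transporting each $\fr(Q)$-evidence path in $n \cdot \Xi$ to an $\fr(P)$-evidence path in $m \cdot Z$ — is the technical heart of the proof.
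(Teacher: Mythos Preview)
Your proposal is correct and follows essentially the same strategy as the paper: lift the most parallelized derivation in $\widehat{FG}$ back to a derivation $Z \Rightarrow Z' \underset{\fr(P)}{\Rightarrow} H'$ using the identical fusion pairs (legitimate because $\sqz$, $\dsg$, $\rem_{T\cup\mathcal{M}\cup\mathcal{K}}$ leave $F\cup\overline{F}$-labeled hyperedges untouched), and then argue that the connected component $Y$ of $H'$ containing the original marker hyperedge has only labels from $T\cup\mathcal{M}\cup\mathcal{K}$. The one place where the paper is sharper than your sketch is the ``airtight'' step: rather than an evidence-path comparison, the paper simply observes that the three operations \emph{commute with fusion and preserve the connected-component decomposition} --- the key point being that after $\sqz$ every $T\cup\mathcal{M}\cup\mathcal{K}$-labeled hyperedge is a loop, so $\rem$ cannot split a component --- which immediately gives that the component of $e_0$ in $H_0+\widehat{G}$ pulls back to a single component $Y$ of $H'$ with exactly the desired label content (and connectedness of $Y$ follows since $\sqz(Y)$ has one vertex).
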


\begin{proof}
	There exists a most parallelised derivation of the form
	$
		\Xi \Rightarrow \Xi^\prime \underset{\fr(P)}{\Rightarrow} H_0 + \widehat{G}
	$
	where 
	$$
	\Xi^\prime = \sum_{C \in \mathcal{C}(Z)} \sum_{D \in \mathcal{C}(\Dsg(\sqz(C)))} m(D) \cdot \rem_{T \cup \mathcal{M} \cup \mathcal{K}} (D).
	$$
	for some multiplicity function $m$. Let us construct the hypergraph
	$$
	Z^\prime = \sum_{C \in \mathcal{C}(Z)} \left(\sum_{D \in \mathcal{C}(\Dsg(\sqz(C)))} m(D)\right) \cdot C.
	$$
	Clearly, $Z^\prime$ is a multiplication of $Z$. Note that $\Xi^\prime$ is obtained from $Z^\prime$ by applying $\sqz$ (i.e.~by squeezing all the hyperedges with labels from $T \cup \mathcal{M} \cup \mathcal{K}$), by replacing some hyperedges with labels from $\mathcal{M}$ by $a_0$-labeled ones and then by removing all hyperedges labeled by members of $T \cup \mathcal{M} \cup \mathcal{K}$. None of these transformations affects hyperedges with labels from $F \cup \overline{F}$. Therefore, one can apply the rule $fr(P)$ to $Z^\prime$. 
	
	Let $Z^\prime \underset{\fr(P)}{\Rightarrow} H^\prime$. We know that the same fusion rule applied to $\Xi^\prime$ results in $H_0 + \widehat{G}$. The latter hypergraph is obtained from $H^\prime$ in the same way as $\Xi^\prime$ is obtained from $Z^\prime$, namely, by squeezing all the hyperedges with labels from $T \cup \mathcal{M} \cup \mathcal{K}$, by replacing some hyperedges with labels from $\mathcal{M}$ by $a_0$-labeled ones and by removing all hyperedges with labels belonging to $T \cup \mathcal{M} \cup \mathcal{K}$. Let us look at these steps in the reverse order.
	\begin{enumerate}
		\item We know that $H_0 + \widehat{G} = \rem_{T \cup \mathcal{M} \cup \mathcal{K}}(X)$ for some hypergraph $X$. Moreover, we know that $X$ is the result of applying the squeezing function and the designating function. Therefore, each hyperedge in $X$ that is labeled by an element of $T \cup \mathcal{M} \cup \mathcal{K}$ must be attached to exactly one vertex. Therefore, $X = H_0^\prime + \widehat{G}^\prime$ where $\rem_{T \cup \mathcal{M} \cup \mathcal{K}}(H_0^\prime) = H_0$ and $\rem_{T \cup \mathcal{M} \cup \mathcal{K}}(\widehat{G}^\prime) = \widehat{G}$. The hypergraph $H_0^\prime$ can contain several $(T \cup \mathcal{M} \cup \mathcal{K})$-labeled hyperedges, which are all attached to the only vertex of $H_0^\prime$.
		\item The hypergraph $H_0^\prime$ has one vertex ($V_{H_0^\prime} = \{v_0\}$) and $att_{H_0^\prime}(e) = v_0^k = \underbrace{v_0\dotsc v_0}_{k\mbox{ times}}$ for any $e \in E_{H_0^\prime}$ where $k = \type(lab_{H_0^\prime}(e))$. Exactly one hyperedge of $H_0^\prime$ (which we denoted by $e_0$) is labeled by $a_0$. According to the definition of the designating function, this means that $H_0^\prime = \dsg(H_0^{\prime\prime},e_0)$ where $H_0^{\prime\prime}$ is obtained from $H_0^\prime$ by redefining the attachment function and the labeling function for $e_0$ as follows: $lab_{H_0^{\prime\prime}}(e_0) = \mu \in \mathcal{M}$, $att_{H_0^{\prime\prime}}(e_0) = v_0\dotsc v_0$ (where $v_0$ is repeated $\type(\mu)$ times).
		\item Finally, we know that $H_0^{\prime\prime}$ is the result of squeezing, i.e.~$H_0^{\prime\prime} = \sqz(Y)$ for some $Y$.
	\end{enumerate}
	Summing up, $\rem_{T \cup \mathcal{M} \cup \mathcal{K}}(\dsg(\sqz(Y),e_0)) \cong H_0$.
	We know that $Z^\prime \underset{\fr(P)}{\Rightarrow} Y + G$ for some hypergraph $G$ (which is obtained from $\widehat{G}$ in the same way $Y$ is obtained from $H_0$ according to the above three-step procedure). Observe that $Y$ is connected; indeed, otherwise $\sqz(Y)$ would be disconnected and hence $H_0$ would contain at least two vertices, which is not the case. Furthermore, all labels of $Y$ are from $T \cup \mathcal{M} \cup \mathcal{K}$. Moreover, one of its labels belongs to $\mathcal{M}$ (namely, that of $e_0$). All these facts imply that $Y$ belongs to $L(FG)$. The proof is concluded.
\end{proof}

Combining the above lemmas leads us to the proof of Theorem \ref{th_main_non-emptiness}.
\begin{proof}[Proof of Theorem \ref{th_main_non-emptiness}]
	Given a fusion grammar $FG$, we effectively construct $\widehat{FG}$ using Construction \ref{construction_ne}. Lemmas \ref{lemma_ne_1} and \ref{lemma_ne_2} prove that the non-emptiness problem for $FG$ is equivalent to the fact that $H_0$ belongs to $L(\widehat{FG})$. The latter is an instance of the problem MEM-1, which is decidable according to Theorem \ref{th_mem1}.
\end{proof}

\subsection{NEXPTIME Upper Bound}\label{ssec_NEXPTIME}

Now, what is the complexity of the decision problems we have studied? A straightforward inspection of the proofs of Theorem 
\ref{th_main_membership} and Theorem \ref{th_main_non-emptiness} yields the upper NEXPTIME bound.

\begin{theorem}\label{th_main_NEXPTIME}
	The problems MEM and NE are in NEXPTIME.
\end{theorem}
\begin{proof}
	Let the fusion grammar without markers and connectors $FG = (Z,F,T)$ and the hypergraph $H_0$ with one hyperedge of type $N$ and $n$ vertices be the input of MEM-1 (see Section \ref{ssec_results_mem1}). Let us denote the sum of the sizes of $FG$ and of $H_0$ by $L$. 
	The size of $\Colors$ is $n+1 = \vert H_0 \vert$. The size of $\FCol$ is not greater than 
	$$
	\vert \Sigma \vert \cdot (n+1)^{r_{\max}} \cdot 2^{N^2 \cdot r_{\max}} = \exp(L^{O(1)})
	$$
	where $r_{\max} \le L$ is the maximum type of hyperedges in the start hypergraph $Z$. The size of $\SigmaCol$ is $2 \vert \FCol \vert+1$. Each hypergraph from the set $S$ or the set $S^b$ (see Definition \ref{def_mem}) is obtained from a connected component of $Z$ by replacing each label by some label from $\SigmaCol$; therefore, 
	$$\vert S \uplus S^b\vert \le \left\vert \SigmaCol \right\vert^{\vert Z \vert} \le (2\exp(L^{O(1)})+1)^L = \exp(L^{O(1)}).$$
	Now, consider the system of equations (\ref{eq_system_mem1}). It has the form $Ax = b$ where $b = -\beta(B)$; the matrix $A$ is composed of vectors $\beta(C)$ for $C \in S$; $x$ is the vector of unknown numbers $k(C) \in \Nat$. The size of the system is defined by the size of $A$ and $b$. The absolute value of any number occurring in $A$ and $b$ does not exceed the maximum number of hyperedges in $C \in \mathcal{C}(Z)$ and thus the size of $Z$. The number of columns of $A$ equals $\vert S \vert$; the number of its rows equals $d = \vert \FCol \setminus \mathfrak{W} \vert \le \vert \FCol \vert$. Therefore, the size of $A$ is $O(\log \vert Z \vert \cdot \vert S \vert \cdot \vert \FCol \vert)$, and we have shown that each of the factors is $\exp(L^{O(1)})$. Solvability of $Ax = b$ over $x \in \Nat^{\vert S \vert}$ can be checked by a non-deterministic Turing machine using polynomial time $p(t)$ \cite{GareyJ79}, and thus solvability of (\ref{eq_system_mem1}) can be checked by a Turing machine using time $p\left(\exp(L^{O(1)})\right) = \exp(L^{O(1)})$. The number of such systems equals $\vert S^b \vert = \exp(L^{O(1)})$, and thus MEM-1 lies in NEXPTIME.
	\\
	The problem MEM-2 is reduced to MEM-1 using Construction \ref{construction_mem2} in polynomial time, because the size of $I$ and of $H_0$ is linear w.r.t. that of $H$; similarly, MEM-O is reduced to MEM-1 in polynomial time. This proves that MEM is in NEXPTIME. Finally, the non-emptiness problem NE is reduced to MEM-1 via the deterministic polynomial procedure described in Section \ref{ssec_results_ne} (which uses the squeezing function and the designating function, all polynomial-time). This proves that NE is in NEXPTIME as well.
\end{proof}

One might be interested in whether such a high upper bound as NEXPTIME can be lowered. We claim that this is rather unlikely, in view of our recent result concerning hyperedge replacement grammars (see the preprint \cite{Pshenitsyn25}). Namely, we prove there that the uniform membership problem for hyperedge replacement grammars is EXPTIME-complete. Since hyperedge replacement grammars can be translated into fusion grammars, this implies EXPTIME-hardness of the membership problem for those too (we claim that our constructions work even for fusion grammars without markers and connectors). 

Still, the non-uniform membership problem for fusion grammars remains an open problem. If one wants to prove that each language generated by a fusion grammar belongs to some (low) complexity class, they probably need to establish a normal form for fusion grammars where each hypergraph has a short derivation. 

\subsection{What Do Markers and Connectors Complicate in the Membership Problem?}\label{ssec_results_bounded}

Let us now discuss what happens if one adds markers and connectors to fusion grammars. We start with a positive result and show how to apply the decidability results we have achieved to fusion grammars where markers and connectors are allowed but their number is bounded. Such a class of grammars has not been considered in other works.
\begin{definition}
	A \emph{fusion grammar with bounded usage of markers and connectors} (or simply \emph{bounded fusion grammar}) is a tuple $BFG = (Z,F,T,\mathcal{M},\mathcal{K},f_{\mathcal{M}},f_{\mathcal{K}})$ where
	\begin{itemize}
		\item $(Z,F,T,\mathcal{M},\mathcal{K})$ is a fusion grammar,
		\item $f_{\mathcal{M}}: \Nat \to \Nat$ and $f_{\mathcal{K}}: \Nat \to \Nat$ are computable functions.
	\end{itemize}
	The \emph{language $L(BFG)$ generated by $BFG$} consists of hypergraphs $X$ isomorphic to $\rem_{\mathcal{M} \cup \mathcal{K}}(Y)$ such that
	\begin{itemize}
		\item $Z \Rightarrow^\ast H$ for some $H$ such that $Y \in \mathcal{C}(H) \cap \mathcal{H}(T \cup \mathcal{M} \cup \mathcal{K})$;
		\item $1 \le \#_{\mathcal{M}}(Y) \le f_{\mathcal{M}}(\vert X \vert)$;
		\item $\#_{\mathcal{K}}(Y) \le f_{\mathcal{K}}(\vert X \vert)$.
	\end{itemize}
\end{definition}
{\color{comment}
	That is, we say that the number of removed markers and connectors is computably bounded by the size of the resulting hypergraph $X$. So, given $X$, we can effectively guess $Y$ by exhausting all possibilities.
}

\begin{theorem}\label{th_main_membership_bounded}
	 The membership problem for bounded fusion grammars is decidable.
\end{theorem}
\begin{proof}
	Given $BFG$ as above, let $FG = (Z,F,T \cup \mathcal{M}\cup \mathcal{K})$ be a fusion grammar without markers and connectors; in other words, markers and connectors are treated as terminal labels in $FG$.
	Consider all $Y$'s such that $X = \rem_{\mathcal{M} \cup \mathcal{K}}(Y)$, $Y \in \mathcal{H}(T \cup \mathcal{M} \cup \mathcal{K})$, and $1 \le \#_{\mathcal{M}}(Y) \le f_{\mathcal{M}}(\vert X \vert)$, $\#_{\mathcal{K}}(Y) \le f_{\mathcal{K}}(\vert X \vert)$. The set of all such $Y$'s is finite and can be constructed effectively. Let us construct all such hypergraphs $Y$ and, for each $Y$, check if it belongs to $L(FG)$. The latter is decidable according to Theorem \ref{th_main_membership}.
\end{proof}
Of course, bounded fusion grammars is an ad-hoc invention where we force decidability to be. The question arises whether each fusion grammar is bounded, i.e.~whether computable functions $f_{\mathcal{M}}$ and $f_{\mathcal{K}}$ exist for it that limit the number of markers and connectors in a derivation of any hypergraph. (One of the reviewers asked this question.) Unfortunately, we do not know the answer yet. The problem is that multiple hyperedges are allowed in a hypergraph. For example, assume that the hypergraph $H = \vcenter{\hbox{{\tikz[baseline=.1ex]{
				\node[vertex] (V1) at (0,0) {};
				\node[vertex] (V2) at (1,0) {};
				\draw[-latex, thick] (V1) to[bend left = 0] node[above] {$a$} (V2);
}}}}$ belongs to $L(FG)$ for some fusion grammar $FG = (Z,F,T,\{\mu\},\{\kappa\})$. This means that, for some multiplicity $m$, $m \cdot Z \Rightarrow_{\fr(P)} Y + G$ such that $Y$ contains a $\mu$-labeled hyperedge and, after removing $\mu$-, $\kappa$-labeled hyperedges from $Y$, one obtains $H$. Such a hypergraph $Y$, for example, can be of the form
$$
Y = 
\vcenter{\hbox{{\tikz[baseline=.1ex]{
				\node[vertex] (V1) at (0,0) {};
				\node[vertex] (V2) at (1.5,0) {};
				\draw[-latex, thick] (V1) to[bend left = 50] node[above] {$a$} (V2);
				\draw[-latex, thick] (V1) to[bend left = 20] node[below] {$\mu$} (V2);
				\node at (0.75,-0.35) {\dots};
				\draw[-latex, thick] (V1) to[bend right = 70] node[below] {$\mu$} (V2);
}}}}
$$
This hypergraph has $n$ parallel $\mu$-labeled hyperedges, and $\rem_{\{\mu\}}(Y) = H$. However, it is not clear how to find an upper bound for $n$ in this case. 

One might try to apply the techniques we have already developed to fusion grammars with markers and connectors. For example, suppose that we want to prove decidability of the following problem:
\begin{problem}[MEM-$1^\prime$]
	\leavevmode
	\\
	\textit{Input.} 
	\begin{enumerate}
		\item A fusion grammar $FG = (Z,F,T,\{\mu\},\emptyset)$ where $\type(\mu) = 2$; 
		\item A hypergraph $H_0 = \vcenter{\hbox{{\tikz[baseline=.1ex]{
						\node[hyperedge] (E) at (0,0) {$a$};
						\node[vertex] (V) at (0.7,0) {};
						\draw[-] (E) -- node[above] {\scriptsize 1} (V);
		}}}}$ with one hyperedge of type $1$ and one vertex $v_0^1$.
	\end{enumerate}
	\textit{Problem MEM-$1^\prime$.} Does $H_0$ belong to $L(FG)$?
\end{problem}
This is an instance of the general membership problem. (We intentionally choose a very simple input hypergraph, namely, $H_0$, to show that there are difficulties even in this case.) $H_0$ belongs to $L(FG)$ if and only if $H_0 = \rem_{\{\mu\}}(Y)$ where $Y$ is a hypergraph with one vertex $v_0^1$, one $a$-labeled hyperedge and some positive number of $\mu$-labeled hyperedges attached to $v$ (so, they are all loops) such that
$
Z \Rightarrow m \cdot Z \underset{\fr(P)}{\Rightarrow} Y + G
$. If one tries to repeat the proof of decidability from Section \ref{ssec_results_mem1}, then the following problem arises. It is now possible that non-white labels of the form $(\mu,f,\tau)$ appear in $Y+G$, because some $\mu$-labeled hyperedges now can belong to the ``main'' hypergraph $Y$ but not to the garbage one $G$. To take this into account, it seems reasonable to change Definition \ref{def_beta} by excluding $\mathfrak{W}$ and also the set of labels of the form $(\mu,f,\tau)$ where $f(1) = f(2) = v_0^1$ from $\FCol$. Indeed, it might seem that the function $f$ tells us that both its attachment vertices are identified with $v_0^1$, so, after removing these $\mu$-labeled hyperedges, we come up with $H_0$. Unfortunately, this is not the case, because colours work well only one-way: they prevent vertices of different colours from being identified. However, it is possible that there are different vertices of the same colour. This would lead us to undesirable consequences; for example, we would mistakenly conclude that $H_0$ belongs to $L(FG)$ in the case where the following hypergraph $Y$ is derivable from $Z$:
$$
Y = 
\vcenter{\hbox{{\tikz[baseline=.1ex]{
				\node[hyperedge] (E) at (0,0) {$a$};
				\node[vertex] (V1) at (0.7,0) {};
				\node[vertex] (V2) at (1.5,0) {};
				\node[vertex] (V3) at (2.3,0) {};
				\node[vertex] (V4) at (3.1,0) {};
				\draw[-latex, thick] (V1) -- node[above] {$\mu$} (V2);
				\draw[-latex, thick] (V2) -- node[above] {$\mu$} (V3);
				\draw[-latex, thick] (V3) -- node[above] {$\mu$} (V4);
				\draw[-] (E) -- node[above] {\scriptsize 1} (V1);
}}}}
$$
Indeed, we could simply colour all $Y$'s vertices in one colour, which would make it indistinguishable from
$$
Y^\prime = 
\vcenter{\hbox{{\tikz[baseline=.1ex]{
				\node[hyperedge] (E) at (0,0) {$a$};
				\node[vertex] (V) at (0.7,0) {};
				\draw[-latex, thick] (V) to[in=-120,out=-60,looseness=30] node[below] {$\mu$} (V);
				\draw[-latex, thick] (V) to[in=-30,out=30,looseness=30] node[right] {$\mu$} (V);
				\draw[-latex, thick] (V) to[in=60,out=120,looseness=30] node[above] {$\mu$} (V);
				\draw[-] (E) -- node[above] {\scriptsize 1} (V1);
}}}}.
$$
How to overcome this issue? Recall that, in order to force vertices to be indentified, we developed the technique involving evidence paths and their encodings. However, we needed to encode the information about \emph{each} pair of vertices that must be identified. Now, since we do not know the number of $\mu$-labeled hyperedges in advance, this would lead us to an infinite number of labels, which is bad. So, the straightforward way of generalising the proof of Theorem \ref{th_mem1} fails, and decidability of the membership problem in general remains open. Probably, to prove it we need to modify the method of evidence paths for the case of arbitrarily many markers and connectors in order not to introduce infinitely many labels. To do this, we were trying to develop and use the concept of \emph{evidence trees} but it did not work as we wanted.

Nevertheless, even bounded fusion grammars are general enough for many purposes. First, we claim that it is enough for applications in DNA computing to assume that the resulting hypergraph has one marker and no connectors. For example, this is enough to model Adleman's experiment using fusion grammars; namely, in \cite{KreowskiKL19}, hypergraphs only of the following form are produced:
$$
\vcenter{\hbox{{\tikz[baseline=.1ex]{
				\node[vertex] (V1) at (0,0) {};
				\node[vertex] (V2) at (0.9,0) {};
				\node[vertex] (V3) at (1.8,0) {};
				\node[vertex] (V4) at (2.7,0) {};
				\draw[-latex, thick] (V1) to[out=210,in=150,looseness=20] node[left] {$\mu$} (V1);
				\draw[-latex, thick] (V1) to[out=120,in=60,looseness=20] node[above] {$0$} (V1);
				\draw[-latex, thick] (V2) to[out=120,in=60,looseness=20] node[above] {$1$} (V2);
				\draw[-latex, thick] (V3) to[out=120,in=60,looseness=20] node[above] {$n-1$} (V3);
				\draw[-latex, thick] (V4) to[out=120,in=60,looseness=20] node[above] {$n$} (V4);
				\draw[-latex, thick] (V1) -- node[below] {} (V2);
				\node at (1.38,0) {$\dotsc$};
				\draw[-latex, thick] (V3) -- node[below] {} (V4);
}}}}
$$
Secondly, bounded fusion grammars are expressive enough to generate languages generated by hyperedge replacement grammars. Indeed, there is Construction 2 from \cite{Lye21_thesis}, which transforms a hyperedge replacement grammar $\mathit{HRG}$ into a fusion grammar $FG(\mathit{HRG})$. Recall the fact that, if $H$ is generated by a hyperedge replacement grammar $\mathit{HRG}$, then there exists its derivation of size $\le f(\vert H \vert)$ for some linear function $f$ \cite[p. 141]{DrewesKH97}. Analysis of \cite[Construction 2]{Lye21_thesis} shows that, if $H$ belongs to the language of the corresponding fusion grammar $FG(HRG)$, then $H = \rem_{\{\mu,\kappa\}}(Y)$ for $Y$ such that $Y$ is derivable by $FG(HRG)$ and such that it contains one marker and at most $f(\vert H \vert) \cdot \vert HRG \vert$ connectors. Hence, the number of the removed markers and connectors is computable, and thus $FG(HRG)$ can be defined as a bounded fusion grammar. 

Summing up, bounded fusion grammars is an expressive enough formalism, for which we know that the membership problem is decidable. I may suppose that the creators of fusion grammars also intended to use a bounded number of markers and connectors. Indeed, usually it suffices to have just one marker, and, since connectors are needed to link connected components, it is sufficient if there are no more connectors than connected components. Now, we are aware that unboundedly many markers and connectors can be an obstacle on the way to decidability.

\section{Parikh's Theorem for Connection-Preserving Fusion Grammars}\label{sec_Parikh}

In this section, we move to investigating the expressive power of fusion grammars. Namely, our goal is to generalise Parikh's well-known theorem. Parikh's theorem for context-free grammars says that the Parikh image of any context-free language is a semilinear set. We are going to prove the same for \emph{connection-preserving} fusion grammars.

\begin{definition}
	Let $T = \{a_1,\dotsc,a_t\}$. Given $H \in \mathcal{H}(\Sigma)$, we define its \emph{Parikh image} $\psi(H) \in \Nat^{t}$ as follows: 
	$$\psi(H) = \left(\#_{a_1}(H),\dotsc,\#_{a_t}(H)\right).$$
\end{definition}

\begin{theorem}\label{th_main_parikh}
	The set $\psi(L(FG)) = \{\psi(H) \mid H \in L(FG)\}$ is semilinear for any connection-preserving fusion grammar $FG = (Z,F,T,\mathcal{M},\mathcal{K})$.
\end{theorem}
Let $F = \{A_1,\dotsc,A_f\}$. Without loss of generality, we assume that $\mathcal{M} = \{\mu\}$ and $\mathcal{K} = \{\kappa\}$. Indeed, if there are several markers (connectors), then we can simulate each of them using the one with the maximal type.

The proof we are going to present is inspired by the quantitative argument discussed before and by Construction 3 from \cite{Lye21_thesis}. The latter construction shows how to transform connection-preserving and \emph{strictly joining} fusion grammars into hyperedge replacement grammars. \emph{Strictly joining} means that, in any derivation of a hypergraph from $L(FG)$, fusion never happens within a single connected component, i.e.~no foldings are possible. This restriction, of course, it too strong, because folding is one of the distinguishing features of fusion grammars, making them more complicated than hyperedge replacement grammars.

Before proving Theorem \ref{th_main_parikh}, let us show a na\"{i}ve way of doing this and discuss why it is wrong. Let us define the function $\Phi:\mathcal{H}(\Sigma) \to \Nat^{t+2f+2}$ as follows:
$$
\Phi(H) = (\#_{a_1}(H),\dotsc,\#_{a_t}(H),\#_{A_1}(H),\dotsc,\#_{A_f}(H),\#_{\overline{A_1}}(H),\dotsc,\#_{\overline{A_f}}(H), \#_{\mu}(H),\#_{\kappa}(H)).
$$
Let us also define the following set.
\begin{definition}\label{def_balance}
	The set $\Balance$ consists of vectors $v \in \Nat^{t+2f+2}$ such that $v(t+i) = v(t+f+i)$ for all $i=1,\dotsc,f$ and such that $v(t+2f+1) > 0$.
\end{definition}
The set $\Balance$ defined above is semilinear. Now, consider the set 
$$
W = \Balance \cap \left\{\sum_{C \in \mathcal{C}(Z)} m(C)\cdot \Phi(C) \middle| m(C) \in \Nat \right\}.
$$
The second set in this intersection is also semilinear (this follows from its definition; it is even linear). Since semilinear sets are closed under intersection \cite{LiuW70}, the set $W$ is semilinear. Each element of $W$ is of the form $\Phi(m \cdot Z)$ for some multiplicity $m$; the number of $A$-labeled hyperedges in $m \cdot Z$ equals the number of $\overline{A}$-labeled ones (for $A \in F$); furthermore, $m \cdot Z$ contains at least one marker. Then, clearly, one can fuse all the hyperedges in $m\cdot Z$ with fusion labels (in any order). Let $m \cdot Z \underset{\fr}{\Rightarrow}^\ast H$ be the result of such fusion. From this, we would like to conclude that $H \in L(FG)$ and that thus $\psi(H)$ is the projection of $\Phi(m \cdot Z)$ onto the first $t$ coordinates. Finally, we would like to infer that $\psi(L(FG))$ equals the projection of $W$ onto the first $t$ coordinates and thus it is semilinear.

The problem of the above reasonings is that the resulting hypergraph $H$ is not necessarily connected. For example, assume that $Z = \vcenter{\hbox{{\tikz[baseline=.1ex]{
				\node[hyperedge] (E1) at (0,0) {$a$};
				\node[hyperedge] (E2) at (1.4,0) {$\mu$};
				\node[vertex] (V) at (0.7,0) {};
				\draw[-] (E1) -- node[above] {\scriptsize 1} (V);
				\draw[-] (E2) -- node[above] {\scriptsize 1} (V);
}}}}$; then $L(FG) = \{\rem_{\{\mu\}}(Z)\}$ and $\psi(L(FG)) = \{1\}$. However, clearly, $W = \{m \cdot \Phi(Z) \mid m \in \Nat,m>0\} = \{(m,m) \mid m \in \Nat,m>0\}$, and its projection onto the first coordinate equals $\Nat \setminus \{0\}$. To avoid this problem we must somehow guarantee that the resulting hypergraph $H$ is connected. We shall do this using context-free grammars and a new notion called \emph{fusion net}.

The definition of a fusion net uses the notion of \emph{undirected unlabeled graph}, which is a triple $U = (V_U,E_U,att_U)$ such that the attachment $att_U:E_U \to \mathcal{P}(V_U)$ assigns a set $\{v_1,v_2\}$ consisting of one or two vertices to each edge from $E_U$. That is, each edge is attached to two vertices, and loops are allowed; the order of the attachment is disregarded. This is a usual notion of a (multi)graph. We are going to use other standard notions of graph theory; they are introduced in numerous textbooks including, e.g., \cite{GrossYA18}. In particular, we shall use the following well-known property of graphs \cite[p. 76]{GrossYA18}: if an undirected unlabeled graph $U$ is connected, then it contains a spanning tree, i.e.~a subgraph which has the same set of vertices $V_U$ and which is a tree.
\begin{definition}
	Given a fusion rule application $H^\prime \underset{\fr(P)}{\Rightarrow} H^{\prime\prime}$, its \emph{fusion net} is the undirected unlabeled graph $FN = (V_{FN},E_{FN},att_{FN})$ such that $V_{FN} = \mathcal{C}(H^\prime)$, $E_{FN} = P$, and such that the following holds. Let $p = \{e_1,e_2\} \in P$ be a pair of hyperedges fused by the application of $\fr(P)$. Assume that $e_i \in E_{C_i}$ ($i=1,2$) for some $C_1,C_2 \in \mathcal{C}(H^\prime)$. Then $att_{FN}(p) = \{C_1,C_2\}$.
\end{definition}

\begin{example}
	The fusion net of the fusion rule application presented in Example \ref{example_der_evidence_path} can be depicted as follows:
	$$
	\vcenter{\hbox{{\tikz[baseline=.1ex]{
					\node[circle,minimum size=32, draw={black}] (C1) at (0,-0.25) {};
					\node[circle,minimum size=32, draw={black}] (C2) at (-0.825,0.825) {};
					\node[circle,minimum size=32, draw={black}] (C2) at (0.825,0.825) {};
					\node[vertex,color={colorB}] (VO) at (0,0) {};
					\node[vertex,color={colorB}] (VL) at (-0.4,-0.4) {};
					\node[vertex,color={colorB}] (VR) at (0.4,-0.4) {};
					\draw[color={colorB},-latex, thick] (VL) -- node[color={colorB},below] {$\scriptstyle a$} (VR);
					\draw[color={colorB},-latex, thick] (VL) -- (VO);
					\draw[color={colorB},-latex, thick] (VR) -- (VO);
					\node[color={colorB}] (Lb1) at (-0.4,-0.11) {$\scriptstyle A$};
					\node[color={colorB}] (Lb6) at (0.4,-0.11) {$\scriptstyle C$};
					\node[vertex,color={colorB}] (VL1) at (-0.6,0.6) {};
					\node[vertex,color={colorB}] (VL2) at (-1.05,1.05) {};
					\draw[color={colorB},latex-, thick] (VL1) to[bend left=45] (VL2);
					\draw[color={colorB},latex-, thick] (VL2) to[bend left=45] (VL1);
					\node[vertex,color={colorB}] (VR1) at (0.6,0.6) {};
					\node[vertex,color={colorB}] (VR2) at (1.05,1.05) {};
					\draw[color={colorB},-latex, thick] (VR1) to[bend left=45] (VR2);
					\draw[color={colorB},-latex, thick] (VR2) to[bend left=45] (VR1);
					\node[color={colorB}] (Lb2) at (-1.2,0.65) {$\scriptstyle \overline{A}$};
					\node[color={colorB}] (Lb3) at (-0.65,1.2) {$\scriptstyle B$};
					\node[color={colorB}] (Lb5) at (1.2,0.65) {$\scriptstyle \overline{C}$};
					\node[color={colorB}] (Lb4) at (0.65,1.2) {$\scriptstyle \overline{B}$};
					\draw[thick,-] (-0.5,-0.12) -- (-1.1,0.55);
					\draw[thick,-] (-0.55,1.2) -- (0.55,1.2);
					\draw[thick,-] (0.5,-0.12) -- (1.1,0.55);
	}}}}
	$$
	Here hypergraphs in circles are vertices of the fusion net, and the lines are edges; for the sake of better understanding, we draw a line going from a hyperedge to the one which it is fused with.
\end{example}

\begin{proposition}\label{prop_connectedness_fn}
	Let $H^\prime \underset{\fr(P)}{\Rightarrow} H^{\prime\prime}$ and let $H^{\prime\prime}$ be connected. Then, the fusion net of this fusion rule application is a connected graph.
\end{proposition}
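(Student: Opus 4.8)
The plan is to show that any two connected components $C_1, C_2 \in \mathcal{C}(H^\prime)$ are linked by a path in the fusion net $FN$, which is exactly the assertion that $FN$ is connected. The key observation is that after the fusion, the vertices of $H^{\prime\prime}$ are equivalence classes of vertices of $H^\prime$ under $\equiv_P$, and since $H^{\prime\prime}$ is connected, any two vertices of $H^{\prime\prime}$ are joined by a path in the sense of hyperedge-traversal. I would translate such a path back into a statement about $H^\prime$ and its components, and then track how the $\equiv_P$-identifications bridge distinct components.

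First I would pick vertices $v_1 \in V_{C_1}$ and $v_2 \in V_{C_2}$ (every nonempty component has a vertex; the degenerate case where $H^\prime$ is empty or has no fusable pairs is trivial since then $H^{\prime\prime} = H^\prime$ and connectedness of $H^{\prime\prime}$ forces $|\mathcal{C}(H^\prime)| \le 1$). In $H^{\prime\prime}$ the classes $[v_1]_{\equiv_P}$ and $[v_2]_{\equiv_P}$ are vertices, and since $H^{\prime\prime}$ is connected there is a path between them; this path visits vertices of $H^{\prime\prime}$, each of which is a $\equiv_P$-class, and consecutive classes along the path are attachment vertices of a common hyperedge $e$ of $H^{\prime\prime}$, hence of $H^\prime$ (recall $E_{H^{\prime\prime}} = E_{H^\prime} \setminus \bigcup_{p\in P} p$, so surviving hyperedges of $H^{\prime\prime}$ are literally hyperedges of $H^\prime$ with the same attachment up to $\equiv_P$). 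I would combine two mechanisms for moving between components: (i) a surviving hyperedge $e$ in $H^{\prime\prime}$ has all its attachment vertices inside a single component of $H^\prime$, so traversing $e$ stays within one component; (ii) when a $\equiv_P$-class contains vertices from two different components of $H^\prime$, Proposition~\ref{prop_evidence_path} gives an evidence $\fr(P)$-path connecting them, and each elementary step of that evidence path involves a pair $\{e_i^+, e_{i+1}^-\} \in P$ whose two hyperedges lie in (possibly) different components — precisely an edge of $FN$.

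Concretely, I would argue: walking along the $H^{\prime\prime}$-path from $[v_1]$ to $[v_2]$, at each vertex-class I may switch which representative in $H^\prime$ I am "standing on"; each such switch between representatives in different components is witnessed by an evidence path, and the pairs $p \in P$ appearing in that evidence path form a walk in $FN$ (by definition $att_{FN}(p)$ joins the components of the two hyperedges of $p$). Meanwhile traversing a surviving hyperedge keeps me in one component, i.e. at one vertex of $FN$. Concatenating all these $FN$-walks produces a walk in $FN$ from $C_1$ to $C_2$. Hence $FN$ is connected.

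The main obstacle I expect is bookkeeping: carefully formalizing the "switch of representative" and showing that the evidence path's pairs genuinely trace out a connected walk in $FN$ — in particular that consecutive pairs $\{e_i^+, e_{i+1}^-\}$ and $\{e_{i+1}^+, e_{i+2}^-\}$ share the component containing $v_{i+1}$, since $e_{i+1}^-$ and $e_{i+1}^+$ are both attached to $v_{i+1}$ in $H^\prime$ and therefore lie in the same component. That gives adjacency in $FN$ along the evidence path. The only subtlety beyond that is handling the path in $H^{\prime\prime}$ whose hyperedges are fused-away versus surviving; but since the path of $H^{\prime\prime}$ uses only hyperedges of $H^{\prime\prime}$, all of which survive, mechanism (i) applies uniformly and no extra case analysis on $P$-membership of those hyperedges is needed. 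I would therefore structure the proof as: reduce to connecting two arbitrary components; use connectedness of $H^{\prime\prime}$ to get a hyperedge-path; along it, replace each $\equiv_P$-class transition by an $FN$-walk via Proposition~\ref{prop_evidence_path}; conclude.
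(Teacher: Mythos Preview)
Your argument is correct, but it takes a genuinely different route from the paper. The paper gives a three-line proof by contradiction: if $FN$ were disconnected, split $V_{FN} = \mathcal{C}(H')$ into two nonempty parts with no $FN$-edge between them; then no pair in $P$ has its two hyperedges in different parts, so $\fr(P)$ acts independently on the two corresponding subhypergraphs of $H'$, and $H''$ decomposes as a disjoint sum, contradicting connectedness. Your approach is direct and constructive: you manufacture an explicit $FN$-walk between any two components by lifting a path in $H''$ and invoking Proposition~\ref{prop_evidence_path} at each $\equiv_P$-class to bridge components. What the paper's argument buys is brevity and the avoidance of any bookkeeping---it never touches evidence paths or representatives. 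What your argument buys is an explicit witness for connectivity, and it exercises the evidence-path machinery in a natural way; the observation that $e_{i+1}^-$ and $e_{i+1}^+$ share the component of $v_{i+1}$ is exactly the glue that makes the $FN$-walk well-formed. Both are fine; the paper's is the expected textbook move here.
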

\begin{proof}
	The proof is ex falso. Assume that the fusion net $FN$ of $\fr(P)$ is disconnected. Then, we can divide the set of vertices $V_{FN} = \mathcal{C}(H^\prime)$ of the fusion net into two non-empty parts such that vertices of different parts are not connected by edges. Let these parts be $\{C_1^1,\dotsc,C_1^{l^1}\}$ and $\{C_2^1,\dotsc,C_2^{l^2}\}$. The hypergraph $H^\prime$ equals $C_1^1 + \dotsc + C_1^{l^1} + C_2^1 + \dotsc + C_2^{l^2} = H^\prime_1 + H^\prime_2$ where $H^\prime_i = C_i^1 + \dotsc + C_i^{l^i}$ (for $i=1,2$). Since $C_1^i$ and $C_2^j$ are not connected by an edge of $FN$ for all $i,j$, this implies that fusions happen only within $H^\prime_1$ and within $H^\prime_2$ but not between them. Therefore, $H^\prime_1+H^\prime_2 \underset{\fr(P)}{\Rightarrow} H^{\prime\prime}_1+H^{\prime\prime}_2 = H^{\prime\prime}$. Thus $H^{\prime\prime}$ is not connected, which is a contradiction.
\end{proof}

The auxiliary preparations are completed, and we can start defining the main construction used to prove Theorem \ref{th_main_parikh}. 
\begin{definition}
	The \emph{linearisation function} is defined as
	$$
	\lambda(H) = {a_1}^{\#_{a_1}(H)} \dotsc {a_t}^{\#_{a_t}(H)} {A_1}^{\#_{A_1}(H)} \dotsc {A_f}^{\#_{A_f}(H)} \overline{A_1}^{\#_{\overline{A_1}}(H)}\dotsc {\overline{A_f}}^{\#_{\overline{A_f}}(H)}\mu^{\#_{\mu}(H)}\kappa^{\#_{\kappa}(H)}.
	$$
	Recall that $a^k$ is the string $a\dotsc a$ where $a$ is repeated $k$ times.
\end{definition}
\begin{remark}
	To be complete, let us introduce the notion of a context-free grammar. A context-free grammar is a quadruple $(N,T,Pr,S)$ where $N$ is a finite set of nonterminal symbols, $T$ is a finite set of terminal symbols, $S \in N$ is a distinguished start symbol, and $Pr$ is a finite set of productions, which are of the form $A \to \alpha$ for $A \in N$ and $\alpha \in (N \cup T)^\ast$. 
	An application of a production $\pi = A \to \alpha$ is simply a replacement of the symbol $A$ in a string of the form $\eta A \theta$ by $\alpha$ resulting in $\eta \alpha \theta$. This is denoted by $\eta A \theta \Rightarrow \eta \alpha \theta$ or $\eta A \theta \underset{\pi}{\Rightarrow} \eta \alpha \theta$. The language generated by a context-free grammar consists of words $w \in T^\ast$ such that $S \Rightarrow^\ast w$ using productions from $Pr$. Such a language is called \emph{context free}.
\end{remark}

\begin{construction}\label{construction_parikh}
	We construct a context-free grammar $CFG(FG)$, which we call \emph{the linearisation of $FG$}. The grammar $CFG(FG)$ is the quadruple $(F \cup \overline{F} \cup \{S\},T \cup \{\mu,\kappa\},Pr,S)$ where $S$ is a new nonterminal symbol and where $Pr$ consists of the productions defined below.
	\begin{enumerate}
		\item Let $C \in \mathcal{C}(Z)$ and let $lab_C(e) \in F \cup \overline{F}$ for some $e \in E_C$. Then $\overline{lab_C(e)} \to \lambda(C-\{e\})$ is in $Pr$. We denote this production by $\pi(C,e)$. {\color{comment} Recall that $\overline{\overline{A}} = A$.}
		\item Let $C \in \mathcal{C}(Z)$ and let $lab_C(e) = \mu$ for some $e \in E_C$. Then $S \to \lambda(C)$ is in $Pr$. We denote this production by $\pi^\mu(C,e)$.
	\end{enumerate}
\end{construction}

\begin{definition}
	$Lin(FG)$ is the set of all words $w \in \Sigma^\ast$ such that $S \Rightarrow^+ w$ using productions of $CFG(FG)$; $w$ may contain both terminal and non-terminal symbols of $CFG(FG)$.
\end{definition}

\begin{definition}
	Let us define the Parikh image for words from $\Sigma^\ast$ as follows:
	$$
	\Psi(w) = (\#_{a_1}(w),\dotsc,\#_{a_t}(w),\#_{A_1}(w),\dotsc,\#_{A_f}(w),\#_{\overline{A_1}}(w),\dotsc,\#_{\overline{A_f}}(w), \#_{\mu}(w),\#_{\kappa}(w)).
	$$
\end{definition}

\begin{definition}
	The \emph{projection function} $\mathit{pr}_t:\Nat^{t+2f+2} \to \Nat^t$ takes a vector and returns its prefix consisting of the first $t$ coordinates: $\mathit{pr}_t(v) = (v(1),\dotsc,v(t))$.
\end{definition}
\begin{remark}
	Clearly, $\mathit{pr}_t(\Psi(\lambda(H))) = \psi(H)$.
\end{remark}
\begin{remark}
	Of course, $Lin(FG) \subseteq \Sigma^\ast$ is context free, so $\Psi(Lin(FG))$ is semilinear.
\end{remark}
\begin{lemma}\label{lemma_parikh_1}
	$\psi(L(FG)) \subseteq \mathit{pr}_t\left(\Psi(Lin(FG))\cap \Balance\right)$.
\end{lemma}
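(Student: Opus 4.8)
The plan is to take an arbitrary $H \in L(FG)$ and, following Lye's Construction 3, produce a word $w \in Lin(FG)$ such that $\Psi(w)$ lies in $\Balance$ and $\mathit{pr}_t(\Psi(w)) = \psi(H)$. Since $FG$ is connection-preserving, Proposition \ref{prop_nf_cpfg} gives a derivation $Z \Rightarrow n \cdot Z \underset{\fr(Q)}{\Rightarrow} Y$ where $Y$ is connected, $H = \rem_{\mathcal{M} \cup \mathcal{K}}(Y)$, and $Y$ has at least one $\mu$-labeled hyperedge. So I get to work with a single parallelized fusion $Z' := n \cdot Z \underset{\fr(Q)}{\Rightarrow} Y$ whose fusion net $FN$ is, by Proposition \ref{prop_connectedness_fn}, a connected graph on the vertex set $\mathcal{C}(Z')$.

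The key step is to read off a context-free derivation from a spanning tree of $FN$. Since $FN$ is connected it has a spanning tree $\mathit{Tr}$; root it at a connected component $C_0 \in \mathcal{C}(Z')$ that contains a $\mu$-labeled hyperedge $e_\mu$ (such a component exists because $Y$ has a marker and every hyperedge of $Y$ comes from $Z'$; note $e_\mu$ is not touched by the fusion $\fr(Q)$ since $\mu \notin F \cup \overline{F}$, so it survives into $Y$, but for the linearization what matters is only that $C_0$ carries a marker). I would then prove by induction on the tree, processing from the leaves up (or equivalently expanding the start symbol $S$ downward), that $S \Rightarrow^+ w$ where $w = \lambda(\text{"merge of all components along } \mathit{Tr}\text{"})$. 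Concretely: the root contributes the production $\pi^\mu(C_0,e_\mu): S \to \lambda(C_0)$; each tree edge $p = \{e,\bar e\} \in Q \subseteq E_{FN}$ joining a parent component $C$ to a child component $C'$ (say $e \in E_C$, $\bar e \in E_{C'}$) is simulated by the production $\pi(C', \bar e): \overline{lab_{C'}(\bar e)} = lab_C(e) \to \lambda(C' - (\emptyset,\{\bar e\}))$, which rewrites the occurrence of the symbol $lab_C(e)$ introduced when $C$ was processed into the linearization of $C'$ with its attaching hyperedge removed. Performing these rewrites in the order given by a traversal of $\mathit{Tr}$ yields a word $w$ whose symbol-count vector is exactly the sum, over the components used, of $\psi$-contributions, with one $F$-symbol and one $\overline F$-symbol deleted per tree edge.

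It then remains to verify the two numerical facts. First, $\mathit{pr}_t(\Psi(w)) = \psi(H)$: fusion rules do not remove terminal, marker, or connector hyperedges, so the terminal counts of $w$ equal those of $Y$, and $\psi(H) = \psi(\rem_{\mathcal{M}\cup\mathcal{K}}(Y)) = \mathit{pr}_t(\Psi(\lambda(Y)))$; the only subtlety is that $\mathit{Tr}$ may use fewer edges than $Q$, but the "extra" fusions in $Q \setminus E(\mathit{Tr})$ pair up hyperedges whose labels are in $F \cup \overline F$, so they affect only coordinates $t+1,\dots,t+2f$, leaving the terminal prefix intact. Second, $\Psi(w) \in \Balance$: the components used are exactly the vertices of $\mathit{Tr} = $ all of $\mathcal{C}(Z')$ (a spanning tree), and since $Z' \underset{\fr(Q)}{\Rightarrow} Y$ with $Y \in \mathcal{H}(T \cup \mathcal{M} \cup \mathcal{K})$, every $F \cup \overline F$-labeled hyperedge of $Z'$ is consumed by $Q$; hence $\#_{A_i}(Z') = \#_{\overline{A_i}}(Z')$ for each $i$, and after deleting one matched $F$/$\overline F$ pair per tree edge this balance is preserved, giving $\Psi(w)(t+i) = \Psi(w)(t+f+i)$; the marker count is positive because $\lambda(C_0)$ contributes at least the $\mu$ on $e_\mu$ and no production deletes markers.

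The main obstacle I anticipate is the bookkeeping in the inductive step: one must be careful that when the production $\pi(C',\bar e)$ is applied, the symbol it rewrites is genuinely present and corresponds to the tree edge being processed (rather than to some other hyperedge with the same label), and that $\lambda$'s fixed symbol-ordering does not interfere — this is why working with $\Psi$ (which forgets order) throughout is convenient, and why $Lin(FG)$ is defined to allow non-terminals to remain in $w$. A secondary point requiring care is that $Q$ need not equal $E(\mathit{Tr})$: the surplus pairs must be shown to cancel in the balance computation without needing to be simulated by productions, which is exactly where the observation $Q \setminus E(\mathit{Tr}) \subseteq \{\text{pairs of } F/\overline F\text{-hyperedges}\}$ does the work.
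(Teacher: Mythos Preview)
Your proposal is correct and follows essentially the same approach as the paper: obtain the normal-form derivation from Proposition~\ref{prop_nf_cpfg}, take a spanning tree of the (connected) fusion net rooted at a component containing a marker, translate the tree into a $CFG(FG)$-derivation via $\pi^\mu$ at the root and $\pi$ along each tree edge, and then verify $\Balance$ using the fact that the unused fusions $Q \setminus E(\mathit{Tr})$ pair up the remaining $F/\overline{F}$-hyperedges. The paper handles your anticipated bookkeeping obstacle exactly as you suggest, by comparing Parikh images $\Psi(w^i) = \Psi(\lambda(D^i))$ rather than words, and resolves the ``which occurrence to rewrite'' ambiguity by always picking the leftmost one (any choice works since only $\Psi$ matters).
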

\begin{proof}
	Suppose that $H \in L(FG)$. Let us show that $\psi(H) \in \mathit{pr}_t\left(\Psi(Lin(FG))\cap \Balance\right)$. As $H$ belongs to $L(FG)$, Proposition \ref{prop_nf_cpfg} implies that $H = \rem_{\mathcal{M} \cup \mathcal{K}}(Y)$ such that $Y \in \mathcal{H}(T \cup \mathcal{M} \cup \mathcal{K})$ is a connected hypergraph containing at least one $\mu$-labeled hyperedge, and $Z \Rightarrow n \cdot Z \underset{\fr(Q)}{\Rightarrow} Y$ for some multiplicity $n$ and for some parallelised fusion rule $\fr(Q)$. Take any $C^0 \in \mathcal{C}(n \cdot Z)$ that contains a $\mu$-labeled hyperedge (since there exists at least one $\mu$-labeled hyperedge in $Y$); let $lab_{C^0}(e^0) = \mu$ for a hyperedge $e^0 \in E_{C^0}$ (which we fix). 
	
	Consider the fusion net $FN$ of $n \cdot Z \underset{\fr(Q)}{\Rightarrow} Y$. Since $Y$ is connected, so must be $FN$ (according to Proposition \ref{prop_connectedness_fn}). Let us fix a spanning tree contained in $FN$; let $Q^\prime \subseteq Q$ be the set of its edges. We make this tree rooted by letting $C^0$ be its root. Let us denote its children as $C_{1},C_{2},\dotsc$; let us denote the children of $C_{ i}$ as $C_{ i 1},C_{ i 2},\dotsc$; in general, if we denote a vertex of $FN$ as $C_\alpha$, then let $C_{\alpha 1},C_{\alpha 2},\dotsc$ be the notation for all its children in the spanning tree of $FN$. Thus each vertex of $FN$ except for $C^0$ now is of the form $C_\alpha$ for some word $\alpha \in (\Nat \setminus \{0\})^\ast$ with symbols being positive integers. Finally, let us order the vertices of $FN$ by saying that $C_\alpha \prec C_\gamma$ iff $\alpha$ is less than $\gamma$ according to the lexicographic order; let also $C^0 \prec C_\alpha$ for all $\alpha$. Let $C^0 \prec C^1 \prec C^2 \prec \dotsc \prec C^N$ be the list of all vertices of $FN$ ordered according to $\prec$. 
	
	For $i>0$, consider $C^i$. Take the parent $C^j$ of $C^i$; note that $j<i$. Since $\{C^i,C^j\} \in E_{FN}$, these connected components contain hyperedges that are fused. Let $e^i \in E_{C^i}$, $\overline{e}^i \in E_{C^j}$ be such hyperedges; i.e.~$q^i = \{e^i,\overline{e}^i\} \in Q^\prime$. Consider the following sequence of fusion rule applications:
	\begin{multline*}
		n \cdot Z = C^0+C^1+C^2+C^3+\dotsc+C^N 
		\underset{\fr(\{q^1\})}{\Rightarrow}
		D^1+C^2+C^3+\dotsc+C^N 
		\underset{\fr(\{q^2\})}{\Rightarrow}
		\\
		D^2+C^3+\dotsc+C^N 
		\underset{\fr(\{q^3\})}{\Rightarrow}
		\dotsc
		\underset{\fr(\{q^{N-1}\})}{\Rightarrow}
		D^{N-1}+C^N 
		\underset{\fr(\{q^N\})}{\Rightarrow}
		D^N.		
	\end{multline*}
	This derivation can be viewed as follows: one starts with $C^0$ and joins it step-by-step with the hypergraphs $C^1,\dotsc,C^N$. The hypergraph $D^i$ is the intermediate hypergraph (let $D^0 = C^0$); at the $i$-th step, the fusion joins $D^i$ and $C^i$ resulting in $D^{i+1}$.
	
	This derivation can be translated into a derivation in $CFG(FG)$. Consider the following sequence of productions of $CFG(FG)$:
	\begin{equation}\label{eq_sequence_productions}
		\pi^\mu({C^0,e^0}), \pi({C^1,e^1}), \pi({C^2,e^2}),\dotsc, \pi({C^N,e^N}).
	\end{equation}
	Let one start with $S$ and apply these productions in the same order as they show in (\ref{eq_sequence_productions}). A rule $A \to \alpha$ is always applied to the leftmost occurrence of $A$. The resulting derivation is as follows:
	\begin{equation*}
		S\underset{\pi^\mu({C^0,e^0})}{\Rightarrow}
		w^0 
		\underset{\pi({C^1,e^1})}{\Rightarrow}
		w^1 
		\underset{\pi({C^2,e^2})}{\Rightarrow}
		\dotsc
		\underset{\pi({C^N,e^N})}{\Rightarrow}
		w^N.		
	\end{equation*}
	We claim that $\Psi(w^i) = \Psi(\lambda(D^i))$ for all $i=0,\dotsc,N$. The proof is by induction on $i$. The base case is true because $w^0 = \lambda(C^0) = \lambda(D^0)$. The induction step is proved by observing that the application of the production $\pi({C^{i+1},e^{i+1}})$ transforms $\Psi(\lambda(D^i))$ into $\Psi(\lambda(D^{i+1}))$. As a consequence, $\Psi(w^N) = \Psi(\lambda(D^N))$. The word $w^N$ belongs to $Lin(FG)$.
	
	We know that $D^N \underset{\fr(Q\setminus Q^\prime)}{\Rightarrow} Y$. The latter hypergraph does not contain fusion labels, which means that they all are fused by $\fr(Q \setminus Q^\prime)$. Therefore, for each $A \in F$, it holds that $\#_A(D^N) = \#_{\overline{A}}(D^N)$. This justifies that $\Psi(\lambda(D^N)) \in \Balance$ (along with the fact that $D^N$ must contain at least one $\mu$-labeled hyperedge, because so does $Y$). Concluding, $\psi(H) = \psi(Y) = \psi(D^N) = \mathit{pr}_t(\Psi(\lambda(D^N))) = \mathit{pr}_t(\Psi(w^N))$, and $\Psi(w^N)$ belongs to $\Psi(Lin(FG))\cap \Balance$. This completes the proof.
\end{proof}

\begin{lemma}\label{lemma_parikh_2}
	$\mathit{pr}_t\left(\Psi(Lin(FG))\cap \Balance\right) \subseteq \psi(L(FG))$.
\end{lemma}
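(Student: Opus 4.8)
The plan is to reverse the construction of Lemma~\ref{lemma_parikh_1}: a word $w\in Lin(FG)$ with $\Psi(w)\in\Balance$ records a \emph{partial} assembly of connected components of $Z$ by fusions, and the balance condition $\#_{A_k}(w)=\#_{\overline{A_k}}(w)$ is precisely what lets us \emph{complete} that assembly into a connected hypergraph whose image under $\rem_{\mathcal{M}\cup\mathcal{K}}$ lies in $L(FG)$ and has the right Parikh vector. Concretely, I would start with $v\in\mathit{pr}_t(\Psi(Lin(FG))\cap\Balance)$, write $v=\mathit{pr}_t(\Psi(w))$, and fix a derivation $S\Rightarrow^+ w$ in $CFG(FG)$; its productions, in order, are $\rho_0=\pi^\mu(C^0,e^0)$ (the first one must be of this kind, since these are the only productions with left-hand side $S$) followed by $\rho_i=\pi(C^i,e^i)$, $i=1,\dots,m$, where each $C^i\in\mathcal{C}(Z)$ (repetitions allowed). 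The heart of the setup is a bookkeeping correspondence that tags each symbol occurrence arising during this derivation with a hyperedge of the single disjoint union $Z'=\sum_{j=0}^m C^j$ of fresh copies: in $\lambda(C^0)$ each occurrence of a symbol $x$ is tagged by an $x$-labeled hyperedge of the copy $C^0$ (using $\#_x(\lambda(C))=\#_x(C)$ and fixing a bijection), and when $\rho_i$ rewrites an occurrence tagged by $h_i$, the replacement string $\lambda(C^i-(\emptyset,\{e^i\}))$ is tagged by the hyperedges of the copy $C^i$ other than $e^i$. From this I will extract three facts: the only hyperedges of $Z'$ that never tag anything are $e^1,\dots,e^m$; distinct occurrences carry distinct tags; and the occurrence rewritten by $\rho_i$ was created by an earlier production $\rho_{p(i)}$ with $p(i)<i$, so $h_i$ lies in the copy $C^{p(i)}$.

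Next I would build the fusion. Since the $C^j$ are connected components of $Z$, $Z'=n\cdot Z$ for the multiplicity $n(C)=\#\{j:C^j=C\}$, so $Z\Rightarrow Z'$. Put $P^{\ast}=\{\{e^i,h_i\}:1\le i\le m\}$ and check it is a valid parallelized fusion rule: $lab(e^i)$ and $lab(h_i)$ are complementary fusion labels, and the pairs are pairwise disjoint because the $e^i$ lie in distinct copies and are never tags, the $h_i$ tag distinct rewritten occurrences, and $e^i\ne h_{i'}$ since $h_{i'}$ is always a tag. The fusion-labeled hyperedges of $Z'$ not covered by $P^{\ast}$ are exactly the tags of the $F\cup\overline F$-symbol occurrences surviving in $w$; by $\Psi(w)\in\Balance$ there are, for each $A_k$, equally many surviving $A_k$- and $\overline{A_k}$-occurrences, so I fix any pairing $P''$ of the corresponding hyperedges and set $P=P^{\ast}\cup P''$, which now pairs up \emph{all} fusion-labeled hyperedges of $Z'$. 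Then $Z\Rightarrow Z'\underset{\fr(P)}{\Rightarrow}Y$, and $Y$ contains no $F\cup\overline F$-labels, contains at least one $\mu$-labeled hyperedge (namely $e^0$, which is never fused), and satisfies $\#_{a_k}(Y)=\#_{a_k}(Z')=\sum_{j=0}^m\#_{a_k}(C^j)=\#_{a_k}(w)$ for every $a_k\in T$, since deleting the fusion-labeled $e^j$ and applying $\lambda$ preserve counts of terminal-labeled hyperedges.

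It remains to show $Y$ is connected, which is where connection-preservation enters. Using Remark~\ref{rem_parallelization} I would sequentialize $\fr(P)$ so that it performs $\fr(\{e^1,h_1\}),\dots,\fr(\{e^m,h_m\})$ in this order and then the pairs of $P''$ one by one. By induction on $i$, after the first $i$ steps the hypergraph is $D^i+(\text{the untouched copies }C^{i+1},\dots,C^m)$ with $D^i$ connected: at step $i$ the pair $\{e^i,h_i\}$ fuses $e^i$, lying in the still-separate connected component $C^i$, with $h_i$, lying in $D^{i-1}$ (since $p(i)\le i-1$), i.e.\ a fusion merging two distinct connected components, and the definition of a connection-preserving grammar forces the merged piece $D^i$ to be connected; afterwards every $P''$-step fuses two hyperedges inside the single connected component $D^m$, so it again stays connected. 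Hence the final hypergraph $Y$ is connected, so $Y\in\mathcal{C}(Y)\cap\mathcal{H}(T\cup\mathcal{M}\cup\mathcal{K})$ with $\#_{\mathcal{M}}(Y)\ge 1$ and $Z\Rightarrow^{\ast}Y$, whence $\rem_{\mathcal{M}\cup\mathcal{K}}(Y)\in L(FG)$, and $\psi(\rem_{\mathcal{M}\cup\mathcal{K}}(Y))=(\#_{a_1}(Y),\dots,\#_{a_t}(Y))=(\#_{a_1}(w),\dots,\#_{a_t}(w))=v$. I expect the main obstacle to be the tag-bookkeeping of the first paragraph: getting the correspondence between symbol occurrences and hyperedges of $Z'$ precise enough that the disjointness of $P^{\ast}$ and the ``$p(i)<i$, $h_i\in C^{p(i)}$'' claim are fully justified; once that is in place, the fusion construction and the connection-preservation induction are routine.
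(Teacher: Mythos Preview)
Your proposal is correct and follows essentially the same approach as the paper: both reconstruct a derivation in $CFG(FG)$ as a sequence of fusions on the disjoint union of the corresponding connected components by tagging symbol occurrences with hyperedges (the paper formalizes this via ``superscribed words''), then use $\Psi(w)\in\Balance$ to pair up and fuse the remaining fusion-labeled hyperedges, and finally invoke connection-preservation to conclude that the result is connected. Your inductive argument for connectedness (via the parent map $p(i)<i$) is slightly more explicit than the paper's terse justification, but the two proofs are otherwise the same.
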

\begin{proof}
	The proof is conceptually similar to the previous one turned upside down. Let $w \in Lin(FG)$ and let $\Psi(w) \in \Balance$; we need to show that $\mathit{pr}_t(\Psi(w)) \in \psi(L(FG))$. Consider the derivation of $w$ in $CFG(FG)$. It must start with an application of a production of the form $\pi^\mu(C,e)$ and proceed with applications of productions of the form $\pi(C,e)$. Thus it is of the form
	\begin{equation}\label{eq_der_cfg_fg}
		S\underset{\pi^\mu({C^0,e^0})}{\Rightarrow}
		w^0 
		\underset{\pi({C^1,e^1})}{\Rightarrow}
		w^1 
		\underset{\pi({C^2,e^2})}{\Rightarrow}
		\dotsc
		\underset{\pi({C^N,e^N})}{\Rightarrow}
		w^N = w.		
	\end{equation}
	We want to translate this derivation into a sequence of fusion rule applications. Let us explain this formally (this might be tedious and hence unnecessary). Let us work with \emph{superscribed words}, i.e.~with words of the form $\overset{i_1}{a_1}\dotsc \overset{i_k}{a_k}$ where $a_1,\dotsc,a_k$ are symbols and $i_1,\dotsc,i_k$ are any objects called \emph{superscripts}. Let us use the following notation: if $\alpha = a_1\dotsc a_k$ and $\iota = i_1\dotsc i_k$, then $\overset{\iota}{\alpha} = \overset{i_1}{a_1}\dotsc \overset{i_k}{a_k}$.
	\\
	Let $\pi = (A \to \alpha)$ be a production (here $\alpha$ is an ordinary word); let $\iota$ be another word such that $\vert \alpha \vert = \vert \iota \vert$. Then, an \emph{application of $\pi$ superscribed by $\iota$} is the following rewriting procedure on superscribed words: $\eta \overset{j}{A} \theta \overset{\iota}{\underset{\pi}{\Rightarrow}} \eta \overset{\iota}{\alpha} \theta$. Informally speaking, $A$ is erased along with its superscript $j$, and the word $\alpha$ superscribed by $\iota$ is inserted.
	
	Now, let us return to (\ref{eq_der_cfg_fg}). Let $\widetilde{Z} = C^0 + \dotsc + C^N$. Clearly, $\widetilde{Z}$ is the multiplication of $Z$ since all $C^i$ belong to $\mathcal{C}(Z)$ according to Construction \ref{construction_parikh}. From now on, let us regard $C^i$ as a subhypergraph of $\widetilde{Z}$.
	
	First, given $C^i$ for some $i=1,\dotsc,N$, let $\varepsilon^i = e^i_1\dotsc e^i_{\vert E_{C^i}\vert-1}$ be any word such that $\left\{e^i_1,\dotsc, e^i_{\vert E_{C^i} \vert - 1}\right\} = E_{C^i} \setminus \{e^i\}$ and such that $lab_{\widetilde{Z}}\left(\varepsilon^i\right) = lab_{\widetilde{Z}}\left(e^i_1\right)\dotsc lab_{\widetilde{Z}}\left(e^i_{\vert E_{C^i}\vert-1}\right)$ coincides with the right-hand side of the production $\pi({C^i,e^i})$. Similarly, let $\varepsilon^0 = e^0_1\dotsc e^0_{\vert E_{C^0}\vert}$ be any word such that $\left\{e^0_1,\dotsc, e^0_{\vert E_{C^0} \vert}\right\} = E_{C^0}$ and such that $lab_{\widetilde{Z}}\left(\varepsilon^0\right)$ coincides with the right-hand side of the production $\pi^\mu({C^0,e^0})$. 
	
	With all these new notions, we change (\ref{eq_der_cfg_fg}) as follows: we superscribe $S$ by a blank symbol (its superscript is not important); we superscribe the application of $\pi^\mu({C^0,e^0})$ by $\varepsilon^0$; we superscribe the application of $\pi({C^i,e^i})$ by $\varepsilon^i$ for $i=1,\dotsc,N$. Then, (\ref{eq_der_cfg_fg}) turns into the following derivation:
	\begin{equation}\label{eq_der_cfg_fg_super}
		S\overset{\varepsilon^0}{\underset{\pi^\mu({C^0,e^0})}{\Rightarrow}}
		\tilde{w}^0 
		\overset{\varepsilon^1}{\underset{\pi({C^1,e^1})}{\Rightarrow}}
		\tilde{w}^1 
		\overset{\varepsilon^2}{\underset{\pi({C^2,e^2})}{\Rightarrow}}
		\dotsc
		\overset{\varepsilon^N}{\underset{\pi({C^N,e^N})}{\Rightarrow}}
		\tilde{w}^N.		
	\end{equation}
	{\color{comment}
		Informally, we superscribe each symbol by a hyperedge of a corresponding hypergraph.
	}
	
	Now, using this notation, we can say the following. In (\ref{eq_der_cfg_fg_super}), look at the rule application number $(i+1)$, which is that of $\pi({C^i,e^i})$ (for $i=1,\dotsc,N$); let the symbol, to which it is applied, be superscribed by $\overline{e}^i$ (note that each superscript is a hyperedge from $\widetilde{Z}$). Let $q^i = \{e^i,\overline{e}^i\}$. Finally, we construct a sequence of fusion rule applications in $FG$:
	\begin{equation}\label{eq_der_Parikh_2}
		\widetilde{Z} = C^0 + \dotsc + C^N
		\underset{\fr(\{q^1\})}{\Rightarrow}
		D^1 + C^2 + \dotsc + C^N
		\underset{\fr(\{q^2\})}{\Rightarrow}
		D^2 + C^3 + \dotsc + C^N
		\underset{\fr(\{q^3\})}{\Rightarrow}
		\dotsc
		\underset{\fr(\{q^N\})}{\Rightarrow}
		D^N.		
	\end{equation}
	Here, $D^{i}$ is the result of applying $\fr(\{q^i\})$ to $D^{i-1}+C^i$. Note that $D^N$ is connected because the grammar is connection-preservin, so no disconnection happens.
	
	We know that the Parikh image of $D^N$ is the same as that of $w^N = w$: $\Psi(\lambda(D^N)) = \Psi(w)$. Furthermore, the number of occurrences of $A$ and that of $\overline{A}$ in $w$ are the same for any fusion label $A$ (because $\Psi(w) \in \Balance$). Therefore, we can divide hyperedges of $D^N$ labeled by members of $F \cup \overline{F}$ into pairs of complementary ones. Let us fuse all hyperedges in these pairs. The resulting hypergraph $Y$ obtained from $D^N$ is connected as well since $FG$ is connection-preserving, and it does not contain fusion labels. Finally, since $\Psi(w) \in \Balance$, there must be at least one $\mu$-labeled hyperedge in $Y$. This means that $\rem_{\{\mu,\kappa\}}(Y) \in L(FG)$. It remains to observe that
	$$
	\mathit{pr}_t(\Psi(w))
	=\mathit{pr}_t(\Psi(\lambda(D^N)))
	=\psi(D^N)
	=\psi(Y)
	=\psi(\rem_{\{\mu,\kappa\}}(Y)) \in \psi(L(FG)),
	$$
	as we desired to prove.
\end{proof}

Parikh's theorem for connection-preserving fusion grammars follows from the above lemmas.
\begin{proof}[Proof of Theorem \ref{th_main_parikh}]
	Lemmas \ref{lemma_parikh_1} and \ref{lemma_parikh_2} imply that 
	$
	\psi(L(FG)) = \mathit{pr}_t\left(\Psi(Lin(FG))\cap \Balance\right)
	$. The set $\Psi(Lin(FG))$ is semilinear and so is $\Balance$. Furthermore, semilinear sets are closed under intersection \cite{LiuW70} and projection. Therefore, $\psi(L(FG)) = \mathit{pr}_t\left(\Psi(Lin(FG))\cap \Balance\right)$ is semilinear.
\end{proof}

Using the context-free grammar $CFG(FG)$ is necessary in order to ensure that all connected components $C^0,\ldots,C^N$ in the proof of Lemma \ref{lemma_parikh_2} are combined into a single connected hypergraph $D^N$.

\section{Discussion and Conclusion}\label{sec_discussion}

Fusion grammars are an attractive formalism because they are quite simple as well as hyperedge replacement grammars. Indeed, fusion rules are applied in a context-free manner, unconditionally. They can be freely interchanged, several rules can be combined into a single parallelised one and, conversely, a parallelised rule can be sequentialised. Fusion grammars have no mechanisms to control rule applications and thus no source of undecidability. This is justified by the decidability results of this article. 

On the other hand, fusion grammars are not as simple as hyperedge replacement grammars due to the folding feature. If fusion happens between two connected hypergraphs $C_1$ and $C_2$ (where $e_1 \in E_{C_1}$ and $e_2 \in E_{C_2}$ are fused), then one can view this as a hyperedge replacement rule, namely, the one where the hyperedge $e_1$ in $C_1$ is replaced by $C_2-\{e_2\}$ (the hyperedge $e_2$ points at the ``gluing points'' for the hyperedge replacement). So, fusion between two hypergraphs is essentially hyperedge replacement. However, folding, i.e.~fusion within a connected hypergraph, cannot be simulated by hyperedge replacement. Thanks to folding, one can generate the language of all graphs or the language of pseudotori by means of fusion grammars \cite{KreowskiKL17}, which cannot be generated by hyperedge replacement grammars. 

We know that languages generated by (bounded) fusion grammars are decidable but many interesting questions concerning the class of languages generated by fusion grammars still remain open. Does Parikh's theorem hold for fusion grammars (with disconnections)? Do fusion grammars generate languages ``presumably harder than NP'' (e.g.~a PSPACE-complete one)? Can the class of fusion grammars' languages be characterised as the class of languages definable in some logic?

Also, it is interesting to explore the difference between fusion grammars and connection-preserving fusion grammars. Do they generate the same class of languages? The property of being connection-preserving is of interest itself. It is mentioned in \cite{Lye21_thesis} briefly that ``in general, the preservation of connectedness is difficult to check.'' However, this statement is not elaborated on there. Is it decidable to check whether a fusion grammar is connection-preserving?

\section*{Acknowledgments}

I am grateful to Prof. Hans-J\"{o}rg Kreowski and to Aaron Lye for discussing this work with me. I also thank the reviewers for their valuable remarks and suggestions.

\bibliographystyle{plain}
\bibliography{Decidability_Expressive_Power_Fusion_Grammars}

\newpage
\appendix

\section{Proof of Proposition \ref{prop_nf_cpfg}}\label{appendix_proof_prop_nf_cpfg}


\begin{lemma}
	Let $H^1 \underset{\fr}{\Rightarrow} H^2$ be a fusion rule application in a connection-preserving fusion grammar. Let $H^2 = C_1^2 + \dotsc + C_{l}^2$. Then $H^1$ can be represented as the disjoint union $H^1 = C_1^1+\dotsc + C_{l}^1$ such that for each $i=1,\dotsc,l$ either $C_i^1 = C_i^2$ or $C_i^1 \underset{\fr}{\Rightarrow} C_i^2$.
\end{lemma}
\begin{proof}
	The fusion rule application of interest fuses two hyperedges, which we denote as $e,\overline{e}$. There are two possible cases. The first one is where both $e$ and $\overline{e}$ belong to the same connected component $C$ in $H^1$; hence $H^1 = X + C$. Then $X + C \underset{\fr}{\Rightarrow} X + D = H^2$ where $D$ is connected by the definition of connection-preserving fusion grammars. The second one is where $e$ and $\overline{e}$ belong to different connected components in $H^1$, namely, to $C^\prime$ and $C^{\prime\prime}$ resp. Hence $H^1 = X + C^\prime+C^{\prime\prime}$ and $X + C^\prime+C^{\prime\prime} \underset{\fr}{\Rightarrow} X + D = H^2$ where $D$ is connected, again by the definition of connection-preserving fusion grammars.
	
	In both cases, we have that $H^1 = X + \widetilde{C}$ for $\widetilde{C}$ consisting of either one or two connected components and $X+\widetilde{C} \underset{\fr}{\Rightarrow} X+D = H^2$ for $D$ being connected. The latter implies that $D$ is contained in some $C^2_i$ for some $i$; that is, $C^2_i = Y+D$. Note that, if $D$ had not been connected, it could have been the case that its parts belong to different connected components $C^2_i$ but not to any single one. Without loss of generality, let $i=1$: $C^2_1 = Y+D$. In such a case, the statement of the proposition holds for $C^1_1 = Y + \widetilde{C}$ and $C^1_i = C^2_i$ for $i>1$. 
\end{proof}

\begin{corollary}\label{corollary_nf_cpfg_main}
	Let $H^1 \underset{\fr}{\Rightarrow}^\ast H^2$ be a sequence of fusion rule applications in a connection-preserving fusion grammar. Let $H^2 = C_1^2 + \dotsc + C_{l}^2$. Then $H^1$ can be represented as the disjoint union $H^1 = C_1^1+\dotsc + C_{l}^1$ such that for each $i=1,\dotsc,l$ it holds that $C_i^1 \underset{\fr}{\Rightarrow}^\ast C_i^2$.
\end{corollary}
Now, we can prove Proposition \ref{prop_nf_cpfg}.

\begin{proof}[Proof of Proposition \ref{prop_nf_cpfg}]
	The ``if'' direction is obvious; let us prove the ``only if'' one. Let $H \in L(FG)$. Then $H = \rem_{\mathcal{M} \cup \mathcal{K}}(Y)$ for some connected hypergraph $Y \in \mathcal{H}(T \cup \mathcal{M} \cup \mathcal{K})$ with at least one $\mathcal{M}$-labeled hyperedge such that $Z \Rightarrow^\ast Y + X$ for some $X$. Proposition \ref{prop_nf} entails that there exists a most parallelised derivation of the form $Z \Rightarrow m \cdot Z \underset{\fr}{\Rightarrow}^\ast Y + X + G$ for some $G$. What we want is to transform this derivation in such a way that the resulting hypergraph is $Y$ without the additional summands $X + G$.
	
	Let us apply Corollary \ref{corollary_nf_cpfg_main} to the fusion rule applications $m \cdot Z \underset{\fr}{\Rightarrow}^\ast Y + X + G$ with $C^2_1 = Y$, $C^2_2 = X+G$. As a consequence, $m \cdot Z = C^1_1+C^1_2$ where $C^1_1 \underset{\fr}{\Rightarrow}^\ast Y$. The subhypergraph $C^1_i$ ($i=1,2$) of $m \cdot Z$ must be the disjoint union of some connected components of $m \cdot Z$; therefore, $C^1_i = m_i \cdot Z$ such that $m_1(C)+m_2(C)=m(C)$ for all $C \in \mathcal{C}(Z)$. Using $n = m_1$ we construct the following derivation:
	$$
	Z \Rightarrow n \cdot Z \underset{\fr}{\Rightarrow}^\ast Y 
	$$
	It remains to replace the sequence of fusion rules by the parallelised fusion rule $\fr(Q)$ application (see Remark \ref{rem_parallelization}).
\end{proof}

\end{document}